\RequirePackage[l2tabu, orthodox]{nag}

\PassOptionsToPackage{hidelinks, breaklinks}{hyperref}

\documentclass[submission,copyright,creativecommons,noncommercial,sharealike]{eptcs}
\providecommand{\event}{AiML 2026}

\usepackage{xparse}

\usepackage{mleftright}
\usepackage{mathtools}
\mathtoolsset{mathic}
\usepackage{amsthm}

    \usepackage[T1]{fontenc}
\DeclareFontShape{\encodingdefault}{ptm}{m}{scit}{<->ssub * ntxosf/m/scit}{}
\usepackage{amssymb}
    \let\symbb\mathbb
    \let\symcal\mathcal
    
    \let\symbf\mathbf
    \let\symfrak\mathfrak
\DeclareSymbolFont{stmry}{U}{stmry}{m}{n}
    \DeclareMathDelimiter\lBrack{\mathopen}{stmry}{"4A}{stmry}{"71} \DeclareMathDelimiter\rBrack{\mathclose}{stmry}{"4B}{stmry}{"79} \DeclareSymbolFont{wasy}{U}{wasy}{m}{n}
    \DeclareMathSymbol\mdwhtsquare{\mathord}{wasy}{"32} \newcommand{\mdlgwhtsquare}{\mdwhtsquare} \DeclareMathSymbol\mdwhtdiamondold{\mathord}{wasy}{"33}
    \newcommand{\mdwhtdiamond}{\mathchoice{\raisebox{-.09em}{$\displaystyle\mdwhtdiamondold$}}
             {\raisebox{-.09em}{$\mdwhtdiamondold$}}
             {\raisebox{-0.05em}{$\scriptstyle\mdwhtdiamondold$}}
             {\raisebox{-0.02em}{$\scriptscriptstyle\mdwhtdiamondold$}}} \renewcommand{\boxdot}{\mathord{\mdwhtsquare\kern-1.14ex\cdot\kern.55ex}} 

    \newcommand{\mathslash}{/} 

\newcommand{\smblkcircle}{\bullet} 

\usepackage{microtype}

\usepackage[british]{babel}
\usepackage[autostyle]{csquotes}

\usepackage{xcolor}

\usepackage{semantex}
\usepackage{enumitem}

\usepackage{underscore}
\usepackage{fnpct}

\usepackage{zref}
\usepackage{zref-clever}

\NewDocumentCommand{\zsubref}{m o m}{\IfValueTF{#2}{\zcref{#1}[#2]}{\zcref{#1}}~\zref{#3}}
\NewDocumentCommand{\citepage}{m}{p.~#1}

    \usepackage{scalerel} 

\newcommand{\conda}{(C1)}
\newcommand{\condb}{(C2)}
\newcommand{\fmp}{\textsc{fmp}} \newcommand{\tdefiff}{if} 

\setlist[description]{labelindent=0.5\leftmargin}

\newcommand*{\todo}[1]{\relax\ifmmode\textsf{todo: #1}\else\textsf{Todo: #1.}\fi}

\zcRefTypeSetup{equation}{abbrev}
\zcLanguageSetup{english}{
  cap = true ,
  type = page ,
    cap = false ,
  type = equation ,
    cap = false ,
}
\zcLanguageSetup{dutch}{
  cap = true ,
  type = page ,
    cap = false ,
  type = equation ,
    cap = false ,
}
\zcLanguageSetup{english}{
  type = commentary ,
    Name-sg = Commentary ,
    name-sg = commentary ,
    Name-pl = Commentaries ,
    name-pl = commentaries ,
  type = claim ,
    Name-sg = Claim ,
    name-sg = claim ,
    Name-pl = Claims ,
    name-pl = claims ,
  type = axiom ,
    Name-sg = Axiom ,
    name-sg = axiom ,
    Name-pl = Axiomata ,
    name-pl = axiomata ,
  type = notation ,
    Name-sg = Notation ,
    name-sg = notation ,
    Name-pl = Notations ,
    name-pl = notations ,
  type = convention ,
    Name-sg = Convention ,
    name-sg = convention ,
    Name-pl = Conventions ,
    name-pl = conventions ,
  type = terminology,
    Name-sg = Terminology ,
    name-sg = terminology ,
    Name-pl = Terminologies ,
    name-pl = terminologies ,
  type = problem ,
    Name-sg = Problem ,
    name-sg = problem ,
    Name-pl = Problems ,
    name-sg = problems ,
  type = construction ,
    Name-sg = Construction ,
    name-sg = construction ,
    Name-pl = Constructions ,
    name-pl = constructions ,
  type = lemma ,
Name-pl = Lemmata ,
    name-pl = lemmata ,
}

\newtheorem{maintheorem}{Theorem}

\zcsetup{countertype={maintheorem=theorem}}
\newtheorem{maintheoremagain}{Theorem}

\zcsetup{countertype={maintheoremagain=theorem}}

\newtheorem{theorem}{Theorem}[section]
\newtheorem{proposition}[theorem]{Proposition}
\newtheorem{lemma}[theorem]{Lemma}
\newtheorem{corollary}[theorem]{Corollary}

\theoremstyle{definition}
\newtheorem{definition}[theorem]{Definition}

\newtheorem{construction}[theorem]{Construction}

\theoremstyle{remark}

\AddToHook{env/definition/begin}{\zcsetup{countertype={theorem=definition}}}
\AddToHook{env/construction/begin}{\zcsetup{countertype={theorem=construction}}}
\AddToHook{env/example/begin}{\zcsetup{countertype={theorem=example}}}
\AddToHook{env/remark/begin}{\zcsetup{countertype={theorem=remark}}}
\AddToHook{env/lemma/begin}{\zcsetup{countertype={theorem=lemma}}}
\AddToHook{env/proposition/begin}{\zcsetup{countertype={theorem=proposition}}}
\AddToHook{env/corollary/begin}{\zcsetup{countertype={theorem=corollary}}}
\AddToHook{env/notation/begin}{\zcsetup{countertype={theorem=notation}}}

\NewTranslationFallback{Case}{Case}
\NewTranslation{dutch}{Case}{Geval}
\NewTranslation{english}{Case}{Case}
\NewTranslation{german}{Case}{Fall}

\newlist{proofcases}{enumerate}{2}
\setlist[proofcases,1]{wide=0pt, label={\protect\GetTranslation{Case}~\arabic*.}, font=\emph}
\setlist[proofcases,2]{wide=0.5em, leftmargin=0.5em, label={\protect\GetTranslation{Case}~\arabic{proofcasesi}\alph{proofcasesii}.}, font=\emph}

\zcsetup{
    countertype = {
        proofcasesi = case ,
        proofcasesii = case ,
    } ,
    counterresetby = {
        proofcasesii = proofcasesi ,
    }
}

\newlist{thmenumerate}{enumerate}{2}
\setlist[thmenumerate,1]{label=\normalfont(\roman*)}
\setlist[thmenumerate,2]{label=\normalfont(\alph*)}

\zcsetup{
    countertype = {
        thmenumeratei = item ,
        thmenumerateii = item ,
    } ,
    counterresetby = {
        thmenumerateii = thmenumeratei ,
    }
}

\NewDocumentEnvironment{subproof}{o}{\IfValueTF{#1}{\begin{proof}[#1]}{\begin{proof}}}{\end{proof}}

\NewDocumentEnvironment{appendixproof}{m}{\begin{proof}[\normalfont\bfseries Proof of \zcref{#1}]
}{\end{proof}}

\let\lleft\mleft
\let\rright\mright
\let\mmiddle\middle

\NewDocumentCommand{\miff}{}{\quad \text{iff} \quad}

\NewDocumentCommand\limplies{}{\DOTSB\rightarrow}

\NewDocumentCommand\ldiamond{}{\mdwhtdiamond}
\NewDocumentCommand\lbox{}{\mdwhtsquare}
\NewDocumentCommand\lcircle{}{\mdwhtcircle}

\NewDocumentCommand\suchthat{}{\;\,}
\NewDocumentCommand\suchthatshort{}{\;}

\ExplSyntaxOn
\keys_define:nn { my/math/quantifier }
    {
        quantifier .tl_set:N = \l__my_math_quantifier_quant_tl,
        quantifier .value_required:n = true,
        exists .meta:n = {quantifier = {\exists}},
        forall .meta:n = {quantifier = {\forall}},
        doparen .bool_set:N = \l__my_math_quantifier_doparen_bool,
        doparen .default:n = true,
        nextquant .bool_set:N = \l__my_math_quantifier_nextquant_bool,
        nextquant .default:n = true,
    }

\keys_precompile:nnN { my/math/quantifier } { quantifier = {}, doparen = false, nextquant = false }
\l__my_math_quantifier_reset_defaults_tl

\NewDocumentCommand{\MakeQuantifier}{m m m}
    {
        \group_begin:

        \tl_use:N \l__my_math_quantifier_reset_defaults_tl

        \keys_set:nn { my/math/quantifier } { #1 }

        \bool_if:NTF \l__my_math_quantifier_nextquant_bool
        { \tl_use:N \l__my_math_quantifier_quant_tl #2 \suchthatshort #3 }
        {
            \bool_if:NTF \l__my_math_quantifier_doparen_bool
            { \tl_use:N \l__my_math_quantifier_quant_tl #2 \suchthatshort\left\lbrack #3 \right\rbrack }
            { \tl_use:N \l__my_math_quantifier_quant_tl #2 . \suchthat #3 }
        }

        \group_end:
    }

\NewDocumentCommand{\ForAll}{o m m}
    {
        \IfValueTF { #1 }
        { \MakeQuantifier{ #1, forall }{ #2 }{ #3 } }
        { \MakeQuantifier{ forall }{ #2 }{ #3 } }
    }

\NewDocumentCommand{\Exists}{o m m}
    {
        \IfValueTF { #1 }
        { \MakeQuantifier{ #1, exists }{ #2 }{ #3 } }
        { \MakeQuantifier{ exists }{ #2 }{ #3 } }
    }
\ExplSyntaxOff

\NewDocumentCommand{\Set}{m o}{\IfValueTF{#2}{\lleft\{#1\;\mmiddle|\;#2\rright\}}{\lleft\{#1\rright\}}}

\NewDocumentCommand{\Function}{m m m}{#1\colon #2 \to #3}

\NewVariableClass\mClassDelim

\NewObject\mClassDelim\paren[left par=\lparen, right par=\rparen]

\NewObject\mClassDelim\tuple[left par=\langle, right par=\rangle]
\NewObject\mClassDelim\homotuple[left par=\langle, right par=\rangle]
\NewObject\mClassDelim\hetrotuple[left par=\langle, right par=\rangle]
\NewObject\mClassDelim\structuple[left par=\langle\nobreak, right par=\rangle]

\NewObject\mClassDelim\abs[left par=\lvert, right par=\rvert]
\NewObject\mClassDelim\norm[left par=\lVert , right par=\rVert]
\NewObject\mClassDelim\card[left par=\lvert, right par=\rvert]
\NewObject\mClassDelim\sem[left par=\lBrack, right par=\rBrack]

\NewVariableClass\mClassMath

\NewDocumentCommand\mathstylealg{m}{\symbf{#1}}
\NewDocumentCommand\mathstylea{m}{#1}
\NewDocumentCommand\mathstylecd{m}{#1}
\NewDocumentCommand\mathstylef{m}{#1}
\NewDocumentCommand\mathstylefr{m}{\symfrak{#1}}
\NewDocumentCommand\mathstylefrm{m}{\symfrak{#1}}
\NewDocumentCommand\mathstylefrl{m}{\symfrak{#1}}
\NewDocumentCommand\mathstylelog{m}{#1}
\NewDocumentCommand\mathstylel{m}{#1}
\NewDocumentCommand\mathstylelth{m}{#1}
\NewDocumentCommand\mathstylen{m}{#1}
\NewDocumentCommand\mathstyleod{m}{#1}
\NewDocumentCommand\mathstylereal{m}{#1}
\NewDocumentCommand\mathstylerel{m}{#1}
\NewDocumentCommand\mathstyles{m}{#1}
\NewDocumentCommand\mathstyletop{m}{\symfrak{#1}}

\NewObject\mClassMath\algA{\mathstylealg{A}}
\NewObject\mClassMath\algB{\mathstylealg{B}}
\NewObject\mClassMath\algC{\mathstylealg{C}}
\NewObject\mClassMath\algFree{\mathstylealg{Fr}}

\DeclareObject\mClassMath\aa{\mathstylea{a}} \NewObject\mClassMath\ab{\mathstylea{b}}
\NewObject\mClassMath\ac{\mathstylea{c}}

\DeclareObject\mClassMath\cdkappa{\mathstylecd{\kappa}}
\DeclareObject\mClassMath\cdlambda{\mathstylecd{\lambda}}

\NewObject\mClassMath\fe{\mathstylef{e}}
\NewObject\mClassMath\ff{\mathstylef{f}}
\NewObject\mClassMath\fg{\mathstylef{g}}
\NewObject\mClassMath\fh{\mathstylef{h}}
\NewObject\mClassMath\fk{\mathstylef{k}}
\NewObject\mClassMath\fl{\mathstylef{l}}
\NewObject\mClassMath\fp{\mathstylef{p}}
\NewObject\mClassMath\fq{\mathstylef{q}}
\NewObject\mClassMath\fr{\mathstylef{r}}
\NewObject\mClassMath\fs{\mathstylef{s}}
\NewObject\mClassMath\ft{\mathstylef{t}}
\NewObject\mClassMath\fu{\mathstylef{u}}
\NewObject\mClassMath\fiota{\mathstylef{\iota}}
\NewObject\mClassMath\fsigma{\mathstylef{\sigma}}
\NewObject\mClassMath\fSeqShift{\mathstylef{\sigma}}
\SemantexRecordObject{\fComp}
\NewDocumentCommand{\fComp}{m}{
    \SemantexRecordSource { \fComp { #1 } }
    \paren { #1 }
}
\SemantexRecordObject{\fDom}
\NewDocumentCommand{\fDom}{m}{
    \SemantexRecordSource { \fDom { #1 } }
    \operatorname{dom}\paren{ #1 }
}
\SemantexRecordObject{\fIm}
\NewDocumentCommand{\fIm}{m}{
    \SemantexRecordSource { \fIm { #1 } }
    \operatorname{Im}\paren{ #1 }
}
\SemantexRecordObject{\fSupp}
\NewDocumentCommand{\fSupp}{m}{
    \SemantexRecordSource { \fSupp { #1 } }
    \operatorname{supp}\paren{ #1 }
}

\NewObject\mClassMath\fVal{\symfrak{V}}
\NewObject\mClassMath\fSpan{\mathrm{sp}}
\NewObject\mClassMath\fTy{\mathrm{ty}}
\NewObject\mClassMath\fClust{\mathrm{clu}}

\NewObject\mClassMath\frF{\mathstylefr{F}}
\NewObject\mClassMath\frG{\mathstylefr{G}}
\NewObject\mClassMath\frH{\mathstylefr{H}}
\NewObject\mClassMath\frT{\mathstylefr{T}}

\NewObject\mClassMath\frlQ{\mathstylefrl{Q}}
\NewObject\mClassMath\frlS{\mathstylefrl{S}}
\NewObject\mClassMath\frlT{\mathstylefrl{T}}

\NewObject\mClassMath\frmM{\mathstylefrm{M}}
\NewObject\mClassMath\frmN{\mathstylefrm{N}}

\NewObject\mClassMath\lBox{\lbox}
\NewObject\mClassMath\lDiamond{\ldiamond}
\NewObject\mClassMath\lNext{\lcircle}

\NewObject\mClassMath\logLambda{\mathstylelog{\Lambda}}
\NewObject\mClassMath\logForm{\mathrm{Fm}}
\NewObject\mClassMath\logK{\mathbf{K}}
\NewObject\mClassMath\logKWeakTrans{\mathbf{wK4}}
\NewObject\mClassMath\logKWeakTransLin{\mathbf{wK4.3}}
\NewObject\mClassMath\logKTrans{\mathbf{K4}}
\NewObject\mClassMath\logKTransLin{\mathbf{K4.3}}
\SemantexRecordObject{\logKTransBoundedWidth}
\DeclareDocumentCommand{\logKTransBoundedWidth}{m}{\SemantexRecordSource {\logKTransBoundedWidth {#1}}\UseClassInCommand\mClassMath{\mathbf{K4BW}_{#1}}}
\NewObject\mClassMath\logKRefl{\mathbf{KT}}
\NewObject\mClassMath\logKReflTrans{\mathbf{S4}}
\NewObject\mClassMath\logKReflTransLin{\mathbf{S4.3}}
\NewObject\mClassMath\logKReflTransLinFinal{\mathbf{S4.3.1}}
\NewObject\mClassMath\logKReflTransConfl{\mathbf{S4.2}}
\NewObject\mClassMath\logKReflTransConflFinal{\mathbf{S4.2.1}}
\NewObject\mClassMath\logKReflTransFinal{\mathbf{S4.1}}
\SemantexRecordObject{\logKReflTransBoundedWidth}
\DeclareDocumentCommand{\logKReflTransBoundedWidth}{m}{\SemantexRecordSource {\logKReflTransBoundedWidth {#1}}\UseClassInCommand\mClassMath{\mathbf{S4BW}_{#1}}}
\NewObject\mClassMath\logGrz{\mathbf{Grz}}
\NewObject\mClassMath\logGrzConfl{\mathbf{Grz.2}}
\NewObject\mClassMath\logGrzLin{\mathbf{Grz.3}}
\SemantexRecordObject{\logGrzBoundedWidth}
\DeclareDocumentCommand{\logGrzBoundedWidth}{m}{\SemantexRecordSource {\logGrzBoundedWidth {#1}}\UseClassInCommand\mClassMath{\mathbf{GrzBW}_{#1}}}
\NewObject\mClassMath\logGl{\mathbf{GL}}
\NewObject\mClassMath\logGlLin{\mathbf{GL.3}}
\SemantexRecordObject{\logGlBoundedWidth}
\DeclareDocumentCommand{\logGlBoundedWidth}{m}{\SemantexRecordSource {\logGlBoundedWidth {#1}}\UseClassInCommand\mClassMath{\mathbf{GLBW}_{#1}}}

\NewObject\mClassMath\lphi{\mathstylel{\varphi}}
\NewObject\mClassMath\lpsi{\mathstylel{\psi}}
\NewObject\mClassMath\lchi{\mathstylel{\chi}}

\NewObject\mClassMath\lap{\mathstylel{p}}
\NewObject\mClassMath\laq{\mathstylel{q}}

\NewObject\mClassMath\lthS{\mathstylelth{S}}
\NewObject\mClassMath\lthT{\mathstylelth{T}}
\NewObject\mClassMath\lthGamma{\mathstylelth{\Gamma}}
\NewObject\mClassMath\lthDelta{\mathstylelth{\Delta}}
\NewObject\mClassMath\lthSigma{\mathstylelth{\Sigma}}
\NewObject\mClassMath\lthSubform{\mathrm{SubF}}\NewObject\mClassMath\lthProp{\symbb{P}}

\NewObject\mClassMath\nii{\mathstylen{i}}
\NewObject\mClassMath\nj{\mathstylen{j}}
\NewObject\mClassMath\nk{\mathstylen{k}}
\NewObject\mClassMath\nl{\mathstylen{l}}
\NewObject\mClassMath\nm{\mathstylen{m}}
\NewObject\mClassMath\nn{\mathstylen{n}}
\NewObject\mClassMath\nN{\mathstylen{N}}
\NewObject\mClassMath\np{\mathstylen{p}}

\NewObject\mClassMath\odalpha{\mathstyleod{\alpha}}
\NewObject\mClassMath\odbeta{\mathstyleod{\beta}}
\NewObject\mClassMath\odgamma{\mathstyleod{\gamma}}
\NewObject\mClassMath\oddelta{\mathstyleod{\delta}}
\NewObject\mClassMath\odlambda{\mathstyleod{\lambda}}

\NewObject\mClassMath\realr{\mathstylereal{r}}

\NewObject\mClassMath\relD{\mathstylerel{D}}
\NewObject\mClassMath\relE{\mathstylerel{E}}
\NewObject\mClassMath\relF{\mathstylerel{F}}
\NewObject\mClassMath\relG{\mathstylerel{G}}
\NewObject\mClassMath\relR{\mathstylerel{R}}
\NewObject\mClassMath\relS{\mathstylerel{S}}
\NewObject\mClassMath\relZ{\mathstylerel{Z}}
\NewObject\mClassMath\relEquiv{\equiv}
\NewObject\mClassMath\relPreEquiv{\sqsubseteq}
\SemantexRecordObject{\relComp}
\NewDocumentCommand{\relComp}{m}{
    \SemantexRecordSource { \relComp { #1 } }
    \paren { #1 }
}
\NewObject\mClassMath\relBisim{\rightleftharpoons}
\NewObject\mClassMath\relDiag{\Delta}

\NewObject\mClassMath\sEmpty{\emptyset}
\NewObject\mClassMath\sA{\mathstyles{A}}
\NewObject\mClassMath\sB{\mathstyles{B}}
\NewObject\mClassMath\sC{\mathstyles{C}}
\NewObject\mClassMath\sD{\mathstyles{D}}
\NewObject\mClassMath\sE{\mathstyles{E}}
\NewObject\mClassMath\sF{\mathstyles{F}}
\NewObject\mClassMath\sG{\mathstyles{G}}
\NewObject\mClassMath\sH{\mathstyles{H}}
\NewObject\mClassMath\sI{\mathstyles{I}}
\NewObject\mClassMath\sJ{\mathstyles{J}}
\NewObject\mClassMath\sP{\mathstyles{P}}
\NewObject\mClassMath\sS{\mathstyles{S}}
\NewObject\mClassMath\sT{\mathstyles{T}}
\NewObject\mClassMath\sU{\mathstyles{U}}
\NewObject\mClassMath\sV{\mathstyles{V}}
\NewObject\mClassMath\sW{\mathstyles{W}}
\NewObject\mClassMath\sX{\mathstyles{X}}
\NewObject\mClassMath\sY{\mathstyles{Y}}
\NewObject\mClassMath\sZ{\mathstyles{Z}}
\NewObject\mClassMath\sTwo{\symbb{2}}
\NewObject\mClassMath\sTypes{\mathrm{Ty}}
\NewObject\mClassMath\sTopClust{\mathrm{TopClust}}
\NewObject\mClassMath\sStab{\mathrm{Stab}}

\NewObject\mClassMath\sPowerSet{\mathcal{P}}

\NewObject\mClassMath\topF{\mathstyletop{F}}
\NewObject\mClassMath\topG{\mathstyletop{G}}
\NewObject\mClassMath\topX{\mathstyletop{X}}
\NewObject\mClassMath\topY{\mathstyletop{Y}}
\NewObject\mClassMath\topZ{\mathstyletop{Z}}

\NewObject\mClassMath\vi{i}
\NewObject\mClassMath\vr{r}
\NewObject\mClassMath\vs{s}
\NewObject\mClassMath\vt{t}
\NewObject\mClassMath\vu{u}
\NewObject\mClassMath\vv{v}
\NewObject\mClassMath\vw{w}
\NewObject\mClassMath\vx{x}
\NewObject\mClassMath\vy{y}
\NewObject\mClassMath\vz{z}

\NewObject\mClassMath\Naturals{\symbb{N}}
\NewObject\mClassMath\Integers{\symbb{Z}}
\NewObject\mClassMath\Rationals{\symbb{Q}}
\NewObject\mClassMath\Reals{\symbb{R}}
\NewObject\mClassMath\ExtReals{\overline{\symbb{R}}}

\NewSymbolClass\mClassOperators

\NewObject\mClassOperators\fBinProd{\times}

\NewObject\mClassOperators\sMinus{\setminus}
\NewObject\mClassOperators\sCartProd{\times}
\NewObject\mClassOperators\sUnion{\cup}
\NewObject\mClassOperators\sDisjointUnion{\amalg}
\NewObject\mClassOperators\sIntersection{\cap}
\NewObject\mClassOperators\sSetUnion{\bigcup\:}
\NewObject\mClassOperators\sSetDisjointUnion{\coprod\:}
\NewObject\mClassOperators\sSetIntersection{\bigcap\:}
\NewObject\mClassOperators\sLimitUnion{\bigcup}
\NewObject\mClassOperators\sLimitDisjointUnion{\coprod}
\NewObject\mClassOperators\sLimitIntersection{\bigcap}
\NewObject\mClassOperators\sSubsetEq{\subseteq}
\NewObject\mClassOperators\sSubsetStrict{\subsetneq}
\NewObject\mClassOperators\sSupersetEq{\supseteq}
\NewObject\mClassOperators\sSupersetStrict{\supsetneq}
\NewObject\mClassOperators\sEquinumerous{\sim}

\NewObject\mClassOperators\Equal{=}
\NewObject\mClassOperators\LessStrict{<}
\NewObject\mClassOperators\LessEq{\leq}
\NewObject\mClassOperators\GreaterStrict{>}
\NewObject\mClassOperators\GreaterEq{\geq}

\NewObject\mClassOperators\join{\vee}
\NewObject\mClassOperators\meet{\wedge}
\NewObject\mClassOperators\JOIN{\bigvee}
\NewObject\mClassOperators\MEET{\bigwedge}

\NewObject\mClassOperators\lProves{\vdash}
\NewObject\mClassOperators\lNotProves{\nvdash}
\NewObject\mClassOperators\lValidates{\vDash}
\NewObject\mClassOperators\lNotValidates{\nvDash}
\NewObject\mClassOperators\Isomorphic{\cong}
\NewObject\mClassOperators\NotIsomorfic{\ncong}
\NewObject\mClassOperators\ElementaryEquiv{\equiv}

\NewObject\mClassOperators\Sum{\sum}

\NewObject\mClassOperators\vequiv{\sim}

\SetupClass\mClassDelim{
    output=\mClassMath,
    define keys={
        {prime}{output options={prime}},
        {'}{output options={'}},
        {''}{output options={''}},
        {'''}{output options={'''}},
        {fn}{set arg single keys={\cdot}},
    },
    define keys[1]={
        {default}{output options={default={#1}}},
        {lower}{output options={lower={#1}}},
    },
}

\SetupClass\mClassMath{
    output=\mClassMath,
    define keys={
        {vec}{command=\vec},
        {tilde}{command=\tilde},
        {widetilde}{command=\widetilde},
        {hat}{command=\hat},
        {star}{upper=\ast},
        {overline}{command=\overline},
        {box}{lower={\lBox}},
        {converse}{upper=\mathrm{op}},
        {inv}{upper={-1}},
        {preimage}{upper={-1}},
        {gen}{overline},
        {complex alg}{upper=\ast},
        {negative}{upper={\leq 0}},
        {Delta}{upper={\Delta}},
        {refl closure}{overline},
        {trans closure}{upper=+},
        {refl trans closure}{star},
        {irrefl closure}{upper=\circ},
{domain}{return, command=\fDom},
        {dom}{domain},
{image}{return, command=\fIm},
{card}{return, command=\card},
{F}{lower={\mathrm{f}}},
{P}{lower={\mathrm{p}}},
{M}{lower={\mathrm{m}}},
    },
    define keys[1]={
        {pow}{upper=#1},
        {restrict}{right return,symbol put right={\mathopen{}\upharpoonright\mathclose{} {#1}}},
        {quotient}{right return,symbol put right={\mathopen{}\mathslash\mathclose{} {#1}}},
    },
}

\SetupClass\mClassOperators{
    define keys={
        {overline}{command=\overline},
    },
    define keys[1]={
        {default}{output options={default={#1}}},
        {lower}{output options={lower={#1}}},
        {upper}{output options={upper={#1}}},
    },
}

\ExplSyntaxOn

\NewDocumentCommand{\ElimCrit}{o} {
    \IfValueTF { #1 }
    {
        \group_begin:
\keys_set:nn { mymath/elimcrit } { #1 }
\symcal{E}
\c_math_subscript_token { \l__mymath_elimcrit_predefined_cs:n { \tl_use:N \l__mymath_elimcrit_theory_tl } }
\tl_if_empty:NTF \l__mymath_elimcrit_upset_tl {} {
            \paren{
\tl_if_empty:NTF \l__mymath_elimcrit_model_tl {} {
                \tl_use:N \l__mymath_elimcrit_model_tl ,
            }
\tl_use:N \l__mymath_elimcrit_upset_tl
            }
        }
        \group_end:
    }
    { \symcal{E} }
}

\keys_define:nn { mymath/elimcrit }
    {
        model .tl_set:N = \l__mymath_elimcrit_model_tl,
        model .value_required:n = true,
        upset .tl_set:N = \l__mymath_elimcrit_upset_tl,
        upset .value_required:n = true,
        theory .tl_set:N = \l__mymath_elimcrit_theory_tl,
        theory .value_required:n = true,

        predefined .choice:,
        predefined / none .code:n = \cs_set_protected_nopar:Nn \l__mymath_elimcrit_predefined_cs:n {##1},
        predefined / weak .code:n = \cs_set_protected_nopar:Nn \l__mymath_elimcrit_predefined_cs:n {##1\textrm{-w}},
        predefined / strong .code:n = \cs_set_protected_nopar:Nn \l__mymath_elimcrit_predefined_cs:n {##1\textrm{-s}},
        predefined .initial:n = none,

        weak .meta:n = {predefined=weak},
        weak .value_forbidden:n = true,
        strong .meta:n = {predefined=strong},
        strong .value_forbidden:n = true,

        m .meta:n = {model=#1},
        u .meta:n = {upset=#1},
        t .meta:n = {theory=#1},
    }

\ExplSyntaxOff

\RequirePackage{tikz-cd}
\usetikzlibrary{calc}
\usetikzlibrary{decorations.pathmorphing}
\usetikzlibrary{spath3}

\tikzset{curve/.style={settings={#1},to path={(\tikztostart)
    .. controls ($(\tikztostart)!\pv{pos}!(\tikztotarget)!\pv{height}!270:(\tikztotarget)$)
    and ($(\tikztostart)!1-\pv{pos}!(\tikztotarget)!\pv{height}!270:(\tikztotarget)$)
    .. (\tikztotarget)\tikztonodes}},
    settings/.code={\tikzset{quiver/.cd,#1}
        \def\pv##1{\pgfkeysvalueof{/tikz/quiver/##1}}},
    quiver/.cd,pos/.initial=0.35,height/.initial=0}

\tikzset{between/.style n args={2}{/tikz/spath/at end path construction={
    \tikzset{spath/split at keep middle={current}{#1}{#2}}
}}}

\tikzset{tail reversed/.code={\pgfsetarrowsstart{tikzcd to}}}
\tikzset{2tail/.code={\pgfsetarrowsstart{Implies[reversed]}}}
\tikzset{2tail reversed/.code={\pgfsetarrowsstart{Implies}}}
\tikzset{no body/.style={/tikz/dash pattern=on 0 off 1mm}}

 \usetikzlibrary{arrows.meta}
\tikzset{ast/.tip={Glyph[glyph math command=astsmall, glyph length=0.9ex,glyph axis=2.05pt]}}

\SetupObject\relEquiv{
    define keys={
        {s}{lower=\sigma},
        {d}{lower=\delta},
        {heavy lower slot}{lower=\smblkcircle},
    },
    define keys[1]={
        {arg}{return, command={\mathord}, set arg single keys={#1}},
        {ty}{lower={#1\textrm{-ty}}},
        {span}{lower={#1\textrm{-sp}}},
    },
}

\NewObject\mClassDelim\sEquivClass[left par=\lbrack, right par=\rbrack]
\SetupObject\sEquivClass{
    define keys={
        {s}{lower=\sigma},
        {d}{lower=\delta},
        {heavy lower slot}{lower=\smblkcircle},
    },
    define keys[1]={
        {ty}{lower={#1\textrm{-ty}}},
        {span}{lower={#1\textrm{-sp}}},
    },
}

\SetupObject\fVal{
    symbol={V},
}

\SetupObject\sTopClust{
    symbol={\mathrm{Top}},
    define keys[1]={
        {default}{arg={#1}},
    },
}

\SetupObject\lBox{
    symbol={\mdlgwhtsquare},
    define keys={
        {refl closure}{symbol=\boxdot},
    },
}

\newlist{proofcaseitem}{itemize}{1}
\setlist[proofcaseitem]{label=\textendash}
 
\newcommand*{\theauthors}{Simon Santschi and Niels C.~Vooijs}
\newcommand*{\thetitle}{Logics Containing wK4: Selection à la Fine}
\newcommand*{\thedate}{\today}

\author{
Simon Santschi\thanks{Supported by the Swiss National Science Foundation (SNSF), grant no. 200021\textunderscore215157.}
\institute{Mathematical Institute\\
University of Bern\\
Bern, Switzerland}
\email{simon.santschi@unibe.ch}
\and
Niels C.~Vooijs\footnotemark[1]
\institute{Mathematical Institute\\
University of Bern\\
Bern, Switzerland}
\email{ncvooijs@gmail.com}
}
\date{\thedate}
\title{\thetitle}

\newcommand{\titlerunning}{\thetitle}
\newcommand{\authorrunning}{\theauthors}

\begin{document}
    \maketitle

    \begin{abstract}
We generalize Fine's Iterative Selection Method to the weakly transitive setting.
        In particular, this provides a transparent frame-theoretic proof of the finite model property for (strongly) cofinal subframe logics extending \(\logKWeakTrans\), which was previously established using algebraic methods.
Using the same construction, we generalize Fine's Finite Width Theorem to the weakly transitive setting,
connecting these two celebrated theorems of Fine.
\end{abstract}

    \section{Introduction}
    A well-known theorem of Fine~\cite{Fine1985-logics-containing-K4-part-2} states that every subframe logic extending \(\logKTrans\) has the finite model property (or \fmp{} for short).
Here, a normal modal logic \(\logLambda\) is called \emph{subframe} \tdefiff{} every definable submodel of a model of \(\logLambda\) is itself a model of \(\logLambda\).
Fine~\cite{Fine1985-logics-containing-K4-part-2} gives two proofs for this fact.
The first proof \cite[Theorem~4.5]{Fine1985-logics-containing-K4-part-2} relies on subframe formulas and a relatively straightforward selection on the canonical model.
The second proof \cite[Theorem~6.6]{Fine1985-logics-containing-K4-part-2} is purely model-theoretic but relies on a more involved iterated selection-quotient procedure\footnote{Fine~\cite{Fine1985-logics-containing-K4-part-2} does not mention the definability of the selected subset, and therefore states the result for Kripke complete logics \cite[Corollary~6.3]{Fine1985-logics-containing-K4-part-2}. The fact that the selected subset is definable was noted by Kracht~\cite{Kracht1990-internal-definability-and-completeness-in-modal-logic}.}.

This second approach has several advantages.
As Fine puts it: \blockcquote[Section~6, \citepage{638}]{Fine1985-logics-containing-K4-part-2}{the extra work involved in the direct proof is not wasted; for it enables us to extract further information about subreduct models}.
Moreover, this approach generalizes the result from subframe logics to cofinal subframe logics, which would only later be introduced by Zakharyaschev~\cite{Zakharyaschev1992-canonical-formulas-for-K4-part-1}.
Finally, Zakharyaschev's theory of canonical formulas builds upon this construction \cite{Zakharyaschev1992-canonical-formulas-for-K4-part-1,Zakharyaschev1996-canonical-formulas-for-K4-part-2,Zakharyaschev1997-canonical-formulas-for-K4-part-3}.

We generalize Fine's Iterative Selection Method to the weakly transitive setting, obtaining the following model-theoretic analogue to \cite[Lemma~5.1]{BezhanishviliGhilardiJibladze2011-an-algebraic-approach-to-subframe-logics-modal-case}.

\begin{maintheorem}\zlabel{mthm:selection}
    Let \(\frmM\) be a weakly transitive Kripke model and \(\lthSigma\) a finite subformula-closed set of formulas.
	Then there exists a definable \(\lthSigma\)-sub-p-morphism from \(\frmM\) onto a finite model.
\end{maintheorem}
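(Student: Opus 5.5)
The plan is to adapt Fine's iterated selection–quotient construction, reading \(\Sigma\)-sub-p-morphism as follows: a partial map defined on a definable subset of \(\frmM\), onto a finite model, that preserves truth of \(\lthSigma\)-formulas, satisfies the forth condition on its domain, and satisfies the back condition. I would build this map by recursion on a finite notion of depth. For a point \(\vx\) consider:
\begin{itemize}
\item its \(\lthSigma\)-type \(\fTy(\vx)=\Set{\lphi\in\lthSigma}[\vx\lValidates\lphi]\);
\item its diamond-set \(D(\vx)=\Set{\lDiamond\lphi\in\lthSigma}[\vx\lValidates\lDiamond\lphi]\).
\end{itemize}
Weak transitivity (\(\relR\circ\relR\sSubsetEq\relR\sUnion\relDiag\)) gives monotonicity: if \(\vx\relR\vy\) and \(\vy\lValidates\lDiamond\lphi\) via a witness \(\vz\neq\vx\), then \(\vx\lValidates\lDiamond\lphi\). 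The only exception is the witness being \(\vx\) itself, and that case is recorded by \(\fTy(\vx)\). Hence the \(\lthSigma\)-information available along an \(\relR\)-path can change only boundedly often, namely at most \(\card{\lthSigma}\) many strict drops. This bounds the number of stages.

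At stage \(0\), I select the \emph{\(\lthSigma\)-maximal} points. These are the \(\vx\) such that every \(\vy\) with \(\vx\relR\vy\) and \(\vy\neq\vx\) has the same diamond-set and moreover \(\vy\) can see back, up to type, every type realized above \(\vx\). Such points behave like a final cluster. I identify selected points by the pair (type, set of types realized in their ``top cluster''), and additionally record whether the point is reflexive up to \(\lthSigma\). Reflexivity is needed because in \(\logKWeakTrans\) clusters may contain irreflexive points, and \(\vx\relR\vy\relR\vx\) does not force \(\vx\relR\vx\).

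At stage \(n+1\), I consider the points not yet covered. Among them I select those \(\vx\) all of whose \(\lDiamond\)-requirements in \(\lthSigma\) not witnessed inside their own cluster are already witnessed by points selected at earlier stages. I again quotient the selected points by their finite data, which consists of the type, the reflexivity flag, and the set of earlier equivalence classes seen. Each stage thus produces finitely many classes, and the depth bound shows the process terminates. The target model \(\frmN\) has these classes as points, with \([\vx]\relR[\vy]\) iff some (equivalently every, by the choice of data) representative relation holds. I then check three things:
\begin{itemize}
\item \(\frmN\) is weakly transitive;
\item the quotient map \(\ff\) preserves \(\lthSigma\)-truth, by induction on formulas in \(\lthSigma\); the \(\lDiamond\)-case uses that selected witnesses exist at earlier stages or in the cluster;
\item forth and back hold.
\end{itemize}

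For definability, each selection condition is a Boolean combination of statements of the form ``some type \(\fTy\) is realized at an \(\relR\)-successor satisfying a previously defined condition''. Every such statement is expressible by a formula \(\lDiamond(\MEET\fTy\meet\lpsi)\), where \(\lpsi\) defines the earlier stage. It is important that no arbitrary choice of representatives is ever made: Fine selects all points meeting the condition and then quotients. By induction on stages, the domain of \(\ff\), and each fibre of \(\ff\), is then definable.

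The main obstacle I expect is the treatment of clusters in the weakly transitive case. First, mixed reflexive and irreflexive points break the usual argument that a maximal cluster collapses to a single point per type. Second, a requirement \(\lDiamond\lphi\) at \(\vx\) may be witnessed only by \(\vx\)'s cluster-mates, and not by \(\vx\) itself. I would first try to handle both by duplicating each type into a reflexive and an irreflexive copy in the quotient cluster, and by checking the back condition with the reflexive-closure formulas \(\lphi\join\lDiamond\lphi\). The delicate point is making the relation on \(\frmN\) weakly transitive while keeping the back condition for irreflexive points.
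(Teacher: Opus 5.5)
There is a genuine gap. Your stage rule is too greedy and never discards a point, but the notion in the statement requires discarding. The paper defines a $\Sigma$-sub-p-morphism as a p-morphism on the submodel given by a $\Sigma$-subset $X'$. This means every point outside $X'$ must be $\Sigma$-eliminable w.r.t.\ $X'$: for each $\varphi\in\Sigma$ it has a successor in $X'$ agreeing with it on $\varphi$. There is also a condition on cluster-mates. Preservation of $\Sigma$-truth is a consequence (\zcref{prop:Sigma-sub-p-morphism-preserves-truth}), not the definition, and the $\Sigma$-subset condition is what gives strong cofinality for the finite model property corollary.

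Your plan never establishes this condition, and without discarding points it cannot work. Take $y_0,y_1,\dots$ with $y_iRy_j$ iff $j<i$, $p$ true everywhere, and $\Sigma=\{p,\ldiamond p\}$. Your stage $0$ is $\{y_0\}$. Your stage-$1$ rule then admits every $y_i$ with $i\geq 1$ at once, since its only requirement $\ldiamond p$ is witnessed by $y_0$. None of these points is even eliminable w.r.t.\ $\{y_0\}$, because $y_0\not\models\ldiamond p$. So your map is defined on all of $M$. But $M$ has no finite p-morphic image at all, since $y_i$ is the only point satisfying ${\ldiamond}^{i}\top\wedge\neg{\ldiamond}^{i+1}\top$. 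Concretely, whatever your reflexivity flag means, some class $C\subseteq\{y_i : i\geq 1\}$ is infinite. So $C$ gets a loop in the quotient, while the point of $C$ with least index has no successor in $C$. Fine's method, as in \zcref{constr:selective-filtration}, selects only the top ($y_1$). It then removes every point that is $\Sigma$-eliminable w.r.t.\ the selected upset, here all $y_i$ with $i\geq 2$.

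Restricting stages to top clusters and adding elimination would still not rescue your quotient. The claim ``some (equivalently every) representative'' is unjustified for successors in the same stage, because your finite data do not record which same-stage classes a point sees. For example:
\begin{itemize}
\item let $z\models p\wedge q$ be a dead end (stage $0$);
\item above it put two clusters $\{x,y\}$ and $\{x',y',w'\}$ of pairwise related irreflexive points, all seeing $z$;
\item set $x,x'\models p\wedge\neg q$, $y,y'\models\neg p\wedge q$, $w'\models p\wedge q$, and $\Sigma=\{p,q,\ldiamond p,\ldiamond q\}$.
\end{itemize}
Both clusters are top clusters over $\{z\}$, and none of their points is eliminable w.r.t.\ $\{z\}$. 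Moreover $x$ and $x'$ agree on $\Sigma$-type, flag and earlier classes. So $[x]R[w']$ holds via $x'$, yet $x$ has no successor in $[w']$.

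The paper sidesteps this. It takes $T_\alpha$ to be the preimage of the top clusters of the quotient of $M_\alpha$ by bisimilarity relative to $U_\alpha$, and quotients by that bisimilarity. The p-morphism property, and weak transitivity of the image (your ``delicate point''), then come for free. The price is exactly two results your plan has no substitute for:
\begin{itemize}
\item $T_\alpha$ covers $X_\alpha\setminus U_\alpha$, so the construction does not get stuck (via stable points, \zcref{sec:covering});
\item $q_n(U_n)$ remains finite and definable (via the diagram bisimulation, \zcref{sec:diagram-constr}).
\end{itemize}
In an arbitrary Kripke model, neither the existence of top clusters nor their finiteness is available before quotienting. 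Termination then comes from the fact that $R$-related points in different tops have different $\Sigma$-types, giving at most $2^{|\Sigma|}$ stages. It does not come from counting drops of diamond-sets.
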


Here a \(\lthSigma\)-sub-p-morphism is a generalization of Fine's \(\nn\)-subreduction, see \zcref{sec:setting-up}.
As an immediate corollary, we obtain a model-theoretic proof of the following theorem, which was already proved algebraically, see \cite[Theorem~5.9]{BezhanishviliGhilardiJibladze2011-an-algebraic-approach-to-subframe-logics-modal-case} and \cite[Theorem~7.14]{BezhanishviliBezhanishvili2012-canonical-formulas-wK4}.

\begin{maintheorem}\zlabel{mthm:fmp}
    Every strongly cofinal subframe logic extending \(\logKWeakTrans\) has the \fmp{}.
\end{maintheorem}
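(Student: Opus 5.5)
The plan is to derive \zcref{mthm:fmp} from \zcref{mthm:selection} applied to the canonical model, in the same way Fine obtains the \fmp{} for cofinal subframe logics from his selection procedure. Let \(\logLambda \sSupersetEq \logKWeakTrans\) be strongly cofinal subframe and let \(\lphi \notin \logLambda\). Take \(\frmM\) to be the canonical model of \(\logLambda\). It is weakly transitive, because \(\logKWeakTrans\) is canonical. Some point \(\vx\) of \(\frmM\) refutes \(\lphi\). Moreover, \(\logLambda\) is valid in \(\frmM\) under every valuation assigning formula-definable sets to the propositional letters, since the canonical general frame validates \(\logLambda\). Let \(\lthSigma\) be the (finite) set of subformulas of \(\lphi\).

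By \zcref{mthm:selection} there is a definable \(\lthSigma\)-sub-p-morphism \(\ff\) from \(\frmM\) onto a finite model \(\frmN\). From \zcref{sec:setting-up} I will need three features of this notion.
\begin{thmenumerate}
\item The domain \(\sU\) of \(\ff\) is definable, as are the fibres \(\ff^{-1}(\vy)\) for \(\vy\) in \(\frmN\). Hence \(\ff^{-1}(\sX)\) is definable for every \(\sX \sSubsetEq \frmN\), being a finite union of fibres.
\item \(\ff\) is a p-morphism from the submodel on \(\sU\) onto \(\frmN\) that preserves truth of formulas in \(\lthSigma\).
\item \(\sU\) satisfies the (strong) cofinality condition that strongly cofinal subframe logics are defined to respect.
\end{thmenumerate}

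The argument then has three steps.

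\emph{Step 1.} The definable submodel on \(\sU\) again validates \(\logLambda\) under every valuation by definable subsets of \(\sU\). This holds because \(\sU\) is definable and satisfies the cofinality condition, and \(\logLambda\) is strongly cofinal subframe.

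\emph{Step 2.} Let \(\fVal'\) be an arbitrary valuation on the frame of \(\frmN\). The valuation \(\vp \mapsto \ff^{-1}(\fVal'(\vp))\) on \(\sU\) uses only definable sets by (i). Since \(\ff\) is a p-morphism on \(\sU\) by (ii), the standard preservation argument gives that every formula valid under the pulled-back valuation is valid in \(\frmN\) under \(\fVal'\). By Step 1 this applies to every theorem of \(\logLambda\), so the finite frame of \(\frmN\) validates \(\logLambda\).

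\emph{Step 3.} We must show that \(\frmN\) refutes \(\lphi\). If \(\vx \in \sU\), then \(\ff(\vx)\) refutes \(\lphi\) by \(\lthSigma\)-preservation. Otherwise I expect the definition of a \(\lthSigma\)-sub-p-morphism to ensure that \(\sU\) still witnesses the refutation. This could be because \(\ff\) is total on the relevant \(\lthSigma\)-types, or because \(\lphi\) fails somewhere in \(\sU\) whenever it fails in \(\frmM\). If neither is built in, I would first restrict \(\frmM\) to the generated submodel at \(\vx\) and apply \zcref{mthm:selection} there, arranging that \(\vx\) is selected.

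The main obstacle is matching the definitions. I need the cofinality condition produced by the selection to be exactly the one in the definition of ``strongly cofinal'', and I need the sub-p-morphism to be a genuine p-morphism on its domain (or at least to reflect validity). In the weakly transitive setting the subtlety is reflexivity of points: a point that is irreflexive in \(\frmM\) may be sent to a reflexive cluster, or vice versa. This is presumably why the hypothesis is \emph{strongly} cofinal, and it is the point I would check most carefully against the conditions of \zcref{sec:setting-up}.
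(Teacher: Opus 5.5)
Your overall route is the paper's. You take a weakly transitive $\Lambda$-model refuting $\varphi$ (the canonical model is fine) and apply \zcref{mthm:selection} with $\Sigma$ the subformulas of $\varphi$. You then use that every $\Sigma$-subset is strongly cofinal, which is the remark right after \zcref{lemma:main-property-Sigma-submodel} and settles your worry about which cofinality notion the selection produces. So the restriction to $U$ is a $\Lambda$-model, and you pass to its p-morphic image $N$ and invoke \zcref{prop:Sigma-sub-p-morphism-preserves-truth}. Two steps need fixing, though neither changes the approach.

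First, you leave Step 3 as an expectation. Your fallback would not work, since \zcref{mthm:selection} gives no control over which points are selected, but none is needed. By definition of a $\Sigma$-subset, every point outside $U$ is $\Sigma$-eliminable w.r.t.\ $U$. So if $x \notin U$ refutes $\varphi \in \Sigma$, some $y \in R(x) \cap U$ also refutes $\varphi$, and then $f(y)$ refutes $\varphi$ in $N$.

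Second, feature (i) is part neither of the notion of a $\Sigma$-sub-p-morphism nor of the statement of \zcref{mthm:selection}. Fibres need not be definable. For instance, map two isolated irreflexive points with the same valuation bijectively onto a copy of themselves. Here $U = X$ is a $\Sigma$-subset, yet neither fibre is definable. So Step 2 for an arbitrary valuation on the frame of $N$ is unsupported as written. It happens to hold for the map the paper builds, whose fibres are shown definable in the proof of \zcref{lemma:selective-filtration:finite-stage-finite-quotient-head}, but you do not need it. The \fmp{} here is completeness with respect to finite \emph{models} of $\Lambda$, so it suffices that $N$ itself validates $\Lambda$. That is your Step 2 with the valuation of $N$, whose pullback is just the valuation of the canonical model restricted to $U$. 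Step 1 for that valuation is literally the definition of a strongly cofinal subframe logic.
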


Moreover, we generalize the construction beyond finite filtration sets, by extending the iteration to the transfinite.
This allows us to generalize, using the selection construction, Fine's Finite Width Theorem \cite{Fine1974-logics-containing-K4-part-1} to the weakly transitive setting.

\begin{maintheorem}\zlabel{mthm:finite-width}
    Every logic of finite width extending \(\logKWeakTrans\) is complete w.r.t.\@ its conversely pre-well-founded frames, and is therefore, in particular, Kripke complete.
\end{maintheorem}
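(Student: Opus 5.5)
The plan is to reduce the theorem to the transfinite version of the selection construction applied to a canonical-type model, and then to use finite width to see that the selected frame validates the logic. Let \(\logLambda \supseteq \logKWeakTrans\) be of width at most \(\nn\) and let \(\lphi\) be a non-theorem. I would work in the canonical model \(\frmM\) of \(\logLambda\). It is weakly transitive, and its underlying descriptive frame validates \(\logLambda\). I would also use that the width-\(\nn\) formula is canonical over \(\logKWeakTrans\), so that the canonical frame has width at most \(\nn\); this should follow from a Sahlqvist-style argument, as in the \(\logKTrans\) case. Pick a point \(\vx\) refuting \(\lphi\).

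Next, I would run the iterated selection of \zcref{mthm:selection} with \(\lthSigma\) the set of all formulas rather than a finite filtration set, continuing the iteration through the ordinals. At each stage one selects, for each formula whose \(\ldiamond\)-witness is not yet covered, witnesses that are maximal in the weakly transitive order (modulo clusters). At limit stages one takes unions. Finite width should guarantee two things: at each stage only boundedly many new maximal points are needed per formula, and the process closes off. The result is a subset \(\sS \ni \vx\) such that the restriction of \(\frmM\) to \(\sS\) is a \(\lthSigma\)-sub-p-morphism for all formulas. In particular, truth of all formulas is preserved on \(\sS\), so \(\lphi\) is still refuted at \(\vx\) in the submodel. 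Moreover, by construction every strictly ascending chain in \(\sS\) is reversed by the selection order. Hence the frame on \(\sS\), i.e.\ its quotient by clusters, is conversely pre-well-founded.

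The remaining step, which I expect to be the main obstacle, is showing that the Kripke frame on \(\sS\) validates \(\logLambda\), since \(\logLambda\) is not assumed to be subframe. Following Fine's original finite width argument, I would show that every subset of \(\sS\) is the trace \(\sU \sIntersection \sS\) of an admissible (definable) set \(\sU\) of \(\frmM\). The intended reason is that each selected point was chosen as a maximal realizer of some formula. With width at most \(\nn\), such points (or their clusters, up to a finite distinction inside the cluster) are then singled out by a single formula relative to \(\sS\). Conversely pre-well-foundedness lets one build the defining formula for an arbitrary subset by transfinite induction along the selection stages; I expect compactness of the descriptive frame to be needed to handle limit stages.

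Given this definability, any valuation on the Kripke frame \(\sS\) is induced by a valuation on the descriptive frame of \(\frmM\). The sub-p-morphism property then transfers refutations on \(\sS\) to refutations in the descriptive frame. Since the latter validates \(\logLambda\), so does \(\sS\). Thus \(\lphi\) is refuted on a conversely pre-well-founded \(\logLambda\)-frame, which proves completeness with respect to such frames, and Kripke completeness follows. The delicate points are the treatment of non-reflexive points and proper clusters in the weakly transitive setting, and making sure that definability survives limit stages.
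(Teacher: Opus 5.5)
There is a genuine gap in the step you flag yourself: persistence of $\Lambda$ to the Kripke frame on $S$. The claim that every subset of $S$ is a trace $U\cap S$ of an admissible set of the canonical model fails as soon as $S$ is infinite, and infinite $S$ is exactly the case not already covered by the finite model property. Admissible sets of the canonical model are defined by formulas, so there are only countably many of them and only countably many traces on $S$. An infinite $S$, however, has uncountably many subsets. No transfinite induction along the selection stages, and no appeal to compactness, can produce one formula per arbitrary subset. What you can actually get is atomicity: every single point of the selected model is definable. The paper obtains this from the collapse by bisimilarity, the Hennessy--Milner-like theorem and the atomicity lemma.

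The transfer mechanism you describe is also not sound. The $\Sigma$-subset property of $S$ refers to the original valuation. An admissible $V'$ chosen only so that $V'(p)\cap S=V_S(p)$ can make formulas true at successors of points of $S$ that lie outside $S$. This creates witnesses absent from $(S,V_S)$, so truth at points of $S$ is not preserved.

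The missing idea is that you do not need every valuation on the frame to be definable. You only need, for each model $N$ on the frame and each modal depth $n$, a definable variant of the $\Lambda$-model that agrees with $N$ on formulas of depth at most $n$. The paper gets this as follows:
\begin{itemize}
\item Apply the finite selection theorem (Theorem~1) to $N$ itself, with $\Sigma_n$ the formulas of depth at most $n$ (finitely many up to equivalence).
\item Shrink the resulting $\Sigma_n$-subset to a finite one, $Y$, using the finite cover property.
\item Pull back along a map $h$ that sends each $x$ to a point of $R^{*}(x)\cap Y$ with the same $\Sigma_n$-diamond profile relative to $Y$, i.e.\ use the valuation $h^{-1}(V_N(p)\cap Y)$.
\end{itemize}
Each fibre $h^{-1}(y)$ is a union of $Y$-span classes off $Y$ together with $\{y\}$. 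It is therefore definable by atomicity and finiteness of $Y$, so only finitely many definable sets are needed for each $n$.

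Note also that you prove a weaker statement than the one at hand. You assume a uniform width bound and canonicity of the width axiom. In the paper, however, a logic of finite width is merely complete with respect to models all of whose antichains are finite. Such a logic need not contain any width axiom; an example is the intersection of all $\mathbf{wK4}\oplus\mathit{bw}_n$, whose canonical model has point-generated submodels with infinite antichains. This is why the paper starts from an arbitrary finite-width $\Lambda$-model refuting $\varphi$. It then uses the resulting quotient $\widetilde{\mathfrak{M}}_\alpha$ itself as the $\Lambda$-model whose definable variants are taken.
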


That is, we obtain two of Fine's celebrated theorems, generalized to the weakly transitive setting, from a single generalized construction, connecting the two results.

Compared to \cite[Lemma~5.1]{BezhanishviliGhilardiJibladze2011-an-algebraic-approach-to-subframe-logics-modal-case}, \zcref{mthm:selection} is a very slight generalization in that it gives some more guarantees about the selected subset.
The important corollary \zcref{mthm:fmp} follows equally from both constructions, where it should be noted that \emph{strong} cofinality was introduced only in \cite{BezhanishviliBezhanishvili2012-canonical-formulas-wK4}, and is thus not stated in \cite[Theorem~5.9]{BezhanishviliGhilardiJibladze2011-an-algebraic-approach-to-subframe-logics-modal-case} but does easily follow from their construction, as noted in \cite{BezhanishviliBezhanishvili2012-canonical-formulas-wK4}.
Thus, our construction can be understood as a purely model-theoretic alternative to \cite{BezhanishviliGhilardiJibladze2011-an-algebraic-approach-to-subframe-logics-modal-case}, with the added benefit of uniting it with the Finite Width Theorem in a single construction.

Another strand in the literature, starting with Fine's \cite{Fine1985-logics-containing-K4-part-2} first \fmp{} proof, is to perform selection on canonical models.
These, and more generally descriptive models, are easier to deal with for selection methods, since for every point \(\vx\) that satisfies a formula \(\lphi\), there exists a \emph{maximal} successor of \(\vx\) that satisfies \(\lphi\).
The selected subset can therefore be defined directly without the need for an iterated construction. A canonicity assumption on the logic then also obliviates the need for definability of the selection.
Examples of this approach in the weakly transitive setting are \cite{BaltagBezhanishviliFernandez2023-topological-mu-calclulus-completeness-and-decidability,KudinovShapirovsky2025-two-types-of-filtrations-for-wK4-and-relatives}.

The logic \(\logKWeakTrans = \logK \oplus \paren{\lap \land \lBox\lap} \limplies \lBox\lBox\lap\) is the logic of weakly transitive frames \cite{Esakia2001-weak-transitivity}.
This logic and its extensions have recently attracted attention \cite{BezhanishviliGhilardiJibladze2011-an-algebraic-approach-to-subframe-logics-modal-case,BezhanishviliBezhanishvili2012-canonical-formulas-wK4,BezhanishviliEsakiaGabelaia2011-spectral-and-T0-spaces-in-d-semantics,ChenMa2025-the-McKinsey-axiom-on-weakly-transitive-frames} for two reasons.
First, \(\logKWeakTrans\) is the logic of topological spaces when \(\lDiamond\) is interpreted as the Cantor derivative.
This so-called \emph{topological d-semantics} was introduced by McKinsey and Tarski~\cite{McKinseyTarski1944-the-algebra-of-topology} along with the classical \emph{closure semantics}, but completeness of \(\logKWeakTrans\)  with respect to this semantics was only proved in \cite{Esakia2001-weak-transitivity}.
Second, many results for extensions of \(\logKTrans\) generalize to extensions of \(\logKWeakTrans\), but the proofs often require non-trivial modification.

\paragraph{Outline.}
The paper is organized as follows.
In \zcref{sec:prelim}, we recall the basic notions and notations needed in the paper.
In \zcref{sec:setting-up}, we generalize Fine's \(\nn\)-subreductions to the weakly transitive setting and introduce auxiliary notions used in our selection construction, which is described in \zcref{sec:selection-construction}.
\zcref[S]{sec:covering,sec:diagram-constr} are concerned with the termination of the construction and the definability of the selected subset, respectively. Both results rely on suitably defined bisimulations.
In \zcref{sec:Fines-thms}, we combine the results from the earlier sections to derive \zcref{mthm:selection-again,mthm:fmp-again,mthm:finite-width-again}.
Finally, in \zcref{sec:conclusion}, we
discuss possible directions for further research.
 
    \section{Preliminaries}\zlabel{sec:prelim}
    We assume familiarity with basic Kripke semantics for normal unimodal logic, see, e.g.,\@ \cite{CZ1997,BdRV2001}.
By a \emph{logic} we always mean a normal unimodal logic.
We write \(\lBox[refl closure]\lphi \coloneqq \lphi \land \lBox\lphi\).

Let  \(\frF = \structuple{\sX, \relR}\) be a (Kripke) frame. For \(\vx \in \sX\), we write \(\relR{\vx} \coloneqq \Set{\vy \in \sX}[\homotuple{\vx, \vy} \in \relR]\).
For \(\vy \in \relR{\vx}\) we say \(\vx\) \emph{sees} \(\vy\) and \(\vy\) is a \emph{successor} of \(\vx\).
If, moreover, \(\vx \notin \relR{\vy}\), then we call \(\vy\) a \emph{strict successor} of \(\vx\).
We write \(\relR[converse] \coloneqq \Set{\homotuple{\vy, \vx}}[\homotuple{\vx, \vy} \in \relR]\) for the converse of \(\relR\), and \(\relR[refl trans closure]\) for the reflexive transitive closure of \(\relR\).
Given a set \(\sY \sSubsetEq \sX\) and \(\vx \in \sX\), we write \(\fSpan[\sY]{\vx} \coloneqq \relR{\vx} \sIntersection \sY\) for the \(\sY\)-\emph{span} of \(\vx\), \(\fSpan[refl trans closure, \sY]{\vx} \coloneqq \relR[refl trans closure]{\vx} \sIntersection \sY\), and \(\relEquiv[span=\sY]\) for the equivalence relation on \(\sX\) induced by \(\fSpan[\sY]\).
For any equivalence relation \(\relEquiv[heavy lower slot]\), we write \(\sEquivClass[heavy lower slot]{\vx}\) for the equivalence class of \(\vx\).
For a binary relation \(\relS\), we denote its \emph{domain} by \(\relS[dom] \coloneqq \Set{\vx}[\homotuple{\vx,\vy}\in \relS]\).
We apply logical connectives to subsets of a frame, e.g.,\@ \(\lDiamond\sY = \relR[converse]{\sY}\) and \(\sY \limplies \sZ = \lnot\sY \sUnion \sZ\).

Recall that \(\frF\) is \emph{weakly transitive} \tdefiff{} whenever \(\vy \in \relComp{\relR \circ \relR}{\vx}\), then either \(\vy \in \relR{\vx}\) or \(\vy = \vx\).
In this case, \(\relR[refl trans closure]\) coincides with the reflexive closure of \(\relR\).
For the rest of the paper, we will only be concerned with weakly transitive frames.

A \emph{cluster} is a maximal set \(\sC\) such that \(\sC \sCartProd \sC \sSubsetEq \relR[refl trans closure]\).
Every point is a member of exactly one cluster.
A set \(\sU \sSubsetEq \sX\) is called an \emph{upset} or \emph{upwards closed} \tdefiff{} \(\relR{\sU} \sSubsetEq \sU\).
We say that \(\sY \sSubsetEq \sX\) \emph{covers} \(\sY['] \sSubsetEq \sX\) \tdefiff{} \(\sY['] \sSubsetEq \relR[refl trans closure, spar, converse]{\sY}\).
It is \emph{cofinal} \tdefiff{} it covers \(\sX\) and \emph{strongly cofinal} \tdefiff{}, moreover, \(\relComp{\relR \circ \relR}[converse]{\sY} \sSubsetEq \relR[converse]{\sY}\).

For a model \(\frmM = \structuple{\sX,\relR,\fVal}\) on the frame \(\frF\) and a formula \(\lphi\), we write \(\sem{\lphi}[\frmM]\) for the set of points \(\vx\) satisfying \(\lphi\), i.e.\@ \(\structuple{\frmM, \vx} \lValidates \lphi\).
Given a set of formulas \(\lthSigma\), we write \(\fTy[\lthSigma]{\frmM, \vx} \coloneqq \Set{\lphi \in \lthSigma}[\vx \in \sem{\lphi}[\frmM]]\) for the \(\lthSigma\)-\emph{type} of \(\vx\).
In both cases we omit \(\frmM\) from the notation when it is clear from the context.
We write \(\relEquiv[ty=\lthSigma]\) for the equivalence relation on \(\sX\) induced by \(\fTy[\lthSigma]\).
We say \(\frmM\) \emph{validates} \(\lphi\) \tdefiff{} \(\sem{\lphi}[\frmM] = \sX\), and \(\frmM\) \emph{is a model of} a logic \(\logLambda\) or is a \emph{\(\logLambda\)-model} \tdefiff{} it validates every \(\lphi \in \logLambda\).
A logic \(\logLambda\) is said to have the \emph{finite model property} (or \emph{\fmp{}} for short) \tdefiff{} it is complete with respect to its finite models.

A set \(\sY \sSubsetEq \sX\) is called \emph{definable} in \(\frmM\) \tdefiff{} there exists a formula \(\lphi\) such that \(\sY = \sem{\lphi}[\frmM]\), and \(\vx \in \sX\) is \emph{definable} \tdefiff{} \(\Set{\vx}\) is definable.
A \emph{definable variant} of \(\frmM\) is a model \(\frmM[']\) on \(\frF\) such that \(\sem{\lap}[\frmM[']]\) is definable in \(\frmM\) for every atomic proposition \(\lap\).
In this case, if \(\frmM\) is a \(\logLambda\)-model for some logic \(\logLambda\), then \(\frmM[']\) is a \(\logLambda\)-model as well.
\(\frmM\) is called \emph{differentiated} \tdefiff{} for all \(\vx, \vy \in \sX\), if \(\vx \neq \vy\), then there exists a definable set \(\sY\) with \(\vx \in \sY\) and \(\vy \notin \sY\), and \emph{atomic} \tdefiff{} every point is definable.

Recall that a \emph{p-morphism} between models is a surjection which is also a bisimulation.
Any bisimulation equivalence \(\relEquiv\) on \(\frmM\) induces a p-morphism from \(\frmM\) to the quotient \(\frmM[quotient=\relEquiv]\).
A submodel \(\frmM[']\) of \(\frmM\) is the restriction of \(\frmM\) to some set \(\sY \sSubsetEq \sX\), and a p-morphism from \(\frmM[']\) to \(\frmN\) is called a \emph{sub-p-morphism} from \(\frmM\) to \(\frmN\).
Both the submodel and the sub-p-morphism are called \emph{cofinal}, \emph{definable}, etc.\@ \tdefiff{} \(\sY\) has the property in question in \(\frmM\).
If \(\frmM[']\) is a definable submodel of \(\frmM\), then every set definable in \(\frmM[']\) is definable in \(\frmM\).

A logic \(\logLambda\) is called \emph{subframe}, \emph{cofinal subframe}, or \emph{strongly cofinal subframe}, respectively, \tdefiff{} for every \(\logLambda\)-model \(\frmM\), every definable, definable cofinal, or definable strongly cofinal submodel of \(\frmM\), respectively, is itself a \(\logLambda\)-model.

We say that a frame \(\frF\) or model \(\frmM\) has \emph{finite width} \tdefiff{} every anti-chain in it is finite, and a logic (extending \(\logKWeakTrans\)) is of \emph{finite width} \tdefiff{} it is complete w.r.t.\@ its models of finite width.
Note that we do not require a fixed finite bound on the size of anti-chains.
We call \(\frF\) \emph{conversely pre-well-founded} \tdefiff{} there is no infinite sequence \(\Function{\ff}{\Naturals}{\sX}\) in it such that \(\ff{\nn + 1}\) is a strict successor of \(\ff{\nn}\) for every \(\nn \in \Naturals\).
The frame \(\frF\) has the \emph{finite cover property} \tdefiff{} every set \(\sY \sSubsetEq \sX\) is covered by a finite subset \(\sY['] \sSubsetEq \sY\).
The finite cover property is equivalent to
the combination of the finite width and converse pre-well-foundedness properties,
and is also equivalent to the statement that for every infinite decreasing sequence of upsets \(\paren{\sU[\nn]}[\nn \in \Naturals]\), there exists \(\nii \in \Naturals\) such that \(\sU[\nii] = \sU[\nii + 1]\).
 
    \section{Towards the Selection Construction}\zlabel{sec:setting-up}
    In this section, we generalize Fine's \(\nn\)-subreductions and Zakhariaschev's \(\lthSigma\)-subreductions to the weakly transitive setting, and set up some further definitions we will need for the selection construction in the next section.
Let us fix a subformula-closed set of formulas \(\lthSigma\), and denote by \(\lthProp\) the set of atomic propositions in \(\lthSigma\).
Moreover, fix, for the rest of this section, a model \(\frmM\) on a weakly transitive frame \(\frF = \structuple{\sX, \relR}\).

\begin{definition}[Eliminability]
	Let \(\sY \sSubsetEq \sX\).
	Then \(\vx \in \sX \sMinus \sY\) is called
	\emph{\(\lthSigma\)-eliminable w.r.t.\@ \(\sY\)}
	iff for every \(\lphi \in \lthSigma\), there exists \(\vy \in \fSpan[\sY]{\vx}\) with \(\vx \in \sem{\lphi} \iff \vy \in \sem{\lphi}\).
	We write \(\ElimCrit[t=\lthSigma, m=\frmM, u=\sY]\) for the set of \(\lthSigma\)-eliminable points w.r.t.\@ \(\sY\), and generally omit \(\lthSigma\) from the notation.
	A point \(\vx \in \sX\) is called \emph{\(\lthSigma\)-eliminable} \tdefiff{} it is \(\lthSigma\)-eliminable w.r.t.\@ \(\Set{\vy \in \relR{\vx}}[\vx \notin \relR{\vy}]\).
\end{definition}

\begin{lemma}\zlabel{lemma:eliminability:definable}
    If \(\lthSigma\) is finite and \(\sY\) is definable in \(\frmM\), then so is \(\ElimCrit[t=\lthSigma, m=\frmM, u=\sY]\).
\end{lemma}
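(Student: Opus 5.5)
The plan is to write down an explicit formula whose extension is \(\ElimCrit[t=\lthSigma, m=\frmM, u=\sY]\). Fix a formula \(\lpsi\) with \(\sem{\lpsi}[\frmM] = \sY\). For a point \(\vx\) and \(\lphi \in \lthSigma\), the condition ``there exists \(\vy \in \fSpan[\sY]{\vx}\) with \(\vx \in \sem{\lphi} \iff \vy \in \sem{\lphi}\)'' splits into two cases according to whether \(\vx\) satisfies \(\lphi\). If \(\vx \in \sem{\lphi}\), the condition says that some successor of \(\vx\) lies in \(\sY\) and satisfies \(\lphi\), that is, \(\vx \in \sem{\lDiamond\paren{\lpsi \land \lphi}}\). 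If \(\vx \notin \sem{\lphi}\), it says \(\vx \in \sem{\lDiamond\paren{\lpsi \land \lnot\lphi}}\). This is just the paper's convention \(\lDiamond\sY = \relR[converse]{\sY}\) applied to \(\sY \sIntersection \sem{\lphi}\) and \(\sY \sIntersection \sem{\lnot\lphi}\), since \(\fSpan[\sY]{\vx} = \relR{\vx} \sIntersection \sY\).

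Hence, for each \(\lphi \in \lthSigma\), the set of points satisfying the \(\lphi\)-clause is defined by
\[
\lchi[\lphi] \coloneqq \paren{\lphi \limplies \lDiamond\paren{\lpsi \land \lphi}} \land \paren{\lnot\lphi \limplies \lDiamond\paren{\lpsi \land \lnot\lphi}}.
\]
Eliminability additionally requires \(\vx \notin \sY\). So the candidate defining formula is
\[
\lnot\lpsi \land \MEET[\lphi \in \lthSigma] \lchi[\lphi],
\]
which is a genuine formula because \(\lthSigma\) is finite.

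To finish, I would unfold the definitions pointwise and check that \(\vx\) satisfies this formula exactly when \(\vx \in \sX \sMinus \sY\) and every \(\lphi \in \lthSigma\) has a witness \(\vy \in \fSpan[\sY]{\vx}\) agreeing with \(\vx\) on \(\lphi\). This is immediate from the case split above. There is no real obstacle; the only points needing care are that finiteness of \(\lthSigma\) is what makes the conjunction finite, and that the membership condition \(\vx \notin \sY\) must be included explicitly via \(\lnot\lpsi\). Weak transitivity and subformula-closure play no role here.
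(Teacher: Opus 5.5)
Your proposal is correct and is essentially the paper's proof. The paper writes the set of eliminable points as \(\lnot\sY\) intersected with the finitely many sets \(\sA \limplies \lDiamond(\sY \sIntersection \sA)\), where \(\sA\) ranges over \(\sem{\lphi}\) and \(\sem{\lnot\lphi}\) for \(\lphi \in \lthSigma\); this is exactly your formula read semantically.
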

\begin{proof}
    Let \(\symcal{A}\) be the set containing \(\sem{\lphi}\) and \(\sem{\lnot\lphi}\) for each \(\lphi \in \lthSigma\).
    Then
	\begin{equation*}
		\ElimCrit[t=\lthSigma, m=\frmM, u=\sY]
		=
		\lnot\sY \sIntersection \sSetIntersection\Set{ \sA \limplies \lDiamond{\sY \sIntersection \sA} }[ \sA \in \symcal{A} ] .
	\end{equation*}
	Since \(\lthSigma\) is finite, the intersection is finite.
\end{proof}

Next, we generalize \(\lthSigma\)-sub-p-morphisms to the weakly transitive case.

\begin{definition}[\(\lthSigma\)-sub-p-morphism]
    We call a set \(\sX['] \sSubsetEq \sX\) a \(\lthSigma\)-\emph{subset} of \(\frmM\) \tdefiff{} \(\ElimCrit[t=\lthSigma, m=\frmM, u=\sX[']] = \sX \sMinus \sX[']\) and for all \(\lphi \in \lthSigma\), \(\vx \in \sX[']\) and \(\vy \in \relR{\vx} \sIntersection \relR[converse]{\vx}\) such that \(\vx \in \sem{\lphi} \iff \vy \in \sem{\lphi}\), there is \(\vy['] \in \fSpan[\sX[']]{\vx}\) with \(\vx \in \sem{\lphi} \iff \vy['] \in \sem{\lphi}\).
    A sub-p-morphism \(\ff\) from \(\frmM\) to \(\frmN\) is called a \(\lthSigma\)-\emph{sub-p-morphism} \tdefiff{} \(\ff[dom]\) is a \(\lthSigma\)-subset of \(\frmM\).
\end{definition}

If \(\frmM\) is transitive, then the latter requirement is void.
In this case, if \(\lthSigma\) is finite and, up to logical equivalence, closed under Boolean connectives, then we obtain Zakharyaschev's definition of \(\lthSigma\)-subreduction.
If \(\lthSigma\) is, up to logical equivalence, the set of formulas of modal depth up to \(\nn\), we obtain Fine's definition of \(\nn\)-subreduction.
As in Fine's and Zakharyaschev's definitions, it is immediate that every \(\lthSigma\)-subset is cofinal.
In addition, we obtain the following key property of \(\lthSigma\)-subsets.

\begin{lemma}\zlabel{lemma:main-property-Sigma-submodel}
	Let \(\sX[']\) be a \(\lthSigma\)-subset of \(\frmM\), \(\lphi \in \lthSigma\), \(\vx \in \sX\) and \(\vy \in \relR{\vx}\).
	Then there exists \(\vy['] \in \fSpan[\sX[']]{\vx}\) with \(\vy \in \sem{\lphi} \iff \vy['] \in \sem{\lphi}\).
\end{lemma}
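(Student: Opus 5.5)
The plan is a direct case analysis on whether \(\vy\) lies in \(\sX[']\). I will use the two defining clauses of a \(\lthSigma\)-subset in turn: first the eliminability clause \(\ElimCrit[t=\lthSigma, m=\frmM, u=\sX[']] = \sX \sMinus \sX[']\), then the cluster clause. Weak transitivity is the step that links the two. No induction on formulas or on the frame should be needed.

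\emph{Case \(\vy \in \sX[']\).} Take \(\vy['] \coloneqq \vy\). Then \(\vy['] \in \relR{\vx} \sIntersection \sX['] = \fSpan[\sX[']]{\vx}\), and trivially \(\vy \in \sem{\lphi} \iff \vy['] \in \sem{\lphi}\).

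\emph{Case \(\vy \notin \sX[']\).} By the first clause, \(\vy\) is \(\lthSigma\)-eliminable w.r.t.\ \(\sX[']\). So there is \(\vz \in \fSpan[\sX[']]{\vy}\) with \(\vy \in \sem{\lphi} \iff \vz \in \sem{\lphi}\). Since \(\vz \in \relComp{\relR \circ \relR}{\vx}\), weak transitivity gives either \(\vz \in \relR{\vx}\) or \(\vz = \vx\).
\begin{itemize}
\item If \(\vz \in \relR{\vx}\), then \(\vz \in \fSpan[\sX[']]{\vx}\), and we take \(\vy['] \coloneqq \vz\).
\item If \(\vz = \vx\), the weakly transitive structure bites, and this is the only step needing thought. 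Here \(\vx = \vz \in \sX[']\), and \(\vx \in \relR{\vy}\), so \(\vy \in \relR{\vx} \sIntersection \relR[converse]{\vx}\). Moreover \(\vx \in \sem{\lphi} \iff \vy \in \sem{\lphi}\). These are exactly the hypotheses of the second clause in the definition of a \(\lthSigma\)-subset. It yields \(\vy['] \in \fSpan[\sX[']]{\vx}\) with \(\vx \in \sem{\lphi} \iff \vy['] \in \sem{\lphi}\). Chaining the two equivalences gives \(\vy \in \sem{\lphi} \iff \vy['] \in \sem{\lphi}\).
\end{itemize}

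I expect no real obstacle. The only subtle point is to notice that the degenerate case \(\vz = \vx\) of weak transitivity is precisely what the extra cluster requirement in the definition of \(\lthSigma\)-subset was designed to handle. In the transitive setting that case would be absorbed by \(\vz \in \relR{\vx}\).
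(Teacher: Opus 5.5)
Your proposal is correct and follows the paper's proof essentially step for step. You case-split on whether \(y \in X'\), use eliminability of \(y\) to obtain \(z\), apply weak transitivity, and invoke the cluster clause of the definition of a \(\Sigma\)-subset in the degenerate case \(z = x\); you also make explicit that \(x = z \in X'\), which the paper leaves implicit.
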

\begin{proof}
    If \(\vy \in \sX[']\) then take \(\vy['] \coloneqq \vy\), and we are done.
	So suppose \(\vy \in \ElimCrit[m=\frmM, u=\sX[']]\).
	Then there is \(\vz \in \fSpan[\sX[']]{\vy}\) such that \(\vy \in \sem{\lphi} \iff \vz \in \sem{\lphi}\).
	If \(\vz \in \relR{\vx}\), then take \(\vy['] \coloneqq \vz\). Otherwise, \(\vz \notin \relR{\vx}\) and,
	by weak transitivity, \(\vx = \vz\), so \(\vy \in \relR{\vx} \sIntersection \relR[converse]{\vx}\) and \(\vy \in \sem{\lphi} \iff \vx \in \sem{\lphi}\).
	By the second property of \(\lthSigma\)-subsets, there is \(\vy['] \in \fSpan[\sX[']]{\vx}\) with \(\vx \in \sem{\lphi} \iff \vy['] \in \sem{\lphi}\).
\end{proof}

From this it immediately follows that every \(\lthSigma\)-subset of \(\frmM\) is strongly cofinal.
Moreover, a routine induction on formulas shows that they preserve truth of formulas in \(\lthSigma\).

\begin{proposition}\zlabel{prop:Sigma-sub-p-morphism-preserves-truth}
	Let \(\ff\) be a \(\lthSigma\)-sub-p-morphism from \(\frmM\) to \(\frmN\) and let \(\frmM[']\) be the restriction of \(\frmM\) to \(\ff[dom]\).
	Then, for every \(\vx \in \ff[dom]\),
	\begin{equation*}
		\fTy[\lthSigma]{\frmM, \vx} = \fTy[\lthSigma]{\frmM['], \vx} = \fTy[\lthSigma]{\frmN, \ff{\vx}} .
	\end{equation*}
\end{proposition}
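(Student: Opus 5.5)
Two equalities are needed. The second, \(\fTy[\lthSigma]{\frmM['], \vx} = \fTy[\lthSigma]{\frmN, \ff{\vx}}\), is immediate: \(\ff\) is a p-morphism from \(\frmM[']\) onto \(\frmN\), hence a bisimulation, and bisimulations preserve the truth of all modal formulas. So all the work is in the first equality. For this I would prove, by induction on \(\lphi \in \lthSigma\) (which is legitimate since \(\lthSigma\) is subformula-closed), that for every \(\vx \in \ff[dom]\) we have \(\vx \in \sem{\lphi}[\frmM]\) iff \(\vx \in \sem{\lphi}[\frmM[']]\).

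Atomic propositions hold because \(\frmM[']\) carries the restricted valuation. The Boolean cases are immediate from the induction hypothesis. The only real case is \(\lphi = \lDiamond\lpsi\) with \(\lpsi \in \lthSigma\); the \(\lBox\) case is dual, or is reduced to it via negation.

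\emph{From \(\frmM[']\) to \(\frmM\).} Suppose \(\vx \in \sem{\lDiamond\lpsi}[\frmM[']]\). Then some \(\vy \in \relR{\vx} \sIntersection \ff[dom]\) satisfies \(\vy \in \sem{\lpsi}[\frmM[']]\). The induction hypothesis gives \(\vy \in \sem{\lpsi}[\frmM]\), and since \(\vy \in \relR{\vx}\), we get \(\vx \in \sem{\lDiamond\lpsi}[\frmM]\).

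\emph{From \(\frmM\) to \(\frmM[']\).} Suppose \(\vx \in \sem{\lDiamond\lpsi}[\frmM]\), witnessed by some \(\vy \in \relR{\vx}\) with \(\vy \in \sem{\lpsi}[\frmM]\). This \(\vy\) need not lie in \(\ff[dom]\). Here I apply \zcref{lemma:main-property-Sigma-submodel} to \(\lpsi\), \(\vx\) and \(\vy\). It yields some \(\vy['] \in \fSpan[\ff[dom]]{\vx}\), that is, \(\vy['] \in \relR{\vx} \sIntersection \ff[dom]\), with \(\vy['] \in \sem{\lpsi}[\frmM]\). Since \(\vy' \in \ff[dom]\), the induction hypothesis applies to it and gives \(\vy['] \in \sem{\lpsi}[\frmM[']]\). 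Hence \(\vx \in \sem{\lDiamond\lpsi}[\frmM[']]\).

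The only potential obstacle is this second direction of the \(\lDiamond\) case. In particular, weak transitivity threatens it when the eliminating witness of \(\vy\) is \(\vx\) itself. That situation, however, is exactly what \zcref{lemma:main-property-Sigma-submodel} already handles, using the second clause in the definition of a \(\lthSigma\)-subset. So once that lemma is invoked, the induction is routine.
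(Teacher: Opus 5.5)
Your proof is correct and follows the same approach as the paper. The paper also gets the second equality from the fact that a p-morphism preserves truth. It gets the first by a routine induction on formulas in $\Sigma$, where the preceding lemma on $\Sigma$-subsets supplies, in the diamond case, a witness inside the domain.
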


From \zcref{lemma:main-property-Sigma-submodel,prop:Sigma-sub-p-morphism-preserves-truth} it also follows that, as in the transitive setting of Zakharyaschev, \(\lthSigma\)-sub-p-morphisms compose. 
However, we will not use this property.

In the selection construction we will be iteratively removing points that are eliminable w.r.t.\@ some upset \(\sU\), and extending this \(\sU\).
For the latter step, we introduce \(\sTopClust[\sU]\), the set of points in the \enquote{highest} clusters below \(\sU\).

\begin{definition}\zlabel{def:top-cluster}
    For an upset \(\sU\) in \(\frmM\) and \(\sD\coloneqq \sX\sMinus \sU\), we define
	\begin{equation*}
    	\sTopClust[\sU] \coloneqq\Set{\vx \in \sD}[
    	    \fSpan[\sD]{\vx} \sSubsetEq \relR[converse]{\vx}
    	]
        .
	\end{equation*}
\end{definition}

The next lemma follows straightforwardly from the above definition.

\begin{lemma}\zlabel{lemma:top-clust:upset}
	If \(\sU\) is an upset in \(\frmM\), then also \(\sTopClust[\sU] \sUnion \sU\) is an upset.
\end{lemma}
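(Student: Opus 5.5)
We need to show that \(\relR{\sTopClust[\sU] \sUnion \sU} \sSubsetEq \sTopClust[\sU] \sUnion \sU\). Since \(\sU\) is an upset by assumption, \(\relR{\sU} \sSubsetEq \sU\), so it remains to treat successors of points of \(\sTopClust[\sU]\). Fix \(\vx \in \sTopClust[\sU]\) and \(\vy \in \relR{\vx}\). If \(\vy \in \sU\) there is nothing to show. So assume \(\vy \in \sD = \sX \sMinus \sU\); we must show \(\vy \in \sTopClust[\sU]\), i.e.\ \(\fSpan[\sD]{\vy} \sSubsetEq \relR[converse]{\vy}\).

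First, \(\vy \in \fSpan[\sD]{\vx}\), so by the definition of \(\sTopClust[\sU]\) we get \(\vy \in \relR[converse]{\vx}\). That is, \(\vx \in \relR{\vy}\), and \(\vx\) and \(\vy\) lie in the same cluster.

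Now take \(\vz \in \fSpan[\sD]{\vy}\), i.e.\ \(\vz \in \relR{\vy}\) with \(\vz \in \sD\). By weak transitivity applied to \(\vx \mathrel{\relR} \vy \mathrel{\relR} \vz\), either \(\vz = \vx\) or \(\vz \in \relR{\vx}\).
\begin{itemize}
\item If \(\vz = \vx\), then \(\vy \in \relR{\vz}\) holds because \(\vy \in \relR{\vx}\). Hence \(\vz \in \relR[converse]{\vy}\).
\item If \(\vz \in \relR{\vx}\), then \(\vz \in \fSpan[\sD]{\vx}\), so \(\vx \in \relR{\vz}\) since \(\vx \in \sTopClust[\sU]\). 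Applying weak transitivity to \(\vz \mathrel{\relR} \vx \mathrel{\relR} \vy\) gives either \(\vy \in \relR{\vz}\), which is what we want, or \(\vy = \vz\). In the latter case \(\vy \in \relR{\vy}\), since \(\vz \in \relR{\vy}\), so again \(\vz = \vy \in \relR[converse]{\vy}\).
\end{itemize}

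In all cases \(\vz \in \relR[converse]{\vy}\), so \(\vy \in \sTopClust[\sU]\), which completes the proof. The only delicate point is the degenerate case \(\vy = \vz\) arising from weak transitivity, where reflexivity must be extracted from the hypothesis \(\vz \in \relR{\vy}\) itself. Everything else is direct unfolding of the definitions.
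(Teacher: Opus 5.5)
Your proof is correct. The paper gives no written argument, remarking only that the lemma follows straightforwardly from the definition of \(\mathrm{Top}(U)\), and your case analysis via weak transitivity (\(z = x\) versus \(z \in R(x)\), including the degenerate subcase \(y = z\)) is exactly that direct unfolding.
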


In the selection construction we will take quotients that are maximal w.r.t.\@ these upsets.
\begin{definition}
    For an upset \(\sU\) in \(\frmM\), we call \(\vx, \vy \in \sX\) \emph{bisimilar relative to \(\sU\)} if \(\vx\) and \(\vy\) are bismilar in the extension of \(\frmM\) with a fresh propositional variable that evaluates to \(\sU\).
    Equivalently, bisimilarity relative to \(\sU\) is the largest bisimulation equivalence on \(\frmM\) that does not identify points in \(\sU\) and \(\sX\sMinus\sU\).
    We say that \(\frmM\) is \emph{collapsed relative to \(\sU\)} if bisimilarity relative to \(\sU\) is the diagonal \(\relDiag[\sX] \coloneqq \Set{\homotuple{\vx, \vx}}[\vx \in \sX]\).
\end{definition}
 
    \section{Generalized Selection Construction}\zlabel{sec:selection-construction}
    In this section, we define the selection construction.
Compared to Fine's original construction, the definition of the top \(\sT[\odalpha]\) is modified to accommodate weakly transitive frames, quotienting out bisimilarity is postponed (and not even part of the construction), and the construction is generalized to infinitely many stages and infinite \(\lthSigma\), which we will use to re-derive and generalize Fine's Finite Width Theorem.

Let us fix a weakly transitive Kripke model \(\frmM = \structuple{\sX, \relR, \fVal}\).

\begin{construction}\zlabel{constr:selective-filtration}
We define a decreasing sequence of sets \(\sX[\odalpha] \sSubsetEq \sX\) and an increasing sequence of upsets \(\sU[\odalpha]\) in the restriction of \(\frmM\) to \(\sX[\odalpha]\) by transfinite induction as follows.

    For any ordinal \(\odalpha\) we write
    \begin{itemize}
        \item \(\frmM[\odalpha] = \structuple{\sX[\odalpha], \relR[\odalpha], \fVal[\odalpha]}\) for the restriction of \(\frmM\) to \(\sX[\odalpha]\),
        \item \(\relBisim[\odalpha]\) for bisimilarity relative to \(\sU[\odalpha]\) on \(\frmM[\odalpha]\),
\item \(\frmM[tilde, \odalpha] = \structuple{\sX[tilde, \odalpha], \relR[tilde, \odalpha], \fVal[tilde, \odalpha]}\) for the quotient of \(\frmM[\odalpha]\) through \(\relBisim[\odalpha]\), and
        \item \(\fq[\odalpha]\) for the natural quotient map from \(\frmM[\odalpha]\) to \(\frmM[tilde, \odalpha]\).
    \end{itemize}

    Define \(\sX[0] \coloneqq \sX\) and \(\sU[0] \coloneqq \sEmpty\).
    For the successor step, suppose \(\sX[\odalpha]\) and \(\sU[\odalpha]\) are already defined.
    Let \(\sT[\odalpha] \coloneqq \fq[\odalpha, preimage]{\sTopClust[\fq[\odalpha]{\sU[\odalpha]}]}\), where \(\sTopClust[\fq[\odalpha]{\sU[\odalpha]}]\) is taken in the model \(\frmM[tilde, \odalpha]\).
    Define
    \[
        \sU[\odalpha + 1] \coloneqq \sT[\odalpha] \sUnion \sU[\odalpha] \quad \text{and} \quad
        \sX[\odalpha + 1] \coloneqq \sX[\odalpha] \sMinus \ElimCrit[m=\frmM[\odalpha], u=\sU[\odalpha + 1]] .
    \]
    Notice that \(\sU[\odalpha] \sSubsetEq \sU[\odalpha + 1] \sSubsetEq \sX[\odalpha + 1] \sSubsetEq \sX[\odalpha]\) as required.
    Moreover, \(\sU[\odalpha + 1]\) is an upset in \(\frmM[\odalpha]\) by \zcref{lemma:top-clust:upset}.

    For the limit step, define
    \begin{equation*}
        \sU[\odlambda] \coloneqq \sLimitUnion[\odalpha < \odlambda] \sU[\odalpha]
        \quad \text{and} \quad
        \sX[\odlambda] \coloneqq \sX[\odlambda, '] \sMinus \ElimCrit[{m={\frmM[{restrict={\sX[\odlambda, ']}}]}, u={\sU[\odlambda]}}]
        \quad \text{where} \quad
        \sX[\odlambda, '] \coloneqq \sLimitIntersection[\odalpha < \odlambda] \sX[\odalpha]
        ,
    \end{equation*}
    and notice that \(\sU[\odlambda] \sSubsetEq \sX[\odlambda]\) is an upset in \(\frmM[\odlambda]\) as required.

    If \(\sX[\odalpha] = \sU[\odalpha]\) then we say that the construction  \emph{has terminated} at stage \(\odalpha\).
    We say that the construction \emph{terminates} \tdefiff{} it has terminated at stage \(\odalpha\) for some ordinal \(\odalpha\).
If \(\sX[\odalpha] \neq \sU[\odalpha]\) but there exists \(\odbeta < \odalpha\) such that \(\sU[\odalpha] = \sU[\odbeta]\) and \(\sX[\odalpha] = \sX[\odbeta]\), we say that the construction is \emph{stuck} at stage \(\odalpha\).
    Finally, if neither is the case, we say the construction \emph{progresses} at stage \(\odalpha\).
\end{construction}

Clearly, the construction either eventually terminates or eventually gets stuck.

\begin{lemma}\zlabel{prop:selective-filtration:termination}
	Let \(\odalpha\) and \(\odbeta\) be ordinals with \(\odalpha \leq \odbeta\).
	Then \begin{thmenumerate}
		\item\zlabel{prop:selective-filtration:termination:1}
		if the construction has terminated at stage \(\odalpha\), then it has terminated at stage \(\odbeta\),

		\item\zlabel{prop:selective-filtration:termination:2}
		if the construction is stuck at stage \(\odalpha\), then it is stuck at stage \(\odbeta\),

		\item\zlabel{prop:selective-filtration:termination:3}
		if the construction progresses at stage \(\odbeta\), then it progresses at stage \(\odalpha\), and

		\item\zlabel{prop:selective-filtration:termination:4}
		if \(\odalpha\) is a cardinal strictly larger than the cardinality of \(\sX\), then the construction has either terminated or is stuck at stage \(\odalpha\).
	\end{thmenumerate}
\end{lemma}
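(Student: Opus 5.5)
The whole lemma rests on one observation: the pair \(\homotuple{\sX[\odalpha], \sU[\odalpha]}\) determines every later stage. Both the successor step and the limit step depend only on the earlier sets, and in a monotone way. For the successor step this is immediate: \(\sX[\odalpha+1]\) and \(\sU[\odalpha+1]\) are computed from \(\frmM[\odalpha]\), \(\relBisim[\odalpha]\) and \(\fq[\odalpha]\), which depend only on \(\sX[\odalpha]\) and \(\sU[\odalpha]\). I would also record that the sequences are monotone: \(\sX[\odgamma] \sSubsetEq \sX[\odalpha]\) and \(\sU[\odalpha] \sSubsetEq \sU[\odgamma]\) for \(\odalpha \leq \odgamma\). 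This follows by a routine transfinite induction from the inclusions already noted in \zcref{constr:selective-filtration}.

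\emph{Item (i).} Suppose \(\sX[\odalpha] = \sU[\odalpha]\). Then \(\sD = \sEmpty\) in \(\frmM[tilde, \odalpha]\), so \(\sTopClust\) is empty. Hence \(\sU[\odalpha+1] = \sU[\odalpha]\). Since no point lies outside \(\sU[\odalpha+1]\), the eliminable set is empty, so \(\sX[\odalpha+1] = \sX[\odalpha]\). At a limit \(\odlambda\) where all earlier stages \(\odgamma \geq \odalpha\) equal \(\sX[\odalpha] = \sU[\odalpha]\), we get \(\sU[\odlambda] = \sX[\odlambda, '] = \sX[\odalpha]\). The eliminable set w.r.t.\@ \(\sU[\odlambda]\) consists of points outside \(\sU[\odlambda]\), so it is empty. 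Transfinite induction on \(\odbeta \geq \odalpha\) then gives \(\sX[\odbeta] = \sU[\odbeta] = \sX[\odalpha]\).

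\emph{Item (ii).} Suppose the construction is stuck at \(\odalpha\), witnessed by \(\odbeta_0 < \odalpha\) with equal pairs. I would show that the sequence of pairs is constant from \(\odbeta_0\) on. By monotonicity, every \(\odgamma\) with \(\odbeta_0 \leq \odgamma \leq \odalpha\) has the same pair as \(\odbeta_0\). In particular \(\odbeta_0 + 1 \leq \odalpha\), so the pair at \(\odbeta_0+1\) equals that at \(\odbeta_0\). By determinism of the successor step, the pair is then a fixed point of the successor operation. A transfinite induction shows it stays constant forever: successor steps reproduce it, and limit steps give the same \(\sU\) and \(\sX[\odlambda, ']\). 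The remaining point is that removing eliminable points at a limit removes nothing. This holds because the limit pair \(\sX[\odlambda, ']\), \(\sU[\odlambda]\) coincides with the fixed pair, and applying the successor rule to that pair removes nothing. Here I need \(\sU[\odbeta_0+1] = \sU[\odbeta_0]\) to see that the eliminable set w.r.t.\@ the unchanged upset is empty. So the pair at \(\odbeta\) equals that at \(\odbeta_0 < \odbeta\) for every \(\odbeta \geq \odalpha\). Moreover \(\sX[\odbeta] \neq \sU[\odbeta]\), since otherwise the construction would have terminated at \(\odalpha\) already. Hence it is stuck at \(\odbeta\).

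\emph{Item (iii)} is the contrapositive combination of (i) and (ii). If the construction does not progress at \(\odalpha\), then it has terminated or is stuck there, and this persists to \(\odbeta\).

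\emph{Item (iv)} is a cardinality argument. Suppose the construction progresses at every stage below a cardinal \(\cdkappa > \card{\sX}\); by (iii) this is the only case to rule out. Then the map \(\odgamma \mapsto \homotuple{\sX[\odgamma], \sU[\odgamma]}\) is injective on \(\odgamma < \cdkappa\). The pairs are monotone, so distinct pairs differ in at least one of the two sequences. I would pick, for each \(\odgamma\), a point in \(\sX[\odgamma] \sMinus \sX[\odgamma+1]\) or in \(\sU[\odgamma+1] \sMinus \sU[\odgamma]\). Once removed from \(\sX\), a point never returns, and once added to \(\sU\) it never leaves. So this yields an injection of \(\cdkappa\) into \(\sX \sDisjointUnion \sX\), which contradicts \(\cdkappa > \card{\sX}\) when \(\sX\) is infinite; finite \(\sX\) is handled trivially. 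Strictly, the injectivity of consecutive pairs needs progress at \(\odgamma+1\), which holds for all \(\odgamma+1 < \cdkappa\).

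The main obstacle I expect is the limit-step bookkeeping in (ii): checking that the limit stage reproduces the fixed pair exactly.
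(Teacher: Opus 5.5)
The paper states this lemma without proof, so your argument has to stand on its own. Items (i)--(iii), and item (iv) for infinite $\kappa$, are correct as you give them. Item (i) is in fact immediate from the chain $U_\alpha \subseteq U_\beta \subseteq X_\beta \subseteq X_\alpha$ for $\alpha\le\beta$, so the separate induction there is unnecessary. Your limit-step bookkeeping in (ii), including the use of $U_{\beta_0+1}=U_{\beta_0}$ to see that nothing is eliminated, is right.

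\textbf{Gap.} Your claim that finite $X$ is ``handled trivially'' in (iv) does not hold as argued. The statement includes finite cardinals: if $\lvert X\rvert=n$, it asserts that the construction has terminated or is stuck already at stage $n+1$. Your injection lands in $X\sqcup X$, which has $2n$ elements. So it only produces a contradiction once $\kappa$ exceeds roughly $2n$. For finite $\kappa$ with $n<\kappa\le 2n+1$ it gives nothing.

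\textbf{Fix.} Track the single set $X_\gamma\setminus U_\gamma$ instead of the two sequences separately. This family is decreasing in $\gamma$. The step from $\gamma$ to $\gamma+1$ is strict whenever the pairs differ:
a point of $X_\gamma\setminus X_{\gamma+1}$ is not in $U_\gamma\subseteq U_{\gamma+1}\subseteq X_{\gamma+1}$, so it lies in $X_\gamma\setminus U_\gamma$ but not in $X_{\gamma+1}\setminus U_{\gamma+1}$;
a point of $U_{\gamma+1}\setminus U_\gamma$ lies in $X_\gamma\setminus U_\gamma$ but not in $X_{\gamma+1}\setminus U_{\gamma+1}$.

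Now assume progress at $\kappa$ itself, which is the actual negation of the claim. By (iii), the pairs at $\gamma$ and $\gamma+1\le\kappa$ then differ for every $\gamma<\kappa$. Choose $x_\gamma\in (X_\gamma\setminus U_\gamma)\setminus(X_{\gamma+1}\setminus U_{\gamma+1})$. For $\gamma<\gamma'$ we have $x_{\gamma'}\in X_{\gamma+1}\setminus U_{\gamma+1}$ while $x_\gamma\notin X_{\gamma+1}\setminus U_{\gamma+1}$, so $\gamma\mapsto x_\gamma$ is an injection $\kappa\to X$. This contradicts $\kappa>\lvert X\rvert$ uniformly for finite and infinite $\kappa$, and it avoids the cardinal arithmetic $\lvert X\sqcup X\rvert=\lvert X\rvert$.

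The gap is harmless for how the paper uses (iv), since there an infinite cardinal suffices. It is nonetheless a hole in the lemma as stated.
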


Therefore, we want a sufficient criterion for the construction not to get stuck.
We will give such a criterion in the next section.
First, we state some simple properties about the construction.

\begin{lemma}\zlabel{lemma:selective-filtration:basic-props}
    Let \(\odalpha\) be an ordinal.
    \begin{thmenumerate}
        \item\zlabel{lemma:head-is-union-tops}
        \(\sU[\odalpha] = \sSetUnion\Set{\sT[\odbeta]}[\odbeta < \odalpha]\) and \(\frmM[tilde,\odalpha]\) is collapsed relative to \(\fq[\odalpha]{\sU[\odalpha]}\).
        \item\zlabel{lemma:selective-filtration:head-bisim-static}
        If \(\odbeta\) is an ordinal with \(\odalpha \leq \odbeta\), and \(\vx, \vy \in \sU[\odalpha]\),
        then \(\vx \relBisim[\odalpha] \vy\) iff \(\vx \relBisim[\odbeta] \vy\).
        \item\zlabel{lemma:selective-filtration:bisim-next-head}
        If \(\vx, \vy \in \sU[\odalpha + 1]\) and \(\vx \relBisim[\odalpha] \vy\), then \(\vx \relBisim[\odalpha + 1] \vy\).
        \item\zlabel{lemma:selective-filtration:related-ty-equal-points-same-top}
        If \(\vx, \vy \in \sU[\odalpha]\)  such that \(\vy \in \relR{\vx}\) and \(\vx \relEquiv[ty=\lthSigma] \vy\), then there exists \(\odbeta < \odalpha\) such that \(\vx, \vy \in \sT[\odbeta]\).
    \end{thmenumerate}
\end{lemma}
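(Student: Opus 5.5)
The plan is to prove the four items more or less in order, with (i) by transfinite induction and (ii)--(iv) reducing to two structural facts. The first fact is that, for \(\odalpha \leq \odbeta\), \(\sU[\odalpha]\) is an upset in \(\frmM[\odbeta]\). Indeed, \(\sU[\odalpha] \sSubsetEq \sX[\odbeta] \sSubsetEq \sX[\odalpha]\), and \(\sU[\odalpha]\) is an upset in \(\frmM[\odalpha]\). The second fact is that \(\sT[\odalpha]\), and hence \(\sU[\odalpha+1]\), is a union of \(\relBisim[\odalpha]\)-classes, because it is a preimage under \(\fq[\odalpha]\).

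For (i), the equality \(\sU[\odalpha] = \sSetUnion\Set{\sT[\odbeta]}[\odbeta<\odalpha]\) follows by induction from \(\sU[0]=\sEmpty\), \(\sU[\odalpha+1] = \sT[\odalpha]\sUnion\sU[\odalpha]\) and the limit clause. For collapsedness, first note that \(\fq[\odalpha]{\sU[\odalpha]}\) is well behaved: \(\relBisim[\odalpha]\) never identifies points of \(\sU[\odalpha]\) with points outside it, so this image is exactly the image of a saturated set. Now let \(\relZ\) be a bisimulation on \(\frmM[tilde,\odalpha]\) that respects \(\fq[\odalpha]{\sU[\odalpha]}\). Its pullback \(\fq[\odalpha,preimage]\circ \relZ\circ \fq[\odalpha]\) is a bisimulation on \(\frmM[\odalpha]\) respecting \(\sU[\odalpha]\), since \(\fq[\odalpha]\) is a p-morphism. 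By maximality this pullback is contained in \(\relBisim[\odalpha]\), so \(\relZ\) is contained in the diagonal.

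For (ii), the key observation is that for \(\vx\in\sU[\odalpha]\), every point reachable from \(\vx\) in \(\frmM[\odbeta]\) lies in \(\sU[\odalpha]\sSubsetEq\sU[\odbeta]\). Consequently, the fresh variable is constantly true along these paths, both when it is interpreted as \(\sU[\odalpha]\) and when it is interpreted as \(\sU[\odbeta]\). Bisimilarity of two points of \(\sU[\odalpha]\) is therefore decided inside the submodel generated by \(\sU[\odalpha]\). This submodel is the restriction of \(\frmM\) to \(\sU[\odalpha]\) in both \(\frmM[\odalpha]\) and \(\frmM[\odbeta]\). Concretely, I would show that the restriction of \(\relBisim[\odalpha]\) to \(\sU[\odalpha]\) is a bisimulation on \(\frmM[\odbeta]\) respecting \(\sU[\odbeta]\), and symmetrically for \(\relBisim[\odbeta]\).

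Item (iii) follows the same pattern with \(\sU[\odalpha+1]\) in place of \(\sU[\odalpha]\). The set \(\sU[\odalpha+1]\) is an upset in \(\frmM[\odalpha]\) by \zcref{lemma:top-clust:upset}, and it is \(\relBisim[\odalpha]\)-saturated. Hence \(\relBisim[\odalpha]\) restricted to \(\sU[\odalpha+1]\) is a bisimulation of the generated submodel, which is unchanged in \(\frmM[\odalpha+1]\). This restricted relation trivially respects \(\sU[\odalpha+1]\), since it lives entirely inside that set. It is therefore contained in \(\relBisim[\odalpha+1]\).

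For (iv), use (i) to write \(\vx\in\sT[\odbeta]\) and \(\vy\in\sT[\odgamma]\) with \(\odbeta,\odgamma<\odalpha\), and show \(\odbeta=\odgamma\).
\begin{proofcases}
\item Suppose \(\odbeta<\odgamma\). Then \(\vx\in\sU[\odgamma]\), which is an upset in \(\frmM[\odgamma]\), and \(\vy\in\sX[\odgamma]\) with \(\vy\in\relR{\vx}\). This forces \(\vy\in\sU[\odgamma]\), contradicting \(\vy\in\sT[\odgamma]\sSubsetEq\sX[\odgamma]\sMinus\sU[\odgamma]\).
\item Suppose \(\odgamma<\odbeta\). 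Then \(\vy\in\sU[\odbeta]\cap\relR{\vx}\), and \(\vy\) agrees with \(\vx\) on every \(\lphi\in\lthSigma\). Also \(\vx\notin\sU[\odbeta]\), and \(\odbeta\neq0\) because \(\sU[0]=\sEmpty\). So \(\vx\) is \(\lthSigma\)-eliminable w.r.t.\ \(\sU[\odbeta]\) in the model from which \(\sX[\odbeta]\) is carved out, whether \(\odbeta\) is a successor or a limit. Hence \(\vx\notin\sX[\odbeta]\), contradicting \(\vx\in\sT[\odbeta]\sSubsetEq\sX[\odbeta]\).
\end{proofcases}
The main obstacle is the bookkeeping in (ii) and (iii). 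There one must check carefully that the generated submodels coincide across stages, in particular that no points of \(\sU[\odalpha]\) are later eliminated. This holds since \(\sU[\odalpha]\sSubsetEq\sX[\odbeta]\) for all \(\odbeta\ge\odalpha\), which itself follows by induction because eliminable points always lie outside the current upset.
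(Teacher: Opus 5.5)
Your proposal is correct and follows the paper's proof: (i)--(iii) spell out, via the upset property of \(\sU[\odalpha]\) at later stages, what the paper treats as clear, and (iv) uses the same dichotomy on the stages \(\odbeta,\odgamma\), settled by the upset property and eliminability. The only difference is in the case \(\odgamma<\odbeta\): you invoke eliminability w.r.t.\ \(\sU[\odbeta]\) at the step where \(\sX[\odbeta]\) is formed, which forces your successor/limit split, whereas the paper notes that \(\vx\) would already lie in \(\ElimCrit[m=\frmM[\odgamma], u=\sU[\odgamma + 1]]\), which only involves successor stages.
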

\begin{proof}
    \zref{lemma:head-is-union-tops} is clear. For \zref{lemma:selective-filtration:head-bisim-static} note that both \(\relBisim[\odalpha]\) and \(\relBisim[\odbeta]\) restricted to \(\sU[\odalpha]\) are equal and for \zref{lemma:selective-filtration:bisim-next-head} additionally that \(\relBisim[\odalpha]\) does not identify points in \(\sU[\odalpha]\) with points in \(\sT[\odalpha]\).

	For \zref{lemma:selective-filtration:related-ty-equal-points-same-top}, note that by \zref{lemma:head-is-union-tops} there exist \(\odbeta, \odgamma < \odalpha\) such that \(\vx \in \sT[\odbeta]\) and \(\vy \in \sT[\odgamma]\).
	Since \(\sU[\odbeta + 1]\) is upward closed in \(\frmM[\odbeta]\) and \(\vx \in \sT[\odbeta] \sSubsetEq \sU[\odbeta + 1]\), we get \(\vy \in \sU[\odbeta + 1]\)  and hence \(\odgamma \leq \odbeta\).
	If the inequality were strict, then \(\vx \in \ElimCrit[m=\frmM[\odgamma], u=\sU[\odgamma + 1]]\), contradicting that \(\vx \in \sT[\odbeta] \sSubsetEq \sX[\odgamma + 1]\).
\end{proof}

\begin{lemma}\zlabel{lemma:selection-construction-sigma-subset}
	For any ordinal \(\odalpha\), the set \(\sX[\odalpha]\) is a \(\lthSigma\)-subset of \(\frmM\).
\end{lemma}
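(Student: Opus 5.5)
The plan is to verify the two clauses in the definition of a \(\lthSigma\)-subset by transfinite induction on \(\odalpha\). The induction is needed because the construction computes eliminability inside the restricted models \(\frmM[\odbeta]\), while the statement refers to truth in \(\frmM\). So the induction hypothesis will be that \(\sX[\odbeta]\) is a \(\lthSigma\)-subset of \(\frmM\) for all \(\odbeta < \odalpha\). By \zcref{prop:Sigma-sub-p-morphism-preserves-truth}, applied to the identity sub-p-morphism from \(\frmM\) onto \(\frmM[\odbeta]\), this gives \(\fTy[\lthSigma]{\frmM[\odbeta], \vx} = \fTy[\lthSigma]{\frmM, \vx}\) for \(\vx \in \sX[\odbeta]\). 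For the limit stage, I would first check the analogous statement for \(\frmM[restrict={\sX[\odlambda, ']}]\). This should follow from the argument below applied to \(\sX[\odlambda, ']\), since that argument only uses that every removed point was removed at some earlier stage. Consequently, for every \(\odbeta\), \(\ElimCrit[m=\frmM[\odbeta], u=\sU[\odbeta + 1]]\) coincides with the set of points of \(\sX[\odbeta] \sMinus \sU[\odbeta + 1]\) that are \(\lthSigma\)-eliminable w.r.t.\@ \(\sU[\odbeta+1]\) in the sense of \(\frmM\), and similarly at limits.

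For the first clause, note that \(\ElimCrit[m=\frmM, u=\sX[\odalpha]] \sSubsetEq \sX \sMinus \sX[\odalpha]\) holds by definition, so only the reverse inclusion needs work. Let \(\vx \notin \sX[\odalpha]\). Then \(\vx\) was removed at a unique stage: either at a successor stage \(\odbeta + 1 \leq \odalpha\), being in \(\ElimCrit[m=\frmM[\odbeta], u=\sU[\odbeta + 1]]\), or at a limit stage \(\odlambda \leq \odalpha\), being eliminable w.r.t.\@ \(\sU[\odlambda]\). In either case, for each \(\lphi \in \lthSigma\) there is a witness in \(\relR{\vx} \sIntersection \sU[\odgamma]\) for some \(\odgamma \leq \odalpha\) agreeing with \(\vx\) on \(\lphi\), with truth in the sense of \(\frmM\) by the previous paragraph. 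Since the \(\sU\)'s increase and \(\sU[\odgamma] \sSubsetEq \sU[\odalpha] \sSubsetEq \sX[\odalpha]\), the witness lies in \(\fSpan[\sX[\odalpha]]{\vx}\), so \(\vx\) is eliminable w.r.t.\@ \(\sX[\odalpha]\).

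For the cluster clause, take \(\vx \in \sX[\odalpha]\), \(\lphi \in \lthSigma\), and \(\vy \in \relR{\vx} \sIntersection \relR[converse]{\vx}\) agreeing with \(\vx\) on \(\lphi\). If \(\vy \in \sX[\odalpha]\), take \(\vy['] \coloneqq \vy\). Otherwise \(\vy\) was removed at some stage as above, so there is \(\vz \in \relR{\vy} \sIntersection \sU[\odgamma]\) agreeing with \(\vy\), hence with \(\vx\), on \(\lphi\). Here \(\sU[\odgamma]\) is \(\sU[\odbeta+1]\) or \(\sU[\odlambda]\), respectively.
\begin{itemize}
    \item If \(\vz \neq \vx\), weak transitivity gives \(\vz \in \relR{\vx}\), and \(\vz \in \sU[\odgamma] \sSubsetEq \sX[\odalpha]\), so \(\vy['] \coloneqq \vz\) works.
    \item If \(\vz = \vx\), then \(\vx \in \sU[\odgamma]\), which is an upset in the model from which \(\vy\) was removed: \(\frmM[\odbeta]\) in the successor case, or \(\frmM[restrict={\sX[\odlambda, ']}]\) in the limit case. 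Since \(\vy\) still belonged to that model and \(\vy \in \relR{\vx}\), we get \(\vy \in \sU[\odgamma]\). But \(\sU[\odgamma]\) is contained in the set surviving that stage, which contradicts the removal of \(\vy\).
\end{itemize}

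The main obstacle I expect is the bookkeeping of where truth is evaluated, namely \(\frmM[\odbeta]\) versus \(\frmM\), particularly at limit stages. There one must apply the type-preservation argument to the intermediate set \(\sX[\odlambda, ']\) rather than to some \(\sX[\odbeta]\). I would handle this by proving the first clause for \(\sX[\odlambda, ']\) verbatim before treating \(\sX[\odlambda]\).
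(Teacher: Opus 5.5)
Your proposal is correct and is essentially the paper's proof. The paper runs the same transfinite induction, but on the pre-elimination sets \(\sX[\odalpha, ']\) (namely \(\sX[\odalpha - 1]\) at successors and the intersection at limits, so that \(\sX[\odalpha]\) is the stage-\((\odalpha+1)\) such set), which absorbs exactly the limit bookkeeping you flag; otherwise it uses the same ingredients: transfer of eliminability to \(\frmM\) via \zcref{prop:Sigma-sub-p-morphism-preserves-truth}, witnesses in the increasing \(\sU\)'s, and weak transitivity plus the upset property for the cluster clause.

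One small correction: for \(\sX[\odlambda, ']\) you need both clauses, not just the first, since type preservation via that proposition requires the full \(\lthSigma\)-subset property. Your induction hypothesis should also carry this for all earlier limit stages.
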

\begin{proof}
    For a successor ordinal \(\odalpha\), define \(\sX[\odalpha, '] \coloneqq \sX[\odalpha - 1]\), and for a limit ordinal \(\odlambda\), let \(\sX[\odlambda, ']\) be as in the construction.
    Write \(\frmM[\odbeta, ']\) for the restriction of \(\frmM\) to \(\sX[\odbeta, ']\).
    We prove by induction on ordinals \(\odalpha\) that \(\sX[\odalpha, ']\) is a \(\lthSigma\)-subset of \(\frmM\).
    Suppose, as induction hypothesis, that \(\sX[\odbeta, ']\) a \(\lthSigma\)-subset of \(\frmM\) for every \(\odbeta < \odalpha\).
    First, let \(\vx \in \sX \sMinus \sX[\odalpha, ']\), say \(\vx \in \ElimCrit[m=\frmM[{\odbeta, '}], u=\sU[\odbeta]]\) for some \(\odbeta < \odalpha\).
    By the induction hypothesis and \zcref{prop:Sigma-sub-p-morphism-preserves-truth}, \(\ElimCrit[m=\frmM[{\odbeta, '}], u=\sU[\odbeta]] \sSubsetEq \ElimCrit[m=\frmM, u=\sU[\odbeta]]\).
    Since \(\sU[\odbeta] \sSubsetEq \sX[\odalpha, ']\), we conclude that \(\vx \in \ElimCrit[m=\frmM, u=\sX[{\odalpha, '}]]\).

    Second, let \(\lphi \in \lthSigma\), \(\vx \in \sX[\odalpha, ']\) and \(\vy \in \relR{\vx} \sIntersection \relR[converse]{\vx}\) such that \(\vx \in \sem{\lphi}[\frmM] \iff \vy \in \sem{\lphi}[\frmM]\).
If \(\vy \in \sX[\odalpha, ']\) then we are done.
	Otherwise, \(\vy \in \ElimCrit[m=\frmM[{\odbeta, '}], u=\sU[\odbeta]]\) for some \(\odbeta < \odalpha\).
	Using the induction hypothesis and \zcref{prop:Sigma-sub-p-morphism-preserves-truth}, there exists \(\vy['] \in \sU[\odbeta] \sIntersection \relR{\vy}\) with \(\vy \in \sem{\lphi}[\frmM] \iff \vy['] \in \sem{\lphi}[\frmM]\).
	Suppose \(\vy['] \notin \relR{\vx}\).
	Then, by weak transitivity, \(\vy['] = \vx\), yielding \(\vx \in \sU[\odbeta]\).
	Since \(\sU[\odbeta]\) is an upset in \(\frmM[\odbeta, ']\), we get \(\vy \in \sU[\odbeta]\), but this contradicts \(\vy \in \ElimCrit[m=\frmM[{\odbeta, '}], u=\sU[\odbeta]]\).
	Therefore \(\vy['] \in \relR{\vx}\).
\end{proof}

The next two lemmata will be used in the proof of the Finite Width Theorem.

\begin{lemma}\zlabel{lemma:selective-filtration:head-conv-pre-wf}
    For any ordinal \(\odalpha\), the restriction of \(\frmM[\odalpha, tilde]\) to \(\fq[\odalpha]{\sU[\odalpha]}\) is conversely pre-well-founded. \end{lemma}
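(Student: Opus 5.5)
The plan is to attach to every point of \(\sU[\odalpha]\) an ordinal rank and show that a strict step upwards in the quotient makes the rank drop strictly. By \zcref{lemma:head-is-union-tops}, \(\sU[\odalpha] = \sSetUnion\Set{\sT[\odbeta]}[\odbeta < \odalpha]\), and the sets \(\sT[\odbeta]\) are pairwise disjoint. Indeed, \(\sT[\odbeta]\) lies outside \(\sU[\odbeta]\), while \(\sT[\odgamma] \sSubsetEq \sU[\odbeta]\) for \(\odgamma < \odbeta\). So for \(\vx \in \sU[\odalpha]\) we may define \(\rho(\vx)\) as the unique \(\odbeta < \odalpha\) with \(\vx \in \sT[\odbeta]\). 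An infinite strictly ascending sequence in the restriction of \(\frmM[\odalpha, tilde]\) to \(\fq[\odalpha]{\sU[\odalpha]}\) would then yield an infinite strictly decreasing sequence of ordinals, which is impossible.

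First, \(\rho\) is non-increasing along \(\relR\). If \(\vx, \vy \in \sU[\odalpha]\) and \(\vy \in \relR{\vx}\), put \(\odbeta = \rho(\vx)\). Since \(\sU[\odbeta+1]\) is an upset in \(\frmM[\odbeta]\) (\zcref{lemma:top-clust:upset}) and \(\vy \in \sX[\odalpha] \sSubsetEq \sX[\odbeta]\), we get \(\vy \in \sU[\odbeta+1]\). Hence \(\rho(\vy) \leq \odbeta\), exactly as in the proof of \zcref{lemma:selective-filtration:related-ty-equal-points-same-top}.

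Second, an edge inside one top is never strict in the stage-\(\odalpha\) quotient. Suppose \(\vx, \vy \in \sT[\odbeta]\) and \(\vy \in \relR{\vx}\).
\begin{itemize}
\item In \(\frmM[tilde,\odbeta]\), the points \(\fq[\odbeta]{\vx}, \fq[\odbeta]{\vy}\) lie in \(\sTopClust[\fq[\odbeta]{\sU[\odbeta]}]\) and \(\fq[\odbeta]{\vy}\) is a successor of \(\fq[\odbeta]{\vx}\) outside \(\fq[\odbeta]{\sU[\odbeta]}\). By \zcref{def:top-cluster}, \(\fq[\odbeta]{\vy}\) sees \(\fq[\odbeta]{\vx}\).
\item Using the back condition of \(\relBisim[\odbeta]\), pick \(\vy['] \relBisim[\odbeta] \vy\) and \(\vx['] \relBisim[\odbeta] \vx\) with \(\vx['] \in \relR{\vy[']}\). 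Since \(\sT[\odbeta]\) is a \(\fq[\odbeta]\)-preimage, it is saturated, so \(\vx['], \vy['] \in \sT[\odbeta] \sSubsetEq \sU[\odbeta+1] \sSubsetEq \sX[\odalpha]\).
\item By \zcref{lemma:selective-filtration:bisim-next-head} and \zcref{lemma:selective-filtration:head-bisim-static}, \(\vy['] \relBisim[\odalpha] \vy\) and \(\vx['] \relBisim[\odalpha] \vx\). Therefore \(\fq[\odalpha]{\vy}\) sees \(\fq[\odalpha]{\vx}\) in \(\frmM[tilde,\odalpha]\).
\end{itemize}

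Combining the two steps: suppose \(\fq[\odalpha]{\vy}\) is a strict successor of \(\fq[\odalpha]{\vx}\), realised by representatives with \(\vy \in \relR{\vx}\). Then \(\rho(\vy) \leq \rho(\vx)\), and equality is excluded by the second step, so \(\rho(\vy) < \rho(\vx)\).

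The main obstacle is that a strict chain in the quotient need not be realised by a single chain of representatives. Consecutive edges may use different members of the same \(\relBisim[\odalpha]\)-class, so I must show that \(\rho\) is \(\relBisim[\odalpha]\)-invariant on \(\sU[\odalpha]\), i.e.\ that \(\relBisim[\odalpha]\) never identifies a point of \(\sT[\odbeta]\) with a point of \(\sT[\odgamma]\) for \(\odgamma \neq \odbeta\).

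I would first try to show that the restriction of \(\relBisim[\odalpha]\) to \(\sU[\odalpha]\) respects each \(\sU[\odbeta]\). Suppose \(\vx \in \sU[\odbeta]\), \(\vx['] \in \sU[\odalpha] \sMinus \sU[\odbeta]\) and \(\vx \relBisim[\odalpha] \vx[']\). The plan is to use that \(\sU[\odbeta]\) is an upset containing all successors of \(\vx\) in \(\sX[\odalpha]\). With this, the relation \(\relBisim[\odalpha] \sIntersection (\sU[\odbeta] \sCartProd \sU[\odbeta])\), together with the diagonal, can be turned into a bisimulation on \(\frmM[\odbeta]\) relative to \(\sU[\odbeta]\). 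This uses \zcref{lemma:selective-filtration:head-bisim-static} and the collapsedness in \zcref{lemma:head-is-union-tops}. From that I expect to derive, by induction on \(\odbeta\), that \(\fq[\odalpha]{\sU[\odbeta]}\) has \(\fq[\odalpha]\)-preimage exactly \(\sU[\odbeta]\).

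Given invariance, \(\rho\) descends to the quotient. The two steps above then show that \(\rho\) strictly decreases along strict successors in the restriction to \(\fq[\odalpha]{\sU[\odalpha]}\), and well-foundedness of the ordinals finishes the proof.
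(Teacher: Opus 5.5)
Your first two steps are correct. Together they give the key local fact: if \(x, y \in U_\alpha\), \(y \in R(x)\) and \(q_\alpha(y)\) is a strict successor of \(q_\alpha(x)\) in \(\widetilde{\mathfrak{M}}_\alpha\), then \(\rho(y) < \rho(x)\).

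The gap is in the last part, where you make the conclusion depend on \(\rho\) being \(\rightleftharpoons_\alpha\)-invariant on \(U_\alpha\). You only announce this, and your sketch cannot establish it. The relation \({\rightleftharpoons_\alpha} \cap (U_\beta \times U_\beta)\), together with the diagonal, never contains the offending pair \((x, x')\), since \(x' \notin U_\beta\). Showing that this relation is a bisimulation relative to \(U_\beta\) on \(\mathfrak{M}_\beta\) just amounts to item (ii) of the lemma on basic properties. In \(\mathfrak{M}_\beta\) the marker \(U_\beta\) separates \(x\) from \(x'\) trivially. Between stages \(\beta\) and \(\alpha\), however, points are removed and \(x'\) itself acquires the marker. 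So nothing you invoke excludes an identification \(x \rightleftharpoons_\alpha x'\). Ruling it out would require analysing why \(x'\) did not enter an earlier top and yet survived the earlier eliminations. As written, the proof is incomplete precisely at the step you call the main obstacle.

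The missing observation is that this obstacle is not real. Since \(q_\alpha\) is a p-morphism and \(U_\alpha\) is an upset of \(\mathfrak{M}_\alpha\), any strict chain in the quotient lifts to a single chain of representatives. Concretely, let \(z_0, z_1, \dots\) be an infinite sequence in \(q_\alpha(U_\alpha)\) with each \(z_{n+1}\) a strict successor of \(z_n\). Take any \(x_0 \in q_\alpha^{-1}(z_0)\). By the back condition, choose \(x_{n+1} \in R(x_n)\) with \(q_\alpha(x_{n+1}) = z_{n+1}\), using dependent choice. Each edge from \(x_n\) to \(x_{n+1}\) is then exactly of the form your combination step handles. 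Hence \(\rho(x_0) > \rho(x_1) > \cdots\), which is impossible, and invariance of \(\rho\) is never needed. Equivalently, \(z \mapsto \min\{\rho(x) : x \in q_\alpha^{-1}(z)\}\) strictly decreases along strict edges.

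With this replacement your argument is a correct alternative to the paper's. The paper proceeds by transfinite induction on \(\alpha\), with the limit step being routine. In the successor step it uses two facts:
\begin{itemize}
\item every strict successor of a point of \(\mathrm{Top}(U)\) lies in \(U\);
\item by item (iii), \(\widetilde{\mathfrak{M}}_{\alpha+1}\) restricted to \(q_{\alpha+1}(U_{\alpha+1})\) is a p-morphic image of \(\widetilde{\mathfrak{M}}_\alpha\) restricted to \(q_\alpha(U_{\alpha+1})\).
\end{itemize}
The fact that p-morphic images preserve converse pre-well-foundedness is precisely the chain lifting above. Your rank function packages the whole induction into a single ordinal invariant and needs no separate limit case. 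In exchange, your second step does by hand, for one top layer at a time, the work the paper delegates to items (ii) and (iii).
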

\begin{proof}
    By induction on \(\odalpha\).
    For \(\odalpha = 0\) this is trivial.
    The limit step is routine.

    For the successor step, suppose that the restriction of \(\frmM[\odalpha, tilde]\) to \(\sU \coloneqq \fq[\odalpha]{\sU[\odalpha]}\) is conversely pre-well-founded.
Then clearly also the restriction to \(\sU \sUnion \sTopClust[\sU]\) is conversely pre-well-founded.
    Note that, by construction,
    \(\fq[\odalpha]{\sU[\odalpha + 1]} = \sU \sUnion \sTopClust[\sU]\).
    However, by \zsubref{lemma:selective-filtration:basic-props}{lemma:selective-filtration:bisim-next-head}, \(\frmM[\odalpha + 1, tilde]\) restricted to \(\fq[\odalpha + 1]{\sU[\odalpha + 1]}\) is a p-morphic image of \(\frmM[\odalpha, tilde]\) restricted to \(\fq[\odalpha]{\sU[\odalpha + 1]}\).
    But the p-morphic image of a conversely pre-well-founded frame is conversely pre-well-founded.
\end{proof}

\begin{lemma}\zlabel{lemma:selective-filtration:head-no-eliminable-points}
    For any ordinal \(\odalpha\), no \(\vx[tilde] \in \fq[\odalpha]{\sU[\odalpha]}\) is \(\lthSigma\)-eliminable in \(\frmM[tilde, \odalpha]\).
\end{lemma}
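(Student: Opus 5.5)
Suppose, towards a contradiction, that some \(\vx[tilde] \in \fq[\odalpha]{\sU[\odalpha]}\) is \(\lthSigma\)-eliminable in \(\frmM[tilde, \odalpha]\). Pick \(\vx \in \sU[\odalpha]\) with \(\fq[\odalpha]{\vx} = \vx[tilde]\). By \zsubref{lemma:selective-filtration:basic-props}{lemma:head-is-union-tops} there is \(\odbeta < \odalpha\) with \(\vx \in \sT[\odbeta]\). The plan is to transfer the eliminability witnesses of \(\vx[tilde]\) back to \(\frmM\) as points of \(\sU[\odbeta]\). This will show that \(\vx\) is eliminable w.r.t.\@ \(\sU[\odbeta]\) at the stage where \(\sX[\odbeta]\) was formed. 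Since \(\vx \in \sX[\odbeta]\), that is a contradiction.

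\emph{Step 1: types are preserved.} By \zcref{lemma:selection-construction-sigma-subset} and \zcref{prop:Sigma-sub-p-morphism-preserves-truth}, the \(\lthSigma\)-type of a point of \(\sX[\odgamma]\) is the same in \(\frmM\) and in \(\frmM[\odgamma]\), for every \(\odgamma\). Moreover \(\fq[\odalpha]\) is a p-morphism, so \(\vz\) and \(\fq[\odalpha]{\vz}\) have the same \(\lthSigma\)-type. Hence we may compare types freely across \(\frmM\), \(\frmM[\odgamma]\) and \(\frmM[tilde, \odalpha]\).

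\emph{Step 2: lifting a witness.} Fix \(\lphi \in \lthSigma\). By assumption there is a strict successor \(\vz[tilde]\) of \(\vx[tilde]\) in \(\frmM[tilde, \odalpha]\) that agrees with \(\vx[tilde]\) on \(\lphi\). By the back condition of \(\fq[\odalpha]\), there is \(\vz \in \relR[\odalpha]{\vx}\) with \(\fq[\odalpha]{\vz} = \vz[tilde]\). Since \(\fq[\odalpha]\) is a bisimulation relative to \(\sU[\odalpha]\) and \(\sU[\odalpha]\) is an upset, we get \(\vz \in \sU[\odalpha]\). Also \(\vz \in \sX[\odalpha] \sSubsetEq \sX[\odbeta]\), and \(\sU[\odbeta+1]\) is an upset in \(\frmM[\odbeta]\) containing \(\vx\). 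So \(\vz \in \sT[\odgamma]\) for some \(\odgamma \leq \odbeta\).

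\emph{Step 3: ruling out \(\odgamma = \odbeta\).} This is the main obstacle. Suppose \(\odgamma = \odbeta\). Then \(\fq[\odbeta]{\vz}\) lies in \(\fSpan[\sD]{\fq[\odbeta]{\vx}}\), where \(\sD\) is the complement of \(\fq[\odbeta]{\sU[\odbeta]}\). By the definition of \(\sTopClust\), \(\fq[\odbeta]{\vx}\) is then a successor of \(\fq[\odbeta]{\vz}\) in \(\frmM[tilde, \odbeta]\). Lifting along \(\fq[\odbeta]\) gives \(\vx['] \in \relR{\vz}\) with \(\vx['] \relBisim[\odbeta] \vx\), and \(\vx['] \in \sU[\odbeta+1]\) since that set is an upset. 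Now \zsubref{lemma:selective-filtration:basic-props}{lemma:selective-filtration:bisim-next-head} gives \(\vx \relBisim[\odbeta+1] \vx[']\). Applying \zsubref{lemma:selective-filtration:basic-props}{lemma:selective-filtration:head-bisim-static} with \(\sU[\odbeta+1]\), we obtain \(\vx \relBisim[\odalpha] \vx[']\). Hence \(\vx[tilde]\) is a successor of \(\vz[tilde]\), contradicting strictness. Therefore \(\odgamma < \odbeta\), and so \(\vz \in \sU[\odbeta]\).

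\emph{Step 4: the contradiction.} We now have, for every \(\lphi \in \lthSigma\), a point \(\vz \in \relR{\vx} \sIntersection \sU[\odbeta] \sIntersection \sX[\odalpha]\) agreeing with \(\vx\) on \(\lphi\). By Step 1 the agreement holds in whichever stage model we use. Moreover \(\vx \notin \sU[\odbeta]\), because \(\vx \in \sT[\odbeta]\).
\begin{itemize}
  \item If \(\odbeta = \odbeta[']+1\), then \(\vx \in \ElimCrit[m=\frmM[{\odbeta[']}], u={\sU[\odbeta]}]\). This contradicts \(\vx \in \sX[\odbeta]\).
  \item If \(\odbeta\) is a limit, the same argument applies in \(\frmM\) restricted to \(\sX[\odbeta, ']\).
  \item If \(\odbeta = 0\), then \(\sU[0] = \sEmpty\), so no such \(\vz\) exists. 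This is already a contradiction for nonempty \(\lthSigma\). For \(\lthSigma = \sEmpty\), eliminability is vacuous, and this degenerate case should be excluded or treated separately.
\end{itemize}
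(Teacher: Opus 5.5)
Your proof is correct and is essentially the paper's argument in contrapositive form. The paper directly picks one \(\lphi\) for which every \(\lphi\)-agreeing successor of \(\vx\) in \(\frmM[\odbeta]\) lies in \(\sT[\odbeta]\), since otherwise \(\vx\) would have been eliminated w.r.t.\@ \(\sU[\odbeta]\). It then uses the definition of \(\sTopClust\) together with \zsubref{lemma:selective-filtration:basic-props}{lemma:selective-filtration:head-bisim-static} and \zref{lemma:selective-filtration:bisim-next-head}, exactly as in your Steps 3--4. Two of your extra remarks are also right. Your limit case needs \(\sX[\odbeta, ']\) to be a \(\lthSigma\)-subset, which is established in the proof (not the statement) of \zcref{lemma:selection-construction-sigma-subset}. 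And for \(\lthSigma = \sEmpty\) every point is vacuously eliminable, so the statement, like the paper's step at \(\odbeta = 0\), tacitly assumes \(\lthSigma \neq \sEmpty\).
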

\begin{proof}
    Let \(\vx \in \sU[\odalpha]\).
    Then there exists \(\odbeta < \odalpha\) such that \(\vx \in \sT[\odbeta]\).
    As \(\sU[\odbeta] \sUnion \sT[\odbeta]\) is an upset in \(\frmM[\odbeta]\), we get \(\relR[\odbeta]{\vx} \sSubsetEq \sU[\odbeta] \sUnion \sT[\odbeta]\).
     We show that \(\fq[\odalpha]{\vx}\) is not \(\lthSigma\)-eliminable.

    If for every \(\lphi \in \lthSigma\), there exists a \(\vy \in \relR{\vx}\sIntersection \sU[\odbeta]\) with \(\vx \in \sem{\lphi} \iff \vy \in \sem{\lphi}\), then \(\vx\) is \(\lthSigma\)-eliminable w.r.t.\@ \(\sU[\odbeta]\), a contradiction.
So there is a \(\lphi \in \lthSigma\) such that for every \(\vy \in \relR[\odbeta]{\vx}\) with \(\vx \in \sem{\lphi} \iff \vy \in \sem{\lphi}\), we have \(\vy \in \sT[\odbeta]\).
    Then \(\fq[\odbeta]{\vy} \in \fq[\odbeta]{\sT[\odbeta]} = \sTopClust[\fq[\odbeta]{\sU[\odbeta]}]\), yielding \(\fq[\odbeta]{\vx} \in \relR[\odbeta,tilde]{\fq[\odbeta]{\vy}}\).
    It follows from \zsubref{lemma:selective-filtration:basic-props}{lemma:selective-filtration:head-bisim-static} and \zref{lemma:selective-filtration:bisim-next-head} that \(\fq[\odalpha]{\vx} \in \relR[\odalpha, tilde]{\fq[\odalpha]{\vy}}\).
    We conclude that, since \(\fq[\odalpha]\) is a p-morphism, for all \(\vy[tilde] \in \relR[\odalpha,tilde]{\fq[\odalpha]{\vx}}\)
    with \(\fq[\odalpha]{\vx} \in \sem{\lphi} \iff \vy[tilde] \in \sem{\lphi}\), we have \(\fq[\odalpha]{\vx} \in \relR[\odalpha,tilde]{\vy[tilde]}\).
    Hence, \(\fq[\odalpha]{\vx}\) is not \(\lthSigma\)-eliminable.
\end{proof}
 
    \section{Covering via Stability and Bisimulation}\zlabel{sec:covering}
    In this section we prove that, in \zcref{constr:selective-filtration}, if the restriction of \(\frmM[\odalpha, tilde]\) to \(\fq[\odalpha]{\sU[\odalpha]}\) does not contain an infinite anti-chain, then \(\sT[\odalpha]\) covers \(\sX[\odalpha] \sMinus \sU[\odalpha]\).
In particular, it then follows that the construction is not stuck at stage \(\odalpha\).
To prove this result we first need to develop a theory of stable points, and construct a bisimulation equivalence on them.

We first introduce the general notion of stability w.r.t.\@ an equivalence relation or function.

\begin{definition}[Stable]
    Let \(\frF = \structuple{\sX, \relR}\) be a frame, \(\relEquiv\) an equivalence relation on \(\sX\), and \(\sY \sSubsetEq \sX\).
    Then a point \(\vy \in \sY\) is called \emph{\(\relEquiv\)-stable in} \(\sY\) iff
    \begin{equation*}
        \ForAll{\vy['] \in \relR[refl trans closure]{\vy} \sIntersection \sY}{\vy \relEquiv \vy[']}.
    \end{equation*}
    If \(\Function{\ff}{\sX}{\sZ}\) is a function, then a point \(\vy \in \sY\) is called \emph{\(\ff\)-stable in \(\sY\)} \tdefiff{} it is \(\relEquiv[\ff]\)-stable in \(\sY\), where \(\relEquiv[\ff]\) is the binary equivalence relation on \(\fDom{\ff}\) induced by \(\ff\), i.e.\@ \(\vy \relEquiv \vy[']\) iff \(\ff{\vy} = \ff{\vy[']}\).
\end{definition}

Now we define the set of stable points we are interested in, and an equivalence relation on it that will turn out to be a bisimulation equivalence.
This set and bisimulation lie at the heart of both the original version of Fine's selective filtration theorem and Fine's finite width theorem.

Fix a weakly transitive Kripke model \(\frmM = \structuple{\sX, \relR, \fVal}\) with an upset \(\sU \sSubsetEq \sX\) and \(\sD \coloneqq \sX \sMinus \sU\).

\begin{construction}\zlabel{constr:stable-span-ty-equiv}
We define
	\begin{equation*}
		\sStab \coloneqq \Set{\vx \in \sD}[\text{\(\vx\) is \(\fSpan[\sU]\)- and \(\fComp{\fTy[\lthProp] \circ \fSpan[\sD]}\)-stable in \(\sD\)}].
	\end{equation*}
	Let \(\relEquiv[\mathrm{tysp}]\) and \(\relEquiv[\sU]\) be the equivalence relations induced by \(\fTy[\lthProp] \circ \fSpan[\sD]\) and the characteristic function of \(\sU\), respectively, and
	define \(\relEquiv[s]\) to be the intersection of these two equivalence relations with \(\relEquiv[ty=\lthProp]\) and \(\relEquiv[span=\sU]\), restricted to \(\sStab \sUnion \sU\). Note that \(\sStab \sUnion \sU\) is an upset.
\end{construction}

The set \(\sStab\) consists of points in \(\sD\) such that while going up in the accessibility relation and staying in \(\sD\) neither the \(\sU\)-span nor the set of \(\lthProp\)-types seen in \(\sD\) change.

\begin{lemma}\zlabel{lemma:stable-span-ty-equiv-bisim}
	The equivalence relation \(\relEquiv[s]\) is a bisimulation equivalence on \(\sStab \sUnion \sU\).
\end{lemma}
\begin{proof}
    Since \({\relEquiv[s]} \sSubsetEq {\relEquiv[ty=\lthProp]}\), it preserves atomic propositions.
	For the back- and forth-conditions, let \(\vx \relEquiv[s] \vy\) and \(\vx['] \in \relR{\vx}\).
		Suppose \(\vx['] \in \sU\).
		Then \(\vx['] \in \fSpan[\sU]{\vx} = \fSpan[\sU]{\vy}\), hence \(\vx['] \in \relR{\vy}\).
		Note that \(\vx['] \relEquiv[s] \vx[']\), by reflexivity, since \(\vx['] \in \sU \sSubsetEq \sStab \sUnion \sU\).

		Otherwise, \(\vx['] \in \sD\) and, since \(\sD\) is a downset, also \(\vx \in \sD\).
		Hence \(\vx \in \sStab\) and therefore \(\vy \in \sStab\) as well.
		Then \(\fTy[\lthProp]{\vx[']} \in \fComp{\fTy[\lthProp] \circ \fSpan[\sD]}{\vx} = \fComp{\fTy[\lthProp] \circ \fSpan[\sD]}{\vy}\), so there exists \(\vy['] \in \relR{\vy} \sIntersection \sD\) with \(\vx['] \relEquiv[ty=\lthProp] \vy[']\).
By the \(\fSpan[\sU]\)-stability of \(\vx\), we get \(\vx \relEquiv[span=\sU] \vx[']\)
		and analogously \(\vy \relEquiv[span=\sU] \vy[']\), so
		we conclude that
		\(
		    \vx['] \relEquiv[span=\sU] \vx \relEquiv[span=\sU] \vy \relEquiv[span=\sU] \vy[']
		\)
		as desired.
		The argument for \(\relEquiv[\mathrm{tysp}]\) is analogous.
		Finally, since \(\vx['], \vy['] \in \sD\) we obtain \(\vx['] \relEquiv[\sU] \vy[']\).
\end{proof}

For technical reasons we will work with \(\fSpan[\sD]{\sStab}\) instead of \(\sStab\).
This effectively removes from \(\sStab\) the irreflexive points \enquote{at the bottom}. The following lemma, which easily follows from the fact that \(\sStab \sUnion \sU\) is an upset, shows that this is not a problem.

\begin{lemma}
	\(\fSpan[\sD]{\sStab} \sSubsetEq \sStab\), and \(\fSpan[\sD]{\sStab} \sUnion \sU\) is an upset.
\end{lemma}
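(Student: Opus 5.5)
The statement has two parts: \(\fSpan[\sD]{\sStab} \sSubsetEq \sStab\), and \(\fSpan[\sD]{\sStab} \sUnion \sU\) is an upset. I would prove the first directly from the definition of stability, and then reduce the second to the first together with the fact, noted in \zcref{constr:stable-span-ty-equiv}, that \(\sStab \sUnion \sU\) is an upset.

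For the first part, let \(\vx \in \sStab\) and \(\vy \in \relR{\vx} \sIntersection \sD\); I must show that \(\vy\) is \(\fSpan[\sU]\)- and \(\fComp{\fTy[\lthProp] \circ \fSpan[\sD]}\)-stable in \(\sD\). Take any \(\vy['] \in \relR[refl trans closure]{\vy} \sIntersection \sD\). Since \(\relR[refl trans closure]\) is transitive, \(\vy['] \in \relR[refl trans closure]{\vx} \sIntersection \sD\). Stability of \(\vx\) then gives \(\vx \relEquiv \vy[']\) for both equivalences. Because \(\vy\) also lies in \(\relR[refl trans closure]{\vx} \sIntersection \sD\), the same argument gives \(\vx \relEquiv \vy\). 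Combining the two by symmetry and transitivity yields \(\vy \relEquiv \vy[']\), so \(\vy \in \sStab\).

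For the upset claim, let \(\vz \in \fSpan[\sD]{\sStab} \sUnion \sU\) and \(\vw \in \relR{\vz}\).
\begin{itemize}
\item If \(\vz \in \sU\), then \(\vw \in \sU\), since \(\sU\) is an upset.
\item If \(\vw \in \sU\), there is nothing to show.
\item Otherwise \(\vz \in \fSpan[\sD]{\sStab}\) and \(\vw \in \sD\). By the first part, \(\vz \in \sStab\). Hence \(\vw \in \relR{\vz} \sIntersection \sD = \fSpan[\sD]{\vz} \sSubsetEq \fSpan[\sD]{\sStab}\).
\end{itemize}
In each case \(\vw \in \fSpan[\sD]{\sStab} \sUnion \sU\).

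The only subtle point is the remark that \(\sStab \sUnion \sU\) is an upset, which I would verify along the way. Its content is exactly the first part: successors in \(\sD\) of stable points are stable, via transitivity of \(\relR[refl trans closure]\). So I expect no real obstacle. In particular, weak transitivity is not needed beyond the fact that \(\relR[refl trans closure]\) is a preorder.
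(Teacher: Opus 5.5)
Your proof is correct and is essentially the paper's argument. The paper derives the lemma from the fact, noted in the construction, that \(\sStab \sUnion \sU\) is an upset; as you observe, that fact amounts exactly to your first part, which you prove from the definition of stability using transitivity of \(\relR[refl trans closure]\), and the upset claim then follows by the same case distinction you give.
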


We use \zcref{constr:stable-span-ty-equiv}, in particular the set \(\fSpan[\sD]{\sStab}\) and the restriction of \(\relEquiv[s]\) to it, as a tool for reasoning about \(\sTopClust{\sU}\).

\begin{lemma}\zlabel{lemma:stable-span-ty-top-cluster-covers-D}
    If \(\lthProp\) is finite, and the restriction of \(\frmM\) to \(\sU\) has the finite cover property,
	then \(\fSpan[\sD]{\sStab} \sUnion \sTopClust[\sU]\) covers \(\sD\).
\end{lemma}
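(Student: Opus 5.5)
Fix \(\vx \in \sD\); we must find a point of \(\fSpan[\sD]{\sStab} \sUnion \sTopClust[\sU]\) in \(\relR[refl trans closure]{\vx}\). The plan is to climb from \(\vx\) inside \(\sD\), at each step either reaching a top-cluster point or strictly shrinking a quantity that can only shrink finitely often. The two quantities are the \(\sU\)-span \(\fSpan[\sU]{\vy}\) and the set of \(\lthProp\)-types \(\fComp{\fTy[\lthProp] \circ \fSpan[\sD]}{\vy}\). Both are antitone along \(\relR\) within \(\sD\) up to the weak-transitivity defect. If \(\vz \in \relR{\vy}\), then \(\relR{\vz} \sSubsetEq \relR{\vy} \sUnion \Set{\vy}\). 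The extra point \(\vy\) lies in \(\sD\), so it is never in a \(\sU\)-span. For types, we only compare points \(\vy'\) that are strict successors, or we pass to a reflexive point first. In the latter case \(\vy \in \relR{\vy}\), so the span already contains \(\vy\).

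\textbf{Step 1: types stabilize.} Since \(\lthProp\) is finite, the set of \(\lthProp\)-types is finite, and so the set \(\fComp{\fTy[\lthProp] \circ \fSpan[\sD]}{\vy}\) can strictly decrease only finitely often along an ascending chain in \(\sD\).

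\textbf{Step 2: \(\sU\)-spans stabilize.} Consider the sets \(\relR[refl trans closure, converse]{\fSpan[\sU]{\vy}} \sIntersection \sU\), which are downsets of \(\sU\), or better their complements in \(\sU\), which are upsets. A chain of strictly shrinking \(\sU\)-spans gives a decreasing sequence of upsets of \(\sU\) of the form \(\sU \sMinus (\text{downset generated by } \fSpan[\sU]{\vy})\). To use the finite cover property, equivalently the descending chain condition on upsets, it is better to phrase this directly. The \(\sU\)-spans \(\fSpan[\sU]{\vy}\) are themselves upsets of \(\sU\) up to the weak-transitivity caveat. Here \(\relR{\fSpan[\sU]{\vy}} \sSubsetEq \fSpan[\sU]{\vy} \sUnion \Set{\vy}\) and \(\vy \notin \sU\), so they are genuine upsets of the restriction to \(\sU\). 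A strictly decreasing sequence of them must therefore be finite.

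\textbf{Step 3: the combined climb.} Start at \(\vy_0 = \vx\). Given \(\vy_n\), there are two cases. If \(\vy_n \in \sTopClust[\sU]\), stop. Otherwise some \(\vz \in \fSpan[\sD]{\vy_n}\) does not see \(\vy_n\), i.e.\ it is a strict successor. If \(\vy_n\) is stable in both senses, then so is \(\vz\), because \(\relR[refl trans closure]{\vz} \sSubsetEq \relR[refl trans closure]{\vy_n}\). Moreover \(\vz\) lies in \(\fSpan[\sD]{\sStab}\), and we stop. If \(\vy_n\) is not stable, pick a witness \(\vy' \in \relR[refl trans closure]{\vy_n} \sIntersection \sD\) whose span or type-set differs. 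By the monotonicity above, that span or type-set is strictly smaller, with the other quantity not larger. Set \(\vy_{n+1} \coloneqq \vy'\). By Steps 1 and 2 this cannot continue forever, since the pair (span, type-set) lives in a product of a DCC poset and a finite poset. If we dislike dependent choice, we can instead argue by picking \(\vy\) minimizing the pair, using well-foundedness, and then split into the same two cases. Since each \(\vy_{n+1} \in \relR[refl trans closure]{\vy_n}\), we have \(\relR[refl trans closure]{\vy_{n+1}} \sSubsetEq \relR[refl trans closure]{\vy_n}\), so the terminal point lies in \(\relR[refl trans closure]{\vx}\).

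\textbf{Main obstacle.} The delicate part is the monotonicity claim in Step 3 under mere weak transitivity. If \(\vy' \in \relR{\vy_n}\) and \(\vy_n \in \relR{\vy'}\), that is, they lie in the same cluster, then \(\relR{\vy'}\) may contain \(\vy_n\) but not \(\vy'\). Spans of cluster-mates then differ by swapping such points, so "differs" need not mean "strictly smaller". I would handle this as follows. The \(\sU\)-span is unaffected, because the swapped points lie in \(\sD\). For types, note that inside a cluster the type-set of \(\fSpan[\sD]\) can only differ by the types of the two points themselves. I would therefore choose the witness \(\vy'\) to be a strict successor whenever possible. In the remaining case, instability is witnessed only within the cluster of \(\vy_n\), and every \(\sD\)-successor of \(\vy_n\) is in that cluster, which is exactly \(\vy_n \in \sTopClust[\sU]\). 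So the case split should be made on whether \(\vy_n\) has a strict \(\sD\)-successor, not on stability first.
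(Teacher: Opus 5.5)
Your climb is sound except for its last case. Moving only to strict $D$-successors, $\mathrm{sp}_U$ and $\mathrm{ty}_{\mathbb{P}}\circ\mathrm{sp}_D$ are both antitone. So a strict successor that witnesses instability strictly lowers the pair, and the DCC on upsets of $U$ together with finiteness of $\mathbb{P}$ bounds how often this happens. The stable case and the case of no strict $D$-successor are also handled correctly.

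The gap is the final claim of your \emph{Main obstacle} paragraph. Suppose $y_n$ is unstable but no strict $D$-successor witnesses this. Then you only learn that all strict $D$-successors of $y_n$ carry the same pair as $y_n$, while some cluster-mate does not. Nothing forces every $D$-successor of $y_n$ into its cluster, so $y_n\in\mathrm{Top}(U)$ does not follow.

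For a counterexample, let $U=\emptyset$ and let $a,b$ form an irreflexive two-element cluster. Let both $a$ and $b$ see a reflexive point $c$, with $p$ true exactly at $b$ and $c$. Then $\mathrm{ty}\circ\mathrm{sp}_D$ takes the value $\{\{p\}\}$ at $a$ and at $c$, but $\{\emptyset,\{p\}\}$ at $b$. So $a$ is unstable, and its only strict successor $c$ has the same pair as $a$. Yet $a\notin\mathrm{Top}(U)$, because $c\in\mathrm{sp}_D(a)$ does not see $a$. Your case split leaves exactly this situation unhandled. Moving to the cluster-mate witness instead does not rescue the termination argument, since the type-set grows along such a step (here from $\{\{p\}\}$ to $\{\emptyset,\{p\}\}$).

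The missing observation is that, by weak transitivity, every $R$-successor of a strict successor $z$ of $y$ is again a strict successor of $y$. Consequences in the problematic case:

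For any strict $D$-successor $z$ of $y_n$, the set $R^{*}(z)\cap D$ lies inside the strict $D$-successors of $y_n$, and all of these share the pair of $y_n$.

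Hence $z\in\mathrm{Stab}$.

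Then either $z$ has a $D$-successor, which lies in $\mathrm{sp}_D(\mathrm{Stab})\cap R^{*}(x)$, or $z\in\mathrm{Top}(U)$ vacuously.

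Going one more strict step up is exactly how the paper finishes. It first assumes $R(x)\cap\mathrm{Top}(U)=\emptyset$, so that every point of $\mathrm{sp}_D(x)$ has a strict $D$-successor. It then reaches an $\mathrm{sp}_U$-stable point using the finite cover property, next finds a $\mathrm{ty}\circ\mathrm{sp}_D$-stable point $y'\in\mathrm{sp}_D(x)$, and finally takes a strict successor of $y'$. With this patch, your simultaneous treatment of the two quantities is essentially a repackaging of the paper's sequential one.
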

\begin{proof}
    Let \(\sT \coloneqq \fSpan[\sD]{\sStab} \sUnion \sTopClust[\sU]\) and \(\vx \in \sD\sMinus \sT\).
    Suppose that \(\relR{\vx}\sIntersection \sTopClust[\sU] = \emptyset\). Then each \(\vy \in \fSpan[\sD]{\vx}\) has a strict successor in \(\sD\). We will first show that \(\vx\) has a \(\fSpan[\sU]\)-stable successor in \(\sD\).

    Suppose for a contradiction that \(\vx\) has no \(\fSpan[\sU]\)-stable successor in \(\sD\).
    Then, since, by weak-transitivity, \(\fSpan[\sU]\) is monotone decreasing with respect to \(\relR\),  using the axiom of dependent choice, we can construct an injective monotone sequence \(\Function{\ff}{\structuple{\Naturals, \leq}}{\structuple{\sD, {\relR} \sIntersection \sD[pow=2]}}\) such that \(\fSpan[\sU]\circ \ff\) is injective and for each \(\nii \in \Naturals\), \(\fSpan[\sU]{\ff{\nii+1}} \subseteq \fSpan[\sU]{\ff{\nii}}\).
    Hence, since \(\frmM\) restricted to \(\sU\) has the finite cover property, there exist \(\nii,\nj\in \Naturals\) with \(\nii < \nj\) such that \(\fSpan[\sU]{\ff{\nii}} = \fSpan[\sU]{\ff{\nj}}\), a contradiction.

    Therefore, by weak transitivity, we may assume that \(\vx\) is \(\fSpan[\sU]\)-stable.
    It remains to show that \(\vx\) has a \(\fComp{\fTy[\lthProp] \circ \fSpan[\sD]}\)-stable successor which will, in particular, be \(\fSpan[\sU]\)-stable.
    Since \(\lthProp\) is finite, \(\fTy[\lthProp]\) attains only finitely many values.
    Moreover, by weak transitivity, the function \(\fSpan[\sD]\)
    is strictly decreasing on \(\sD\) with respect to the strict successor relation.
    Hence, there exists a \(\fComp{\fTy[\lthProp] \circ \fSpan[\sD]}\)-stable \(\vy['] \in \fSpan[\sD]{\vx}\).
    Now there exists a strict successor \(\vy \in \sD\) of \(\vy[']\) which is also \(\fComp{\fTy[\lthProp] \circ \fSpan[\sD]}\)-stable, and, by weak transitivity, \(\vy \in \fSpan[\sD]{\vx}\).
\end{proof}

\begin{lemma}\zlabel{lemma:stable-span-ty-in-top-cluster}
    If \(\frmM\) is collapsed relative to \(\sU\),
then \(\fSpan[\sD]{\sStab} \sSubsetEq \sTopClust{\sU}\).
 \end{lemma}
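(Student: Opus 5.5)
The plan is to use \zcref{lemma:stable-span-ty-equiv-bisim} together with collapsedness to show that \(\relEquiv[s]\) is the identity on \(\sStab \sUnion \sU\). Once that is known, the required relations between points follow by producing \(\relEquiv[s]\)-equivalent successors.

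\emph{Step 1: \(\relEquiv[s]\) is the diagonal.} Consider the relation \({\relEquiv[s]} \sUnion \relDiag[\sX]\) on all of \(\frmM\). It is an equivalence relation, since \(\relEquiv[s]\) is an equivalence relation on \(\sStab \sUnion \sU\). I claim it is a bisimulation equivalence on \(\frmM\):
\begin{itemize}
\item Pairs from \(\relDiag[\sX]\) are trivially fine.
\item For pairs \(\vx \relEquiv[s] \vy\), the back and forth conditions are supplied by \zcref{lemma:stable-span-ty-equiv-bisim}. This uses that \(\sStab \sUnion \sU\) is an upset, so all successors of \(\vx\) and \(\vy\) lie in the domain of \(\relEquiv[s]\).
\end{itemize}
The relation does not identify points of \(\sU\) with points of \(\sD\), because \({\relEquiv[s]} \sSubsetEq {\relEquiv[\sU]}\). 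Hence it is contained in bisimilarity relative to \(\sU\). Since \(\frmM\) is collapsed relative to \(\sU\), that bisimilarity is \(\relDiag[\sX]\), so \(\relEquiv[s]\) is the diagonal. I expect this to be the only step needing care, namely checking that the extension by the diagonal really is a bisimulation on the whole model.

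\emph{Step 2: a key observation.} Suppose \(\vu \in \sStab\) and \(\vw \in \relR[refl trans closure]{\vu} \sIntersection \sD\). Then \(\vw \in \sStab\), because \(\sStab\) is closed under going up within \(\sD\). By stability of \(\vu\) in \(\sD\), we have \(\vu \relEquiv[span=\sU] \vw\) and \(\vu \relEquiv[\mathrm{tysp}] \vw\), and \(\vu \relEquiv[\sU] \vw\) holds trivially. So if additionally \(\fTy[\lthProp]{\vu} = \fTy[\lthProp]{\vw}\), then \(\vu \relEquiv[s] \vw\), and Step 1 gives \(\vu = \vw\).

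\emph{Step 3: the inclusion.} Let \(\vy \in \fSpan[\sD]{\vx}\) with \(\vx \in \sStab\), so \(\vy \in \sStab\) and \(\vy \in \sD\). Let \(\vz \in \fSpan[\sD]{\vy}\); we must show \(\vy \in \relR{\vz}\).
\begin{itemize}
\item Since \(\vz \in \fSpan[\sD]{\vy}\), we have \(\fTy[\lthProp]{\vy} \in \fComp{\fTy[\lthProp] \circ \fSpan[\sD]}{\vy}\).
\item By stability of \(\vy\), this value equals \(\fComp{\fTy[\lthProp] \circ \fSpan[\sD]}{\vz}\).
\item So there is \(\vw \in \fSpan[\sD]{\vz}\) with \(\fTy[\lthProp]{\vw} = \fTy[\lthProp]{\vy}\).
\item By weak transitivity, \(\vw \in \relR{\vy}\) or \(\vw = \vy\); in either case \(\vw \in \relR[refl trans closure]{\vy} \sIntersection \sD\).
\end{itemize}
Applying the observation of Step 2 with \(\vu = \vy\) gives \(\vw = \vy\). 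Hence \(\vy \in \relR{\vz}\). As \(\vz\) was an arbitrary element of \(\fSpan[\sD]{\vy}\), this shows \(\fSpan[\sD]{\vy} \sSubsetEq \relR[converse]{\vy}\), i.e.\ \(\vy \in \sTopClust{\sU}\).
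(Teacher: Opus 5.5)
Your Steps 1 and 2 are correct, and they are what the paper uses. Extending $\equiv_\sigma$ by the diagonal gives a bisimulation that does not identify $U$-points with $D$-points, so collapsedness makes $\equiv_\sigma$ trivial. And two points of $\mathrm{Stab}$ that are $R^{\ast}$-related inside $D$ and have equal $\mathbb{P}$-types are $\equiv_\sigma$-related.

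The gap is the first bullet of Step 3. The inference ``since $z \in \mathrm{sp}_D(y)$, we have $\mathrm{ty}_{\mathbb{P}}(y) \in (\mathrm{ty}_{\mathbb{P}}\circ\mathrm{sp}_D)(y)$'' does not follow. The conclusion says that some point of $\mathrm{sp}_D(y)$ has the same $\mathbb{P}$-type as $y$, and the existence of $z$ says nothing about that. Notice that in Step 3 you use $x$ only to get $y \in \mathrm{Stab}$. If your argument were valid, it would therefore prove $\mathrm{Stab} \subseteq \mathrm{Top}(U)$, which is false. Take $X = \{y,z\}$, $R = \{(y,z),(z,z)\}$, $\mathbb{P} = \{p\}$ with $p$ true only at $y$, and $U = \emptyset$. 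Then:
\begin{itemize}
\item the model is weakly transitive and collapsed relative to $U$;
\item both points lie in $\mathrm{Stab}$, and $z \in \mathrm{sp}_D(y)$;
\item yet $\mathrm{ty}_{\mathbb{P}}(y) = \{p\} \notin \{\emptyset\} = (\mathrm{ty}_{\mathbb{P}}\circ\mathrm{sp}_D)(y)$, and $y \notin \mathrm{Top}(U)$.
\end{itemize}
Such irreflexive ``bottom'' points are exactly why the lemma is stated for $\mathrm{sp}_D(\mathrm{Stab})$ rather than $\mathrm{Stab}$.

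The repair is to use that $y$ lies in the $D$-span of the stable point $x$. First, $\mathrm{ty}_{\mathbb{P}}(y) \in (\mathrm{ty}_{\mathbb{P}}\circ\mathrm{sp}_D)(x)$. By weak transitivity $z \in R^{\ast}(x)\cap D$, so stability of $x$ gives $(\mathrm{ty}_{\mathbb{P}}\circ\mathrm{sp}_D)(x) = (\mathrm{ty}_{\mathbb{P}}\circ\mathrm{sp}_D)(z)$. This directly yields your $w \in \mathrm{sp}_D(z)$ with $\mathrm{ty}_{\mathbb{P}}(w) = \mathrm{ty}_{\mathbb{P}}(y)$. From there your remaining bullets and Step 2 go through unchanged. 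The repaired argument is then essentially the paper's proof: the paper's $x'$, $y$, $y'$ are your $y$, $z$, $w$. The only difference is that it checks $x' \equiv_\sigma y'$ via the stability of $x$ rather than of $x'$, which does not matter.
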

\begin{proof}
     Let \(\vx['] \in \fSpan[\sD]{\sStab}\) and \(\vy \in \fSpan[\sD]{\vx[']}\).
     We prove that \(\vy \in \relR[converse]{\vx[']}\), i.e.\@ \(\relR{\vy, \vx[']}\).
     Pick \(\vx \in \sStab\) such that \(\vx['] \in \fSpan[\sD]{\vx}\).
     Then \(\vy \in \relR[refl trans closure]{\vx}\), so by stability \(\vx \relEquiv[\mathrm{tysp}] \vy\).
     Therefore, as \(\vx['] \in \fSpan[\sD]{\vx}\), there exists \(\vy['] \in \fSpan[\sD]{\vy}\) with \(\vx['] \relEquiv[ty=\lthProp] \vy[']\).
     Now \(\vx['] \relEquiv[ty=\lthProp] \vy[']\), and by stability \(\vx['] \relEquiv[span=\sU] \vy[']\) and \(\vx['] \relEquiv[\mathrm{tysp}] \vy[']\).
     Since \(\vx[']\) and \(\vy[']\) are both in \(\sD\), \(\vx['] \relEquiv[\sU] \vy[']\).
     We conclude that \(\vx['] \relEquiv[s] \vy[']\).

     By \zcref{lemma:stable-span-ty-equiv-bisim},    \(\relEquiv[s]\) is a bisimulation equivalence on the upset \(\sStab \sUnion \sU\).
     So it extends to a bisimulation equivalence on \(\sX\) that, in particular, does not identify points in \(\sU\) and \(\sD\).
     Hence \(\vx[']\) and \(\vy[']\) are bisimilar relative to \(\sU\) and,
     since \(\frmM\) is collapsed relative to \(\sU\), it follows that \(\vx['] = \vy[']\).
     But we know that \(\relR{\vy, \vy[']}\), hence \(\relR{\vy, \vx[']}\), and we are done.
\end{proof}

These two lemmata combine into the sought criterion for \zcref{constr:selective-filtration} to not get stuck.

\begin{proposition}\zlabel{lemma:selective-filtration:finite-width-top-implies-covering}
    Suppose that \(\lthProp\) is finite and let \(\odalpha\) be an ordinal.
    Moreover, suppose that \(\frmM[\odalpha, tilde]\) restricted to \(\fq[\odalpha]{\sU[\odalpha]}\) in \zcref{constr:selective-filtration} is of finite width.
    Then \(\sT[\odalpha]\) covers \(\sX[\odalpha] \sMinus \sU[\odalpha]\), and, in particular, the construction does not get stuck at stage \(\odalpha\).
\end{proposition}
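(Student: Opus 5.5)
We will apply \zcref{lemma:stable-span-ty-top-cluster-covers-D} and \zcref{lemma:stable-span-ty-in-top-cluster} to the quotient model \(\frmM[tilde, \odalpha]\) with upset \(\sU \coloneqq \fq[\odalpha]{\sU[\odalpha]}\) and \(\sD \coloneqq \sX[tilde, \odalpha] \sMinus \sU\). We then pull the resulting covering statement back along the p-morphism \(\fq[\odalpha]\).

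First, the hypotheses of both lemmata have to be checked for \(\frmM[tilde, \odalpha]\).
\begin{itemize}
    \item The set \(\lthProp\) is finite by assumption.
    \item By \zsubref{lemma:selective-filtration:basic-props}{lemma:head-is-union-tops}, \(\frmM[tilde,\odalpha]\) is collapsed relative to \(\sU\), which is the hypothesis of \zcref{lemma:stable-span-ty-in-top-cluster}.
    \item For the finite cover property of the restriction to \(\sU\), we use the equivalence stated in the preliminaries: finite cover property is the same as finite width together with converse pre-well-foundedness. Finite width is assumed, and converse pre-well-foundedness is \zcref{lemma:selective-filtration:head-conv-pre-wf}.
\end{itemize}
We must also note that \(\sU\) is an upset in \(\frmM[tilde,\odalpha]\). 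This holds because \(\sU[\odalpha]\) is an upset in \(\frmM[\odalpha]\), \(\fq[\odalpha]\) is a p-morphism, and \(\relBisim[\odalpha]\) does not mix \(\sU[\odalpha]\) with its complement, so \(\fq[\odalpha, preimage]{\sU} = \sU[\odalpha]\).

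Combining the two lemmata, \(\fSpan[\sD]{\sStab} \sSubsetEq \sTopClust[\sU]\), hence \(\sTopClust[\sU]\) alone covers \(\sD\) in \(\frmM[tilde,\odalpha]\). To transfer this, let \(\vx \in \sX[\odalpha] \sMinus \sU[\odalpha]\). Then \(\fq[\odalpha]{\vx} \in \sD\), so either \(\fq[\odalpha]{\vx} \in \sTopClust[\sU]\), or some successor \(\vy[tilde] \in \relR[\odalpha, tilde]{\fq[\odalpha]{\vx}}\) lies in \(\sTopClust[\sU]\). In the first case \(\vx \in \sT[\odalpha]\). In the second case the forth/back condition of the p-morphism \(\fq[\odalpha]\) gives \(\vy \in \relR[\odalpha]{\vx}\) with \(\fq[\odalpha]{\vy} = \vy[tilde]\), so \(\vy \in \sT[\odalpha]\). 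Either way \(\vx \in \relR[refl trans closure, converse]{\sT[\odalpha]}\), so \(\sT[\odalpha]\) covers \(\sX[\odalpha] \sMinus \sU[\odalpha]\).

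For the non-stuck claim, suppose \(\sX[\odalpha] \neq \sU[\odalpha]\). The covering then gives \(\sT[\odalpha] \neq \sEmpty\). Since \(\sT[\odalpha]\) is disjoint from \(\sU[\odalpha]\), we get \(\sU[\odalpha+1] \supsetneq \sU[\odalpha]\). The sequence \(\sU[\odbeta]\) is increasing, so \(\sU[\odalpha] = \sU[\odbeta]\) for some \(\odbeta<\odalpha\) would force \(\sU[\odbeta+1] = \sU[\odalpha]\) for that \(\odbeta\).

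The main obstacle I expect is exactly this last point. Being stuck at \(\odalpha\) means \(\sU[\odalpha]=\sU[\odbeta]\) and \(\sX[\odalpha]=\sX[\odbeta]\) for some \(\odbeta<\odalpha\), so the strict growth at \(\odalpha\) itself does not immediately refute it. The plan is to argue as follows. If \(\sU[\odbeta]=\sU[\odalpha]\) with \(\odbeta<\odalpha\), then \(\sU[\odbeta+1] = \sU[\odbeta]\) by monotonicity. But \(\sT[\odbeta]\) is nonempty whenever \(\sX[\odbeta] \neq \sU[\odbeta]\): the same covering argument applies at stage \(\odbeta\), because its hypotheses coincide with those at \(\odalpha\), as \(\sU[\odbeta]=\sU[\odalpha]\), \(\sX[\odbeta]=\sX[\odalpha]\), and \(\frmM[tilde,\odbeta] = \frmM[tilde,\odalpha]\). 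This contradiction shows the construction is not stuck at \(\odalpha\).
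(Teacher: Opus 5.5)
Your proposal is correct and follows the paper's proof essentially step for step. You apply \zcref{lemma:stable-span-ty-top-cluster-covers-D} and \zcref{lemma:stable-span-ty-in-top-cluster} to \(\frmM[tilde, \odalpha]\) with \(\sU \coloneqq \fq[\odalpha]{\sU[\odalpha]}\), and obtain the finite cover property from finite width together with \zcref{lemma:selective-filtration:head-conv-pre-wf}. You then conclude that \(\sTopClust[\sU]\) covers \(\sD\), and pull this back along \(\fq[\odalpha]\) using that \(\relBisim[\odalpha]\) does not identify points of \(\sU[\odalpha]\) with points outside it.

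Your explicit argument for the \enquote{not stuck} clause, which the paper leaves implicit, is also sound. The set \(\sT[\odbeta]\) depends only on the pair \((\sX[\odbeta], \sU[\odbeta])\), so a repetition of the pair would give \(\sT[\odbeta] = \sT[\odalpha] \neq \sEmpty\). Hence \(\sU[\odbeta + 1] \supsetneq \sU[\odbeta] = \sU[\odalpha]\), contradicting monotonicity.
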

\begin{proof}
    We apply \zcref{constr:stable-span-ty-equiv} to \(\frmM[\odalpha, tilde]\) with upset \(\sU \coloneqq \fq[\odalpha]{\sU[\odalpha]}\), defining \(\sStab\).

    By assumption \(\frmM[\odalpha, tilde]\) restricted to \(\sU\) is of finite width, and by \zcref{lemma:selective-filtration:head-conv-pre-wf} it is conversely pre-well-founded.
    Therefore, it has the finite cover property and,
    by \zcref{lemma:stable-span-ty-top-cluster-covers-D}, \(\fSpan[\sD]{\sStab} \sUnion \sTopClust[\sU]\) covers \(\sD \coloneqq \sX[\odalpha, tilde] \sMinus \sU\).
    Moreover, \(\frmM[\odalpha, tilde]\) is collapsed relative to \(\sU\).
    Hence, by \zcref{lemma:stable-span-ty-in-top-cluster}, \(\fSpan[\sD]{\sStab} \sSubsetEq \sTopClust[\sU]\), so \(\sTopClust[\sU]\) covers \(\sD\).
    Therefore, \(\sT[\odalpha] = \fq[\odalpha, preimage]{\sTopClust[\sU]}\) covers \(\fq[\odalpha, preimage]{\sD} = \fq[\odalpha, preimage]{\sX[\odalpha, tilde] \sMinus \fq[\odalpha]{\sU[\odalpha]}} = \sX[\odalpha] \sMinus \sU[\odalpha]\), where we use that \(\relBisim[\odalpha]\) by definition does not identify points in \(\sU[\odalpha]\) with points outside of it.
\end{proof}

Together with \zcref{prop:selective-filtration:termination,lemma:selective-filtration:head-conv-pre-wf}, we obtain the following result.

\begin{corollary}\zlabel{cor:finite-width-termination}
If, in \zcref{constr:selective-filtration}, \(\lthProp\) is finite and \(\frmM\) is of finite width, then the construction has terminated at some stage \(\odalpha\) and \(\frmM[tilde, \odalpha]\) has the finite cover property.
\end{corollary}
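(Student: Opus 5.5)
The plan is to combine \zcref{lemma:selective-filtration:finite-width-top-implies-covering} with \zcref{prop:selective-filtration:termination} by a transfinite induction showing that the construction progresses at every stage at which it has not yet terminated. It then terminates because it cannot progress forever: by \zsubref{prop:selective-filtration:termination}{prop:selective-filtration:termination:4}, at a cardinal \(\cdkappa > \card{\sX}\) it has either terminated or is stuck.

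Fix an ordinal \(\odalpha\) at which the construction has not terminated. To apply \zcref{lemma:selective-filtration:finite-width-top-implies-covering} at \(\odalpha\), I need \(\frmM[\odalpha, tilde]\) restricted to \(\fq[\odalpha]{\sU[\odalpha]}\) to be of finite width. The idea is that an infinite anti-chain \(\Set{\fq[\odalpha]{\vx[\nn]}}[\nn \in \Naturals]\) in the quotient lifts to an anti-chain \(\Set{\vx[\nn]}\) in \(\frmM[\odalpha]\), and hence in \(\frmM\), since \(\frmM[\odalpha]\) is a restriction of \(\frmM\). The lift works as follows. The map \(\fq[\odalpha]\) is a p-morphism, so if \(\relR{\vx[\nn], \vx[\nm]}\) held for distinct \(\nn, \nm\), then \(\fq[\odalpha]{\vx[\nm]} \in \relR[\odalpha, tilde]{\fq[\odalpha]{\vx[\nn]}}\), which is impossible. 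I pick distinct representatives; this is possible because the classes are distinct, and the restriction keeps the relation induced from \(\frmM\). So the anti-chain lifts, contradicting finite width of \(\frmM\). This step needs care only about what "anti-chain" means for clusters. Incomparability in the quotient should mean that neither point sees the other, and that property is transferred exactly as described. Hence \(\sT[\odalpha]\) covers \(\sX[\odalpha] \sMinus \sU[\odalpha]\).

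Next I argue that covering forces genuine progress at stage \(\odalpha\), so the construction cannot be stuck there. Since \(\sX[\odalpha] \neq \sU[\odalpha]\) and \(\sT[\odalpha]\) covers the nonempty set \(\sX[\odalpha] \sMinus \sU[\odalpha]\), the set \(\sT[\odalpha]\) is nonempty. It is also disjoint from \(\sU[\odalpha]\), so \(\sU[\odalpha+1] \supsetneq \sU[\odalpha]\). The sequence \(\sU[\odbeta]\) is increasing, so it strictly increases at every non-terminated stage. If the construction were stuck at some stage \(\odalpha\), there would be \(\odbeta < \odalpha\) with \(\sU[\odbeta] = \sU[\odalpha]\); but \(\sU[\odbeta] \subsetneq \sU[\odbeta+1] \sSubsetEq \sU[\odalpha]\), a contradiction. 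This gives an injection from the ordinals below \(\cdkappa\) into \(\sPowerSet{\sX}\), which is too many once \(\cdkappa > 2^{\card{\sX}}\); using \zsubref{prop:selective-filtration:termination}{prop:selective-filtration:termination:4} with such a \(\cdkappa\), the construction has terminated at some stage \(\odalpha\).

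Finally, at the termination stage \(\sX[\odalpha] = \sU[\odalpha]\), so \(\frmM[\odalpha, tilde]\) is its own restriction to \(\fq[\odalpha]{\sU[\odalpha]}\). By \zcref{lemma:selective-filtration:head-conv-pre-wf} it is conversely pre-well-founded, and by the anti-chain lifting argument above it is of finite width. Together these give the finite cover property. The main obstacle is the anti-chain lifting, specifically checking that the quotient's relation is exactly the image of \(\relR[\odalpha]\) so that incomparability pulls back. This holds because quotients by bisimulation equivalences satisfy \(\relR[\odalpha, tilde] = \fq[\odalpha] \circ \relR[\odalpha] \circ \fq[\odalpha, preimage]\).
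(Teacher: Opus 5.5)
Your proposal is correct and follows the paper's argument. It uses \zcref{lemma:selective-filtration:finite-width-top-implies-covering} to rule out getting stuck, \zcref{prop:selective-filtration:termination} to force termination, and \zcref{lemma:selective-filtration:head-conv-pre-wf} together with finite width for the finite cover property; you additionally spell out the step the paper leaves implicit, namely that finite width of the original model passes to restrictions of the quotient by lifting anti-chains along the quotient map (your detour through \(2^{\lvert X\rvert}\) is harmless but unnecessary, since any cardinal larger than \(\lvert X\rvert\) already suffices).
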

 
    \section{Definability via the Diagram Bisimulation}\zlabel{sec:diagram-constr}
    For the selection theorem, we need one more technical result: for finite \(\lthSigma\), the sets \(\sU[\nn]\) and \(\sX[\nn]\) from \zcref{constr:selective-filtration} are definable for each \(\nn \in\Naturals\).
To obtain this result, we establish a criterion which ensures that \(\sTopClust\) is definable and introduce another bisimulation equivalence on \(\sTopClust\), which we will call the \emph{diagram bisimulation}.
This relation is defined in three stages, where each next stage is defined in terms of the previous stages  by conditions resembling the back- and forth-condition for bisimulations.

We fix a weakly transitive Kripke model \(\frmM = \structuple{\sX, \relR, \fVal}\) with upset  \(\sU \sSubsetEq \sX\) and \(\sD \coloneqq \sX \sMinus \sU\).

\begin{construction}\zlabel{constr:diagram-equiv:equiv-01}

    Let \(\relEquiv[0]\) be the intersection \({\relEquiv[ty=\lthProp]} \sIntersection {\relEquiv[span=\sU]}\) restricted to \(\sD\) and define the relation \(\relPreEquiv[1]\) on \(\sD\) by
    \(\vx \relPreEquiv[1] \vx[']\) if and only if the following two conditions hold:
    \begin{itemize}
        \item[(C1)]
        for every \(\vy \in \fSpan[\sD]{\vx}\), there exists \(\vy['] \in \fSpan[\sD]{\vx[']}\) such that \(\vy \relEquiv[0] \vy[']\), and
        \item[(C2)]
        for all \(\vy \in \fSpan[\sD]{\vx}\), \(\vz \in \fSpan[\sD]{\vy}\), \(\vy['] \in \fSpan[\sD]{\vx[']}\) with \(\vy['] \relEquiv[0] \vy \relEquiv[0] \vz\), there exists \(\vz['] \in \fSpan[\sD]{\vy[']}\) with \(\vz \relEquiv[0] \vz[']\).
\end{itemize}
    The conditions are depicted in \zcref{fig:diagram-equiv:diagram-preord-1}.
    We define \(\mathord{\relEquiv[1]}\) to be the intersection of \(\relPreEquiv[1]\) and its converse, and write \(\relEquiv[01]\) for the intersection \(\mathord{\relEquiv[0]} \sIntersection \mathord{\relEquiv[1]}\). Note that, in general, only \({\relEquiv[1,dom]} \sSubsetEq {\relPreEquiv[1,dom]}\).
\end{construction}

\begin{figure}
    \centering
\begin{tikzcd}[ampersand replacement=\&]
    	y \& {\exists y'} \\
    	x \& {x'}
    	\arrow["{\relEquiv[0]}", dashed, no head, from=1-1, to=1-2]
    	\arrow[from=2-1, to=1-1]
    	\arrow["{\relPreEquiv[1]}"', from=2-1, to=2-2]
    	\arrow[dashed, from=2-2, to=1-2]
    \end{tikzcd}
    and
\begin{tikzcd}[ampersand replacement=\&]
    	z \& {\exists z'} \\
    	y \& {y'} \\
    	x \& {x'}
    	\arrow["{\relEquiv[0]}", dashed, no head, from=1-1, to=1-2]
    	\arrow["{\relEquiv[0]}"', shift right=3, no head, from=1-1, to=2-1]
    	\arrow[from=2-1, to=1-1]
    	\arrow["{\relEquiv[0]}", no head, from=2-1, to=2-2]
    	\arrow[dashed, from=2-2, to=1-2]
    	\arrow[from=3-1, to=2-1]
    	\arrow["{\relPreEquiv[1]}"', from=3-1, to=3-2]
    	\arrow[from=3-2, to=2-2]
    \end{tikzcd}
    \caption{Diagram describing the definition of \(\relPreEquiv[1]\) in \zcref{constr:diagram-equiv:equiv-01}. All variables live in \(\sD\).}\zlabel{fig:diagram-equiv:diagram-preord-1}
\end{figure}

The next lemma easily follows from the transitivity of \(\relEquiv[0]\) and the definition of \(\relPreEquiv[1]\).

\begin{lemma}
    The relation \(\relPreEquiv[1]\) is transitive, and \(\relEquiv[1]\) is an equivalence relation on its domain.
\end{lemma}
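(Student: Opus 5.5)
The plan is to check transitivity of \(\relPreEquiv[1]\) directly, condition by condition. The only facts needed are that \(\relEquiv[0]\) is an equivalence relation on \(\sD\) (an intersection of two equivalence relations, restricted to \(\sD\)) and a short chase through the two conditions. The statement about \(\relEquiv[1]\) then follows formally.

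Suppose \(\vx \relPreEquiv[1] \vx[']\) and \(\vx['] \relPreEquiv[1] \vx['']\); we show \(\vx \relPreEquiv[1] \vx['']\).
\begin{itemize}
\item For (C1), take \(\vy \in \fSpan[\sD]{\vx}\). By (C1) for the first pair there is \(\vy['] \in \fSpan[\sD]{\vx[']}\) with \(\vy \relEquiv[0] \vy[']\). By (C1) for the second pair there is \(\vy[''] \in \fSpan[\sD]{\vx['']}\) with \(\vy['] \relEquiv[0] \vy['']\). Transitivity of \(\relEquiv[0]\) gives \(\vy \relEquiv[0] \vy['']\).
\item For (C2), take \(\vy \in \fSpan[\sD]{\vx}\), \(\vz \in \fSpan[\sD]{\vy}\) and \(\vy[''] \in \fSpan[\sD]{\vx['']}\) with \(\vy[''] \relEquiv[0] \vy \relEquiv[0] \vz\). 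The one point needing care is that we cannot go straight from \(\vx\) to \(\vx['']\); we must pass through an intermediate witness over \(\vx[']\). So we first use (C1) for \(\vx \relPreEquiv[1] \vx[']\) to obtain \(\vy['] \in \fSpan[\sD]{\vx[']}\) with \(\vy \relEquiv[0] \vy[']\). Now \(\vy['] \relEquiv[0] \vy \relEquiv[0] \vz\), so (C2) for the first pair yields \(\vz['] \in \fSpan[\sD]{\vy[']}\) with \(\vz \relEquiv[0] \vz[']\). This gives a configuration \(\vy['] \in \fSpan[\sD]{\vx[']}\), \(\vz['] \in \fSpan[\sD]{\vy[']}\), \(\vy[''] \in \fSpan[\sD]{\vx['']}\). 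By transitivity of \(\relEquiv[0]\) we have \(\vy[''] \relEquiv[0] \vy['] \relEquiv[0] \vz[']\), since all of these are \(\relEquiv[0]\)-equivalent to \(\vy\) and \(\vz\). Applying (C2) for \(\vx['] \relPreEquiv[1] \vx['']\) then gives \(\vz[''] \in \fSpan[\sD]{\vy['']}\) with \(\vz['] \relEquiv[0] \vz['']\), hence \(\vz \relEquiv[0] \vz['']\), as required.
\end{itemize}

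For \(\relEquiv[1]\): as the intersection of a transitive relation with its converse, it is transitive and symmetric. Restricted to its domain it is also reflexive: if \(\vx \relEquiv[1] \vx[']\), then \(\vx \relEquiv[1] \vx\) by symmetry followed by transitivity. Hence \(\relEquiv[1]\) is an equivalence relation on its domain.
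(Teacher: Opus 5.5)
Your proof is correct and is the direct verification the paper has in mind; the paper only remarks that the lemma follows from transitivity of \(\equiv_0\) and the definition of \(\sqsubseteq_1\). You also handle the one non-obvious step properly: in (C2) you route through an intermediate witness over \(x'\), obtained first from (C1) and then from (C2) of the first pair, and you close with the standard observation that a symmetric transitive relation is an equivalence on its domain.
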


The next lemma shows that the domain of \(\relEquiv[1]\) contains \(\sTopClust{\sU}\). 

\begin{lemma}\zlabel{lemma:dom-equiv-1-top}
    \(\sTopClust{\sU} \sSubsetEq {\relEquiv[1,dom]}\).
\end{lemma}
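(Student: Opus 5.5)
The claim amounts to showing that every \(\vx \in \sTopClust{\sU}\) is \(\relEquiv[1]\)-related to itself. Since \(\relEquiv[1]\) is the intersection of \(\relPreEquiv[1]\) with its converse, it suffices to verify \(\vx \relPreEquiv[1] \vx\), i.e.\@ conditions (C1) and (C2) with \(\vx['] \coloneqq \vx\). Condition (C1) is trivial: given \(\vy \in \fSpan[\sD]{\vx}\), take \(\vy['] \coloneqq \vy\), since \(\relEquiv[0]\) is reflexive on \(\sD\).

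The work lies in (C2). We fix \(\vy, \vy['] \in \fSpan[\sD]{\vx}\) and \(\vz \in \fSpan[\sD]{\vy}\) with \(\vy['] \relEquiv[0] \vy \relEquiv[0] \vz\), and must find \(\vz['] \in \fSpan[\sD]{\vy[']}\) with \(\vz \relEquiv[0] \vz[']\).

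The key input is the defining property of \(\sTopClust{\sU}\): \(\fSpan[\sD]{\vx} \sSubsetEq \relR[converse]{\vx}\). Hence \(\vy\) and \(\vy[']\) both see \(\vx\), and every point of \(\sD\) seen by \(\vx\) sees \(\vx\) back. We then run a case analysis driven by weak transitivity.

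\begin{proofcases}
\item \(\vz = \vx\). Since \(\vx \in \relR{\vy[']} \sIntersection \sD\), we take \(\vz['] \coloneqq \vx\).
\item \(\vz \neq \vx\). From \(\vx \relR \vy \relR \vz\), weak transitivity gives \(\vz \in \relR{\vx} \sIntersection \sD\), so \(\vz \relR \vx\) by the top-cluster property. Now \(\vy['] \relR \vx \relR \vz\), so by weak transitivity either \(\vz \in \relR{\vy[']}\) or \(\vz = \vy[']\).
\begin{proofcases}
\item \(\vz \in \relR{\vy[']}\). We take \(\vz['] \coloneqq \vz\).
\item \(\vz = \vy[']\). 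We want a successor of \(\vy[']\) in \(\sD\) that is \(\relEquiv[0]\)-equivalent to \(\vz\). Use \(\vy['] \relR \vx \relR \vy\): weak transitivity yields either \(\vy \in \relR{\vy[']}\) or \(\vy = \vy[']\). In the first case take \(\vz['] \coloneqq \vy\), which works since \(\vy \relEquiv[0] \vz\). In the second case \(\vz = \vy['] = \vy\), and \(\vz \in \relR{\vy}\) means \(\vy\) is reflexive, so \(\vz['] \coloneqq \vy\) works.
\end{proofcases}
\end{proofcases}

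The only step needing care is this last subcase, where weak transitivity by itself gives no loop. The point is to reuse the route through \(\vx\), namely \(\vy['] \relR \vx \relR \vy\), rather than trying to produce a reflexive loop at \(\vy[']\) directly. All points involved stay in \(\sD\) throughout, as (C2) requires.
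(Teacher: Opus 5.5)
Your proof is correct and follows the paper's approach. You reduce to \(x \sqsubseteq_1 x\), get (C1) from reflexivity of \(\equiv_0\), and get (C2) from weak transitivity routed through \(x\). The paper splits (C2) more economically: if \(y = y'\) it takes \(z' \coloneqq z\); otherwise \(y' \mathrel{R} x \mathrel{R} y\) and weak transitivity give \(y \in R(y')\), so \(z' \coloneqq y\) works because \(y \equiv_0 z\). That second case is exactly the first alternative of your Case~2b, and it makes your split on \(z = x\) unnecessary.
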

\begin{proof}
    Let \(\vx \in \sTopClust[\sU]\).
    Note that it suffices to show \(\vx \relPreEquiv[1] \vx\).
    We check \conda{} and \condb{} for \(\relPreEquiv[1]\).
    For \conda{} note that if \(\vy \in \fSpan[\sD]{\vx}\),
    then, since \(\relEquiv[0]\) is reflexive, we have  \(\vy \relEquiv[0] \vy\).

    For \condb{} assume that  \(\vy, \vy['] \in \fSpan[\sD]{\vx}\) and \(\vz \in \fSpan[\sD]{\vy}\) with \(\vy['] \relEquiv[0] \vy \relEquiv[0] \vz\).
    There are two cases.
    If \(\vy = \vy[']\), then we are done.
    Otherwise, \(\vy \neq \vy[']\) and, since \(\vx \in \sTopClust[\sU]\), \(\vx \in \relR{\vy[']}\).
    By weak transitivity, \(\vy \in \relR{\vy[']}\), and by assumption \(\vz \relEquiv[0] \vy\).
\end{proof}

As the aim of this section is to prove a definability result for \(\sTopClust{\sU}\), we are interested in the definability of the equivalence classes of the introduced relations.

\begin{lemma}\zlabel{lemma:span-equiv-definable}
    Let \(\frmM = \structuple{\sX, \relR, \fVal}\) be a model, \(\vx \in \sX\), and let \(\sY \sSubsetEq \sX\) be such that \(\sY\) is finite and every point in \(\sY\) is definable.
    Then \(\sEquivClass[span=\sY]{\vx}\) is definable in \(\frmM\).
\end{lemma}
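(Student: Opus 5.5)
The class \(\sEquivClass[span=\sY]{\vx}\) consists of the points \(\vx[']\) with \(\relR{\vx[']} \sIntersection \sY = \relR{\vx} \sIntersection \sY\). I will write this condition as a finite Boolean combination of formulas of the form \(\lDiamond\lpsi\) with \(\lpsi\) defining a point of \(\sY\).

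First, since every point \(\vy \in \sY\) is definable, fix for each \(\vy \in \sY\) a formula \(\lchi[\vy]\) with \(\sem{\lchi[\vy]} = \Set{\vy}\). Then, for any \(\vx['] \in \sX\), we have \(\vy \in \relR{\vx[']}\) iff \(\vx['] \in \sem{\lDiamond\lchi[\vy]}\), because \(\sem{\lDiamond\lchi[\vy]} = \relR[converse]{\Set{\vy}}\).

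Second, put \(\sA \coloneqq \fSpan[\sY]{\vx} \sSubsetEq \sY\) and define
\begin{equation*}
    \lphi \coloneqq \bigwedge_{\vy \in \sA} \lDiamond\lchi[\vy] \land \bigwedge_{\vy \in \sY \sMinus \sA} \lnot\lDiamond\lchi[\vy] .
\end{equation*}
This is a formula because \(\sY\) is finite, so both conjunctions are finite; an empty conjunction is read as \(\top\).

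Third, check that \(\sem{\lphi} = \sEquivClass[span=\sY]{\vx}\). By the first step, \(\vx['] \in \sem{\lphi}\) iff \(\sA \sSubsetEq \relR{\vx[']}\) and \(\relR{\vx[']} \sIntersection (\sY \sMinus \sA) = \sEmpty\). Since \(\sA \sSubsetEq \sY\), this holds iff \(\relR{\vx[']} \sIntersection \sY = \sA\), that is, iff \(\fSpan[\sY]{\vx[']} = \fSpan[\sY]{\vx}\). This is exactly \(\vx['] \relEquiv[span=\sY] \vx\).

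There is no real obstacle here. The only points needing care are that finiteness of \(\sY\) makes the conjunctions finite, and that \(\lDiamond\) applied to a singleton defines exactly its set of predecessors. Weak transitivity is not used; the argument works in any Kripke model, which matches the statement.
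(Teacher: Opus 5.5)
Your proof is correct and takes essentially the same route as the paper, which writes the class as \(\lbox(Y \to \mathrm{sp}_Y(x)) \cap \bigcap_{y \in \mathrm{sp}_Y(x)} \ldiamond\{y\}\). Your conjunction \(\bigwedge_{y \in Y \setminus \mathrm{sp}_Y(x)} \lnot\ldiamond\chi_y\) is just the finite unfolding of the box conjunct, so the two definitions coincide.
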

\begin{proof}
	Notice that
	\(
		\sEquivClass[span=\sY]{\vx}
		=
		\lBox{\sY \limplies \fSpan[\sY]{\vx}}
		\sIntersection
		\sSetIntersection\Set{\lDiamond\Set{\vy}}[\vy \in \fSpan[\sY]{\vx}]
		,
	\)
	which is definable, since the intersection is finite and \(\sY\), \(\fSpan[\sY]{\vx}\), and each \(\Set{\vy}\) are definable.
\end{proof}

The next lemma is the first ingredient to our definability result. 

\begin{lemma}\zlabel{lemma:diagram-equiv:equiv-01:finite-definable-equiv-classes}
Suppose that \(\sU\) and \(\lthProp\) are finite and $\sI \coloneqq {\relEquiv[1,dom]}$.
\begin{thmenumerate}
	    \item\zlabel{lemma:diagram-equiv:equiv-01:finite-definable-equiv-classes:1}
        \(\sD[quotient={\relEquiv[0]}]\) is finite, and if every point in \(\sU\) is definable, then each set in \(\sD[quotient={\relEquiv[0]}]\) is definable.
\item\zlabel{lemma:diagram-equiv:equiv-01:finite-definable-equiv-classes:3}
        \(\sI[quotient={\relEquiv[1]}]\) is finite, and if every point in \(\sU\) is definable, then each set in \(\sI[quotient={\relEquiv[1]}]\) is definable.
	\end{thmenumerate}
\end{lemma}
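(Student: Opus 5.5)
For part (i), I will observe that each \(\relEquiv[0]\)-class of a point \(\vx \in \sD\) is determined by the pair \(\paren{\fTy[\lthProp]{\vx}, \fSpan[\sU]{\vx}}\). Since \(\lthProp\) is finite there are at most \(2^{\card{\lthProp}}\) types, and since \(\sU\) is finite there are at most \(2^{\card{\sU}}\) spans, so \(\sD[quotient={\relEquiv[0]}]\) is finite. For definability, I will write the class of \(\vx\) as
\[
\sD \sIntersection \sem{\chi_{\vx}} \sIntersection \sEquivClass[span=\sU]{\vx},
\]
where \(\chi_{\vx}\) is the conjunction of the atoms in \(\fTy[\lthProp]{\vx}\) and the negations of the remaining atoms of \(\lthProp\). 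The last factor is definable by \zcref{lemma:span-equiv-definable} applied with \(\sY \coloneqq \sU\). Moreover, \(\sU\) is a finite union of definable singletons, so \(\sU\) and hence \(\sD = \lnot\sU\) are definable.

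For part (ii), I plan to show that the \(\relPreEquiv[1]\)-behaviour of a point \(\vx \in \sD\) is determined by a finite amount of combinatorial data over the finite set \(\sD[quotient={\relEquiv[0]}]\). Write \(\sC\) for the \(\relEquiv[0]\)-class map. I associate to \(\vx\) two sets. The first is \(A(\vx) \coloneqq \Set{\sC{\vy}}[\vy \in \fSpan[\sD]{\vx}]\). The second is \(B(\vx)\), the set of classes \(c\) such that some \(\vy \in \fSpan[\sD]{\vx}\) with \(\sC{\vy} = c\) has no \(\relR\)-successor in \(\sD\) that is \(\relEquiv[0]\)-equivalent to it, i.e.\@ \(c\) is not seen reflexively inside \(\sD\).

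Condition \conda{} for \(\vx \relPreEquiv[1] \vx[']\) is exactly \(A(\vx) \sSubsetEq A(\vx['])\). For \condb{}, the hypotheses force \(\sC{\vz} = \sC{\vy} = \sC{\vy[']} = c\), and the requirement is that \(\vy[']\) has a \(\sD\)-successor in class \(c\). Hence \condb{} says: if \(c\) is \emph{reflexively realized} below \(\vx\), meaning some \(\vy \in \fSpan[\sD]{\vx}\) of class \(c\) sees a point of class \(c\) in \(\sD\), then every \(\vy['] \in \fSpan[\sD]{\vx[']}\) of class \(c\) sees a point of class \(c\) in \(\sD\). So \(\vx \relPreEquiv[1] \vx[']\) depends only on the pair \(\paren{A(\vx), R(\vx), B(\vx)}\), where \(R(\vx)\) is the set of reflexively realized classes (a triple of subsets of a finite set). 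This gives finitely many \(\relEquiv[1]\)-classes on \(\sI\): each class is a union of fibres of this finitely-valued map, and conversely \(\relEquiv[1]\) is determined by the map on \(\sI\).

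For definability in (ii), I will show that each fibre of the triple map is definable; then each \(\relEquiv[1]\)-class, being \(\sI\) intersected with a finite union of such fibres, is definable once \(\sI\) is. I expect \(\sI\) itself also to be a finite union of fibres, since \(\vx \relPreEquiv[1] \vx\) depends only on the triple. Let \(\delta_c\) define class \(c\), as obtained in part (i). Then:
\begin{itemize}
\item \(c \in A(\vx)\) is defined by \(\lDiamond\paren{\sD \sIntersection \delta_c}\);
\item \(c \in R(\vx)\) is defined by \(\lDiamond\paren{\delta_c \sIntersection \lDiamond\paren{\sD \sIntersection \delta_c}}\);
\item \(c \in B(\vx)\) is defined by \(\lDiamond\paren{\delta_c \sIntersection \lnot\lDiamond\paren{\sD\sIntersection\delta_c}}\).
\end{itemize}
Taking Boolean combinations of these over the finitely many \(c\) defines each fibre.

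The main obstacle is the reformulation of \condb{}: I must check carefully that the quantifier structure matches the sets \(R\) and \(B\), including the condition \(\vz \in \fSpan[\sD]{\vy}\) with \(\vz\) possibly equal to \(\vy\) for reflexive \(\vy\). Once that is settled, the rest is bookkeeping.
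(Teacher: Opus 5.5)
Your proof is correct. Part~(i) is the paper's argument, and you additionally make explicit why \(\sD = \lnot\sU\) is definable.

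In part~(ii) your translation of \conda{} and \condb{} also matches the paper's. Its \(\sY[\vx]\) is \(\{\vx['] \mid A(\vx) \subseteq A(\vx['])\}\). Since \(\lBox{\sEquivClass[0]{\vy} \limplies \lDiamond\sEquivClass[0]{\vy}}\) is the complement of your \(B\)-formula, its \(\sY[', \vx]\) is \(\{\vx['] \mid R(\vx) \cap B(\vx[']) = \emptyset\}\).

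The difference lies in how the pieces are assembled. The paper only records that the principal sets \(\sY[\vx] \sIntersection \sY[', \vx] = \{\vx['] \mid \vx \relPreEquiv[1] \vx[']\}\) take finitely many definable values. It then counts \(\relPreEquiv[1]\)-upsets and downsets, and writes each \(\relEquiv[1]\)-class as an upset intersected with a downset. You make the invariant two-sided, so that \(\relPreEquiv[1]\) itself factors through the finitely-valued map \(\tau\colon \vx \mapsto (A(\vx), R(\vx), B(\vx))\), whose fibres are definable. Then \(\sI = \{\vx \in \sD \mid R(\vx) \cap B(\vx) = \emptyset\}\) and every \(\relEquiv[1]\)-class are finite unions of fibres outright. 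So your hedge ``I expect'' can be dropped, and no intersection with \(\sI\) is needed.

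Your route buys rigour on a point the paper glosses over. \(\relPreEquiv[1]\) is not reflexive outside \(\sI\), so not every \(\relPreEquiv[1]\)-upset is a union of principal ones, and there can be infinitely many non-definable upsets. For example, take \(\sU = \emptyset\), one variable \(p\), reflexive points \(r_p\) and \(r_{\lnot p}\), irreflexive points \(t_p \to r_{\lnot p}\) and \(t_{\lnot p} \to r_p\), and infinitely many irreflexive \(p\)-points each seeing exactly these four. The latter points are bisimilar and \(\relPreEquiv[1]\)-related to nothing in either direction, so every set of them is an upset. The natural repair of the paper's argument defines the principal downsets \(\{\vx['] \mid \vx['] \relPreEquiv[1] \vx\}\) directly. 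That requires \(A\) and \(R\) to be definable as functions of the left argument, which is exactly what your formulas supply. The paper's version is shorter and stays closer to the diagrammatic definition of \(\relPreEquiv[1]\).

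Minor: ``the pair \((A(\vx), R(\vx), B(\vx))\)'' should read ``triple''.
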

\begin{proof}
    \zref{lemma:diagram-equiv:equiv-01:finite-definable-equiv-classes:1}
Note that both \(\fTy[\lthProp]\) and \(\fSpan[\sU]\) attain only finitely many values on \(\sD\), so \(\sD[quotient={\relEquiv[0]}]\) is finite.
        For the second part suppose that every point in \(\sU\) is definable. The set \(\lthProp\) is finite, so the \(\relEquiv[ty=\lthProp]\)-equivalence class of \(\vx\) is definable, since we have
        \begin{equation*}
		\sEquivClass[ty=\lthProp]{\vx}
		= \sSetIntersection\paren{
    		\Set{
    		    \sem{\lap}
    		}[\lap \in  \fTy[\lthProp]{\vx}]
    		\sUnion
    		\Set{
    		    \sem{\lnot\lap}
    		}[\lap \in \lthProp\sMinus \fTy[\lthProp]{\vx}]
		}.
	    \end{equation*}
        Moreover, since \(\sU\) is finite and each point in it is definable, by \zcref{lemma:span-equiv-definable}, \(\sEquivClass[span=\sY]{\vx}\) is definable. Hence, \(\sEquivClass[0]{\vx} =  \sEquivClass[ty=\lthProp]{\vx} \sIntersection \sEquivClass[span=\sY]{\vx} \sIntersection \sD\) is definable.

		\zref{lemma:diagram-equiv:equiv-01:finite-definable-equiv-classes:3} We will show that there are only finitely many \(\relPreEquiv[1]\)-upsets which under the additional assumption are all definable. Then, the same holds for \(\relPreEquiv[1]\)-downsets, since every downset is the complement of an upset, and it follows that there are only finitely many  \(\relEquiv[1]\)-equivalence classes which under the additional assumption are all definable, since each such equivalence class is the intersection of a \(\relPreEquiv[1]\)-upset with a \(\relPreEquiv[1]\)-downset.
		For \(\vx \in \sD\) we define the two sets
        \begin{align*}
            \sY[\vx]
            &\coloneqq
            \sD
            \sIntersection
            \sSetIntersection\Set{
                \lDiamond\sEquivClass[0]{\vy}
            }[\vy \in \fSpan[\sD]{\vx}]
            ,
            \\
            \sY[', \vx]
            &\coloneqq
            \sD
            \sIntersection
            \sSetIntersection\Set{
                \lBox{\sEquivClass[0]{\vy} \limplies \lDiamond\sEquivClass[0]{\vz}}
            }[
                \vy \in \fSpan[\sD]{\vx},
                \vz \in \fSpan[\sD]{\vy},
                \vy \relEquiv[0] \vz
            ]
            .
        \end{align*}
        It is straightforward to check that \(\sY[\vx]\) is the set of points \(\vx[']\) such that condition (C1) of \(\vx \relPreEquiv[1] \vx[']\) is satisfied, and \(\sY[', \vx]\) is the set of points \(\vx[']\) such that condition (C2) of \(\vx \relPreEquiv[1] \vx[']\) is satisfied.
        Hence \(\sY[\vx] \sIntersection \sY[', \vx]\) is the \(\relPreEquiv[1]\)-upset generated by \(\vx\).
        It follows, by \zref{lemma:diagram-equiv:equiv-01:finite-definable-equiv-classes:1}, that there are only finitely many \(\relEquiv[0]\)-equivalence classes, so \(\sY[\vx]\) and \(\sY[', \vx]\) attain only finitely many values for \(\vx \in \sD\).
        Therefore, there are finitely many point-generated \(\relPreEquiv[1]\)-upsets and as \(\relPreEquiv[1]\)-upset is a union of point-generated upsets, there are only finitely many \(\relPreEquiv[1]\)-upset.

        Finally, if each point in \(\sU\) is definable, then, by \zref{lemma:diagram-equiv:equiv-01:finite-definable-equiv-classes:1}, it follows straightforwardly that for each \(\vx \in \sD\) the sets \(\sY[\vx]\) and \(\sY[', \vx]\) are definable, yielding that every \(\relPreEquiv[1]\)-upset is definable.
\end{proof}

Using the relations \(\relEquiv[0]\)  and \(\relEquiv[01]\) we can construct a full bisimulation.\smallskip

\noindent\begin{minipage}{0.8\textwidth}
\begin{construction}[extending \zcref{constr:diagram-equiv:equiv-01}]\zlabel{constr:diagram-equiv:equiv-2}
    Let \(\relEquiv[d]\) be the largest relation contained in \(\relEquiv[01]\) such that for all \(\vx \relEquiv[d] \vx[']\), \(\vy \in \fSpan[refl trans closure, \sD]{\vx}\) and \(\vy['] \in \fSpan[refl trans closure, \sD]{\vx[']}\), if \(\vy \relEquiv[0] \vy[']\) then \(\vy \relEquiv[1] \vy[']\).
    This definition is diagrammatically depicted on the right, where arrows with \(\ast\) at the tail denote edges in \(\relR[refl trans closure]\).

\end{construction}
\end{minipage}
\begin{minipage}{0.2\textwidth}
\centering
\begin{tikzcd}
	   \vy & {y\smash{^\prime}\!\!} \\
	   \vx & {x\smash{^\prime}\!\!}
	   \arrow["{\relEquiv[0]}"', shift right, no head, from=1-1, to=1-2]
	   \arrow["{\relEquiv[1]}", shift left, dashed, no head, from=1-1, to=1-2]
	   \arrow[from=2-1, to=1-1,ast->]
	   \arrow["{\relEquiv[01]}", no head, from=2-1, to=2-2]
	   \arrow[from=2-2, to=1-2,ast->]
    \end{tikzcd}
\end{minipage}

\begin{lemma}\zlabel{lemma:diagram-equiv:equiv-2:bisim}
    \begin{thmenumerate}
        \item\zlabel{lemma:diagram-equiv:equiv-2:upward-0-upgrade}
        Let \(\vx, \vx['], \in \sD\), \(\vy \in \fSpan[\sD]{\vx}\) and \(\vy['] \in \fSpan[\sD]{\vx[']}\) with \(\vx \relEquiv[d] \vx[']\) and \(\vy \relEquiv[0] \vy[']\).
	    Then \(\vy \relEquiv[d] \vy[']\).
	    In particular, \(\sU \sUnion {\relEquiv[d,dom]}\) is an upset.
        \item\zlabel{lemma:diagram-equiv:equiv-2:bisim:item}
        The relation \({\relEquiv[d]} \sUnion {\relDiag[\sU]}\) is a bisimulation equivalence on \(\sU \sUnion {\relEquiv[d,dom]}\).
    \end{thmenumerate}
\end{lemma}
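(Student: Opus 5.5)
The plan is to exploit the maximality in the definition of \(\relEquiv[d]\) throughout. First observe that the defining property is closed under arbitrary unions. It is also closed under converse, because \(\relEquiv[01]\) is symmetric and the diagram condition is symmetric in \(\vx\) and \(\vx[']\). Hence \(\relEquiv[d]\) exists, is symmetric, and contains every relation \(\relE \sSubsetEq {\relEquiv[01]}\) that satisfies the diagram condition. Each claim below is then proved the same way: exhibit a suitable relation \(\relE\), check that \(\relE \sSubsetEq {\relEquiv[01]}\) and that \(\relE\) satisfies the diagram condition, and conclude \(\relE \sSubsetEq {\relEquiv[d]}\).

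For \zref{lemma:diagram-equiv:equiv-2:upward-0-upgrade}, let \(\relE\) consist of the pairs \(\homotuple{\vy, \vy[']}\) as in the statement. Since \(\vy \in \relR[refl trans closure]{\vx}\), \(\vy['] \in \relR[refl trans closure]{\vx[']}\) and \(\vy \relEquiv[0] \vy[']\), the diagram condition for \(\vx \relEquiv[d] \vx[']\) gives \(\vy \relEquiv[1] \vy[']\), so \(\vy \relEquiv[01] \vy[']\).

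For the diagram condition on \(\relE\), take \(\vz \in \fSpan[refl trans closure, \sD]{\vy}\) and \(\vz['] \in \fSpan[refl trans closure, \sD]{\vy[']}\) with \(\vz \relEquiv[0] \vz[']\). By weak transitivity, \(\vz \in \relR[refl trans closure]{\vx}\), since \(\vz \in \relR{\vy}\) forces \(\vz \in \relR{\vx}\) or \(\vz = \vx\); similarly \(\vz['] \in \relR[refl trans closure]{\vx[']}\). The condition for \(\vx \relEquiv[d] \vx[']\) then yields \(\vz \relEquiv[1] \vz[']\), so \(\relE \sSubsetEq {\relEquiv[d]}\).

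For the upset claim, let \(\vx \in {\relEquiv[d,dom]}\), say \(\vx \relEquiv[d] \vx[']\), and let \(\vy \in \relR{\vx}\). Either \(\vy \in \sU\), or \(\vy \in \sD\). In the second case, \(\vx \relPreEquiv[1] \vx[']\) via \conda{} provides \(\vy['] \in \fSpan[\sD]{\vx[']}\) with \(\vy \relEquiv[0] \vy[']\), and the first part gives \(\vy \in {\relEquiv[d,dom]}\).

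For \zref{lemma:diagram-equiv:equiv-2:bisim:item}, I first show that \(\relEquiv[d]\) is an equivalence relation on its domain. I expect this to be the main obstacle, because \(\relEquiv[1]\) is only an equivalence on its own domain and the diagram condition relates \(\relR[refl trans closure]{\vx}\) only to \(\relR[refl trans closure]{\vx[']}\), never to \(\relR[refl trans closure]{\vx}\) itself. Symmetry was noted above. For transitivity, take \(\relE \coloneqq {\relEquiv[d]} \circ {\relEquiv[d]}\). Then \(\relE \sSubsetEq {\relEquiv[01]}\), by transitivity of \(\relEquiv[0]\) and of \(\relEquiv[1]\) on its domain.

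For the diagram condition on \(\relE\), let \(\vx \relEquiv[d] \vx['] \relEquiv[d] \vx['']\), \(\vy \in \fSpan[refl trans closure, \sD]{\vx}\), \(\vy[''] \in \fSpan[refl trans closure, \sD]{\vx['']}\) and \(\vy \relEquiv[0] \vy['']\). I produce an intermediate \(\vy['] \in \fSpan[refl trans closure, \sD]{\vx[']}\) with \(\vy \relEquiv[0] \vy[']\). If \(\vy = \vx\), take \(\vy['] = \vx[']\); otherwise use \conda{} of \(\vx \relPreEquiv[1] \vx[']\). The two diagram conditions then give \(\vy \relEquiv[1] \vy[']\) and \(\vy['] \relEquiv[1] \vy['']\), hence \(\vy \relEquiv[1] \vy['']\) by transitivity. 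Reflexivity on the domain follows from symmetry and transitivity.

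Since \(\relEquiv[d]\) lives on \(\sD\), the union \({\relEquiv[d]} \sUnion {\relDiag[\sU]}\) is an equivalence relation on \(\sU \sUnion {\relEquiv[d,dom]}\). It preserves atoms because \({\relEquiv[d]} \sSubsetEq {\relEquiv[0]} \sSubsetEq {\relEquiv[ty=\lthProp]}\). By symmetry, it suffices to check the forth condition. Pairs in \(\relDiag[\sU]\) are trivial, since \(\sU\) is an upset. Now let \(\vx \relEquiv[d] \vx[']\) and \(\vy \in \relR{\vx}\). If \(\vy \in \sU\), then \(\vy \in \fSpan[\sU]{\vx} = \fSpan[\sU]{\vx[']}\), because \(\relEquiv[0]\) includes \(\relEquiv[span=\sU]\); we match \(\vy\) with itself via \(\relDiag[\sU]\). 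If \(\vy \in \sD\), then \conda{} gives \(\vy['] \in \fSpan[\sD]{\vx[']}\) with \(\vy \relEquiv[0] \vy[']\), and part \zref{lemma:diagram-equiv:equiv-2:upward-0-upgrade} upgrades this to \(\vy \relEquiv[d] \vy[']\).
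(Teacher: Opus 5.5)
Your proof is correct and follows the paper's argument for both parts. In (i) you upgrade $y \equiv_0 y'$ to $y \equiv_d y'$ using weak transitivity and the defining property of $\equiv_d$, and in (ii) you get the forth condition from (C1) combined with (i); beyond this, you also spell out two points the paper leaves implicit, namely that $U \cup \operatorname{dom}(\equiv_d)$ is an upset and that $\equiv_d$ is an equivalence relation on its domain (your transitivity argument via an intermediate point of $R^*(x') \cap D$ supplied by (C1) is sound).
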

\begin{proof}
    \zref{lemma:diagram-equiv:equiv-2:upward-0-upgrade}
    By the defining property of \(\relEquiv[d]\), we see that \(\vy \relEquiv[1] \vy[']\), hence \(\vy \relEquiv[01] \vy[']\).
	Suppose \(\vz \in \fSpan[refl trans closure, \sD]{\vy}\) and \(\vz['] \in \fSpan[refl trans closure, \sD]{\vy[']}\) with \(\vz \relEquiv[0] \vz[']\).
	By weak transitivity, \(\vz \in \fSpan[refl trans closure, \sD]{\vx}\) and \(\vz['] \in \fSpan[refl trans closure, \sD]{\vx[']}\).
	Then, since \(\vx \relEquiv[d] \vx[']\), it follows that \(\vz \relEquiv[1] \vz[']\).
	We conclude that \(\vy \relEquiv[d] \vy[']\).

    \zref{lemma:diagram-equiv:equiv-2:bisim:item}
    Clearly \({\relEquiv[d]} \sUnion {\relDiag[\sU]}\) is an equivalence relation.
    For the preservation of atomic propositions, notice that \({\relEquiv[d]} \sSubsetEq {\relEquiv[0]} \sSubsetEq {\relEquiv[ty=\lthProp]}\).
	For the back- and forth-conditions, suppose \(\vx \relEquiv[d] \vx[']\) and \(\vy \in \relR{\vx}\).
	If \(\vx \in \sU\) or \(\vx['] \in \sU\) then \(\vx = \vx[']\), and hence \(\vy \in \relR{\vx[']}\).
	Otherwise, \(\vx,\vx['] \in {\relEquiv[d,dom]}\).
    If \(\vy \in \sU\), then \(\vy \relEquiv[d] \vy\)  and \(\vy \in \fSpan[\sU]{\vx} = \fSpan[\sU]{\vx[']}\), since \(\vx \relEquiv[span=\sU] \vx[']\).
So suppose \(\vy \in \sD\).
	Since \(\vx \relEquiv[1] \vx[']\), by condition (C1) of \(\relEquiv[1]\), there exists \(\vy['] \in \fSpan[\sD]{\vx[']}\) such that \(\vy \relEquiv[0] \vy[']\) and \zref{lemma:diagram-equiv:equiv-2:upward-0-upgrade} yields \(\vy \relEquiv[d] \vy[']\).
\end{proof}

The usefulness of this bisimulation equivalence \(\relEquiv[d]\) lies in the fact that its equivalence classes are, under mild conditions on \(\sU\), definable and finitely many, like those of \(\relEquiv[0]\) and \(\relEquiv[1]\).

\begin{lemma}\zlabel{lemma:diagram-equiv:finite-definable-equiv-classes}
Suppose that \(\sU\) and \(\lthProp\) are finite. Then there are only finitely many \(\relEquiv[d]\)-equivalence classes. If, in addition, each point in \(\sU\) is definable, then each \(\relEquiv[d]\)-equivalence class is definable and, in particular, \({\relEquiv[d,dom]}\) is definable.
\end{lemma}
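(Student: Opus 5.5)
The plan is to show that \(\vx \relEquiv[d] \vx[']\) depends only on finitely much definable information about \(\vx\) and \(\vx[']\). I would bundle this information into an invariant taking finitely many values, each of whose fibres is definable.

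First I would observe that the condition in \zcref{constr:diagram-equiv:equiv-2} is a property of a single pair \(\homotuple{\vx,\vx[']}\); it does not refer to \(\relEquiv[d]\) itself. Hence the union of all relations contained in \(\relEquiv[01]\) that satisfy it still satisfies it. So \(\relEquiv[d]\) is exactly the set of pairs \(\vx \relEquiv[01] \vx[']\) such that \(\vy \relEquiv[1] \vy[']\) whenever \(\vy \in \fSpan[refl trans closure, \sD]{\vx}\), \(\vy['] \in \fSpan[refl trans closure, \sD]{\vx[']}\) and \(\vy \relEquiv[0] \vy[']\).

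Next, write \(\sI \coloneqq {\relEquiv[1,dom]}\). Let \(\mathcal{C}_0 \coloneqq \sD[quotient={\relEquiv[0]}]\) and \(\mathcal{C}_1 \coloneqq \sI[quotient={\relEquiv[1]}] \sUnion \Set{\sD \sMinus \sI}\). Both are finite by \zcref{lemma:diagram-equiv:equiv-01:finite-definable-equiv-classes}. For \(\vy \in \sD\) let \(c_0(\vy)\) and \(c_1(\vy)\) denote the member of \(\mathcal{C}_0\), respectively \(\mathcal{C}_1\), containing \(\vy\). For \(\vx \in \sD\) define the invariant
\begin{equation*}
    \iota(\vx) \coloneqq \paren{c_0(\vx), c_1(\vx), \Set{\homotuple{c_0(\vy), c_1(\vy)}}[\vy \in \fSpan[refl trans closure, \sD]{\vx}]}.
\end{equation*}
By the reformulation above, whether \(\vx \relEquiv[d] \vx[']\) holds is decided by \(\iota(\vx)\) and \(\iota(\vx['])\). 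Concretely, it holds iff \(c_0\) agrees, \(c_1\) agrees and lies in \(\sI\), and for every \(\relEquiv[0]\)-class occurring in both third components, all \(c_1\)-values paired with it in either component are one and the same class lying in \(\sI\).

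Consequently each \(\relEquiv[d]\)-class is a union of fibres of \(\iota\). Since \(\iota\) takes values in a finite set, there are only finitely many \(\relEquiv[d]\)-classes.

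For definability, assume additionally that every point of \(\sU\) is definable. Then \(\sU\) is definable as a finite union of definable points, hence so is \(\sD\). Every member of \(\mathcal{C}_0\) and \(\mathcal{C}_1\) is definable by \zcref{lemma:diagram-equiv:equiv-01:finite-definable-equiv-classes}; note that \(\sD \sMinus \sI\) is the complement within \(\sD\) of the finitely many definable \(\relEquiv[1]\)-classes. For a finite set \(\sS\) of pairs \(\homotuple{\sA,\sB}\), weak transitivity makes \(\relR[refl trans closure]\) the reflexive closure of \(\relR\). Hence \(\Set{\homotuple{c_0(\vy), c_1(\vy)}}[\vy \in \fSpan[refl trans closure, \sD]{\vx}] = \sS\) holds exactly for \(\vx\) in
\begin{equation*}
    \sSetIntersection\Set{(\sA\sIntersection\sB) \sUnion \lDiamond(\sD\sIntersection\sA\sIntersection\sB)}[\homotuple{\sA,\sB}\in\sS] \sIntersection (\sD \limplies \sSetUnion\Set{\sA\sIntersection\sB}[\homotuple{\sA,\sB}\in\sS]) \sIntersection \lBox(\sD \limplies \sSetUnion\Set{\sA\sIntersection\sB}[\homotuple{\sA,\sB}\in\sS]),
\end{equation*}
intersected with \(\sD\). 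Each fibre of \(\iota\) is then this set intersected with the two classes \(c_0(\vx)\) and \(c_1(\vx)\), hence definable. Each \(\relEquiv[d]\)-class is a finite union of fibres and is therefore definable. Finally, \({\relEquiv[d,dom]}\) is the finite union of the \(\relEquiv[d]\)-classes, since \(\relEquiv[d]\) is an equivalence relation on its domain by \zcref{lemma:diagram-equiv:equiv-2:bisim}; so it is definable too.

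The main obstacle I expect is the first step: checking carefully that \(\relEquiv[d]\) really is the pointwise-characterized relation and that it is determined by \(\iota\). In particular, the \(\relEquiv[1]\) requirement must be stated with \(\vy,\vy['] \in \sI\), which is why \(\sD\sMinus\sI\) is included in \(\mathcal{C}_1\) as a separate value.
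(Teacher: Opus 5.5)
Your proof is correct, and your invariant \(\iota\) carries exactly the data the paper uses, but the paper gets finiteness and definability in a single step. For \(\vx \in {\relEquiv[d,dom]}\) it writes
\(
{\sEquivClass[d]{\vx}}
=
{\sEquivClass[01]{\vx}} \sIntersection
\sSetIntersection\Set{
    \lBox[refl closure]{
        {\sEquivClass[0]{\vy}} \limplies {\sEquivClass[1]{\vy}}
    }
}[\vy \in \Set{\vx} \sUnion \fSpan[refl trans closure, \sD]{\vx}]
\).
This tacitly uses your pointwise characterisation of \(\relEquiv[d]\), and reads \(\sEquivClass[1]{\vy}\) as \(\sEmpty\) for \(\vy \notin \sI\). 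It then notes, via \zcref{lemma:diagram-equiv:equiv-01:finite-definable-equiv-classes}, that every set in this formula ranges over finitely many definable values.

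The two routes test the information on different sides. The paper checks only universal \(\lBox[refl closure]\)-conditions at the second point \(\vx[']\), with the conditions fixed by the data of \(\vx\). Each class is then directly a finite modal combination of definable sets.

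You instead compute the full invariant of each point and show that \(\relEquiv[d]\) is decided by the pair of invariants. This forces you to describe the realised set of pairs exactly, using \(\lDiamond\)-clauses as well as \(\lBox\)-clauses, and then to take finite unions of fibres. So your version is somewhat longer. In return it makes explicit two points the paper leaves implicit: that the largest relation in \zcref{constr:diagram-equiv:equiv-2} is just the set of all qualifying pairs, and how points of \(\sD \sMinus \sI\) are handled.
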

\begin{proof}
    Notice that for \(\vx \in {\relEquiv[d,dom]}\), we have
    \(
        {\sEquivClass[d]{\vx}}
        =
        {\sEquivClass[01]{\vx}} \sIntersection
        \sSetIntersection\Set{
            \lBox[refl closure]{
                {\sEquivClass[0]{\vy}} \limplies {\sEquivClass[1]{\vy}}
            }
        }[\vy \in \Set{\vx} \sUnion \fSpan[refl trans closure, \sD]{\vx}]
    \).
    By \zcref{lemma:diagram-equiv:equiv-01:finite-definable-equiv-classes}, all the involved sets admit only finitely many values, and under the additional assumption they are definable.
\end{proof}

The next lemma shows that \(\sTopClust[\sU]\) is contained in \(\relEquiv[d,dom]\). 

\begin{lemma}\zlabel{lemma:diagram-equiv:tiptop-contains-topclust}
\(\sTopClust[\sU] \sSubsetEq {\relEquiv[d,dom]}\).
\end{lemma}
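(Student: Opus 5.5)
The plan is to produce an explicit relation that lies inside \(\relEquiv[01]\), satisfies the closure condition of \zcref{constr:diagram-equiv:equiv-2}, and contains \(\homotuple{\vx, \vx}\) for every \(\vx \in \sTopClust[\sU]\). Maximality of \(\relEquiv[d]\) then gives the inclusion. The condition defining \(\relEquiv[d]\) is a universal statement about pairs in the relation. Hence the union of all relations contained in \(\relEquiv[01]\) that satisfy it again satisfies it, and that union is \(\relEquiv[d]\). Consequently any single such relation is contained in \(\relEquiv[d]\). I will take the diagonal on \(\sTopClust[\sU]\), namely \(\relE \coloneqq \Set{\homotuple{\vx,\vx}}[\vx \in \sTopClust[\sU]]\).

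First, \(\relE \sSubsetEq \relEquiv[01]\). Reflexivity of \(\relEquiv[0]\) on \(\sD\) gives \(\vx \relEquiv[0] \vx\), and \zcref{lemma:dom-equiv-1-top} gives \(\vx \relEquiv[1] \vx\) for \(\vx \in \sTopClust[\sU]\).

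Second, the closure condition. Let \(\vx \in \sTopClust[\sU]\) and \(\vy, \vy['] \in \fSpan[refl trans closure, \sD]{\vx}\) with \(\vy \relEquiv[0] \vy[']\); I must show \(\vy \relEquiv[1] \vy[']\). The key structural observation comes from the definition of \(\sTopClust[\sU]\) together with weak transitivity. Every point of \(\fSpan[refl trans closure, \sD]{\vx}\) sees \(\vx\) or equals \(\vx\). So any two distinct points of \(\fSpan[refl trans closure, \sD]{\vx}\) see each other: if \(\vu \neq \vv\), then \(\vu \relR \vx \relR \vv\), or one of them is \(\vx\). Moreover, since \(\sTopClust[\sU] \sUnion \sU\) is an upset (\zcref{lemma:top-clust:upset}), all \(\sD\)-successors of points in this set stay inside it. By symmetry it suffices to verify \(\vy \relPreEquiv[1] \vy[']\).

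For \conda{}, let \(\vz \in \fSpan[\sD]{\vy}\). If \(\vz \neq \vy[']\), then \(\vz \in \relR{\vy[']}\) by the observation, and I take \(\vz['] \coloneqq \vz\). If \(\vz = \vy[']\) and \(\vy \neq \vy[']\), then \(\vy \in \relR{\vy[']}\), and I take \(\vz['] \coloneqq \vy\), using \(\vy \relEquiv[0] \vy['] = \vz\). If \(\vz = \vy = \vy[']\), then \(\vz \in \relR{\vy[']}\) trivially.

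For \condb{}, let \(\vy[1] \in \fSpan[\sD]{\vy}\), \(\vz \in \fSpan[\sD]{\vy[1]}\) and \(\vy[1, '] \in \fSpan[\sD]{\vy[']}\) with \(\vy[1, '] \relEquiv[0] \vy[1] \relEquiv[0] \vz\). The same three-case analysis works, with \(\vy[1], \vy[1, ']\) in place of \(\vy, \vy[']\). Take \(\vz['] \coloneqq \vz\) if \(\vz \neq \vy[1, ']\), or if \(\vz = \vy[1] = \vy[1, ']\). Take \(\vz['] \coloneqq \vy[1]\) if \(\vz = \vy[1, '] \neq \vy[1]\). In each case the required \(\relEquiv[0]\)-relation follows from the hypotheses. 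Since all points involved lie in \(\fSpan[refl trans closure, \sD]{\vx}\), the observation applies throughout.

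The main obstacle is this case analysis. Weak transitivity only yields "sees or equals", so the degenerate cases where a witness coincides with the base point must be handled using the \(\relEquiv[0]\)-equivalence \(\vy \relEquiv[0] \vy[']\) (respectively \(\vy[1] \relEquiv[0] \vy[1, ']\)). The remaining steps are formal.
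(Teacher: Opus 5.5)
Your proof is correct, and its skeleton matches the paper's. Both reduce the claim to \(\homotuple{\vx, \vx} \in \relEquiv[d]\), i.e.\ to showing \(\vy \relPreEquiv[1] \vy[']\) for all \(\vy, \vy['] \in \fSpan[refl trans closure, \sD]{\vx}\) with \(\vy \relEquiv[0] \vy[']\), and then check \conda{} and \condb{} by a weak-transitivity case analysis.

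You differ in how that case analysis is organised. The paper splits on whether \(\vx\) itself sees the new point (\(\relR{\vx, \vz}\), resp.\ \(\relR{\vx, \vz}\) and \(\relR{\vx, \vz[']}\)). It uses that \(\vy\) and \(\vy[']\) again lie in \(\sTopClust[\sU]\), and in the generic subcases it appeals to \condb{} for \(\vx \relPreEquiv[1] \vx\) from \zcref{lemma:dom-equiv-1-top}. You instead isolate one structural fact: \(\fSpan[refl trans closure, \sD]{\vx}\) is closed under \(\sD\)-successors, and any two distinct points of it see each other. After that, each condition becomes a three-way split on whether the candidate witness coincides with the base point on the other side. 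Your version is shorter and self-contained: the special case \(\vy = \vy['] = \vx\) already gives \(\vx \relEquiv[1] \vx\), so \zcref{lemma:dom-equiv-1-top} is not needed. The paper's version instead tracks explicitly where each point sits relative to \(\vx\) and reuses the earlier lemma.

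One justification should be repaired. Closure of \(\fSpan[refl trans closure, \sD]{\vx}\) under \(\sD\)-successors follows from transitivity of \(\relR[refl trans closure]\), i.e.\ from weak transitivity. It does not follow from \zcref{lemma:top-clust:upset}, which only keeps you inside \(\sTopClust[\sU]\); there, distinct points need not see each other, since they may lie in different top clusters.
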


\begin{proof}
    Let \(\vx \in \sTopClust[\sU]\). By \zcref{lemma:dom-equiv-1-top}, we have \(\sTopClust{\sU} \sSubsetEq {\relEquiv[1,dom]}\).

    Hence, for proving that \(\vx \in {\relEquiv[d,dom]}\), we need to show that for all \(\vy, \vy['] \in \fSpan[refl trans closure, \sD]{\vx}\), if \(\vy \relEquiv[0] \vy[']\) then \(\vy \relEquiv[1] \vy[']\).
    So let \(\vy, \vy['] \in \fSpan[refl trans closure, \sD]{\vx}\) with \(\vy \relEquiv[0] \vy[']\).
    It suffices to show \(\vy \relPreEquiv[1] \vy[']\), since \(\relEquiv[0]\) is symmetric.
    So we need to check the two defining properties \conda{} and \condb{}.

    For \conda{}, note first that, by definition of \(\sTopClust[\sU]\), we have \(\vx \in \fSpan[refl trans closure,\sD]{\vy[']}\) and \(\vy,\vy[']\in\sTopClust[\sU]\).
    Let \(\vz \in \relR{\vy}\).
    We show that there exists a \(\vz['] \in \fSpan[\sD]{\vy[']}\) with \(\vz \relEquiv[0] \vz[']\). We distinguish two cases.
    \begin{proofcases}
        \item
        Suppose \(\relR{\vx, \vz}\).
        This case is depicted diagrammatically in \zcref{fig:diagram-equiv:proof-tiptop-contains-topclust-condition-1-case-1}.

    \begin{figure}
    \centering
    \begin{tabular}{c|ccc}
\begin{tikzcd}
           	z &  \\
           	y & {y'} \\
           	x & x
           	\arrow[from=2-1, to=1-1]
           	\arrow["{\equiv_0}", no head, from=2-1, to=2-2]
           	\arrow[shift left, from=2-2, to=3-2,ast->]
           	\arrow[curve={height=-12pt}, from=3-1, to=1-1]
           	\arrow[from=3-1, to=2-1,ast->]
           	\arrow["{\equiv_1}", no head, from=3-1, to=3-2]
           	\arrow[shift left, from=3-2, to=2-2,ast->]
        \end{tikzcd}
        &
\begin{tikzcd}
           	z & {z} \\
           	y & {y'} \\
           	x & x
           	\arrow["{\equiv_0}",  no head, from=1-1, to=1-2]
           	\arrow[from=2-1, to=1-1]
           	\arrow["{\equiv_0}", no head, from=2-1, to=2-2]
           	\arrow[shift left, from=2-2, to=3-2,ast->]
            \arrow[dashed, from=2-2, to=1-2]
           	\arrow[curve={height=-12pt}, from=3-1, to=1-1]
           	\arrow[from=3-1, to=2-1,ast->]
           	\arrow["{\equiv_1}", no head, from=3-1, to=3-2]
           	\arrow[shift left, from=3-2, to=2-2,ast->]
        \end{tikzcd}
        &
\begin{tikzcd}
           	{y'} & {z'} \\
           	y & {y'} \\
           	x & x
           	\arrow["{\equiv_0}", dashed, no head, from=1-1, to=1-2]
           	\arrow["{\equiv_0}", shift left, no head, from=2-1, to=1-1]
           	\arrow[shift right, from=2-1, to=1-1]
           	\arrow["{\equiv_0}", no head, from=2-1, to=2-2]
           	\arrow[dashed, from=2-2, to=1-2]
           	\arrow[from=3-1, to=2-1]
           	\arrow["{\equiv_1}", no head, from=3-1, to=3-2]
           	\arrow[from=3-2, to=2-2]
        \end{tikzcd}
        &
\begin{tikzcd}
           	{y'} & x \\
           	x & {y'} \\
           	x & x
           	\arrow["{\equiv_0}", dashed, no head, from=1-1, to=1-2]
           	\arrow["{\equiv_0}", shift left, no head, from=2-1, to=1-1]
           	\arrow[shift right, from=2-1, to=1-1]
           	\arrow["{\equiv_0}", no head, from=2-1, to=2-2]
           	\arrow[dashed, from=2-2, to=1-2]
           	\arrow[shift left, from=2-2, to=3-2]
           	\arrow[from=3-1, to=2-1,ast->]
           	\arrow["{\equiv_1}", no head, from=3-1, to=3-2]
           	\arrow[shift left, from=3-2, to=2-2]
        \end{tikzcd}
        \\
        &
        \(\relR{y',z}\)
        &
        \(y' = z\), \(\relR{x,y'}\), \(\relR{x,y}\)
        &
        \(y'  = z\), \(\relR{x,y'}\), \(x = y\)
    \end{tabular}
    \caption{Case 1 (\(\relR{x,z}\)) in the proof of \conda{} in \zcref{lemma:diagram-equiv:tiptop-contains-topclust}.}\zlabel{fig:diagram-equiv:proof-tiptop-contains-topclust-condition-1-case-1}
\end{figure}

        If \(\relR{\vy['], \vz}\), then we are done, since by reflexivity of \(\relEquiv[0]\), we get \(\vz \relEquiv[0] \vz\).
        So suppose \(\neg\relR{\vy['], \vz}\).
        Since \(\relR[refl trans closure]{\vy['], \vx}\) and \(\relR{\vx, \vz}\), by weak transitivity, we get \(\vy['] = \vz\) and \(\vy['] \neq \vx\).
        Hence, \(\relR{\vx, \vy[']}\), \(\relR{\vy['],\vx}\), and \(\vy \relEquiv[0] \vy['] = \vz\).
        We distinguish two subcases.
        \begin{proofcaseitem}
            \item
            If \(\relR{\vx, \vy}\), then we can apply \condb{} to find \(\vz['] \in \fSpan[\sD]{\vy[']}\) with \(\vz \relEquiv[0] \vz[']\).
            \item
            If \(\neg\relR{\vx, \vy}\), then, since \(\relR[refl trans closure]{\vx, \vy}\), we get \(\vx = \vy\);
            so for \(\vz['] \coloneqq \vx\), we get \(\vz \relEquiv[0] \vz'\) and \(\relR{\vy['], \vz[']}\).
        \end{proofcaseitem}

        \item
        Suppose \(\neg\relR{\vx, \vz}\).
        By weak transitivity \(\vx = \vz\).
        Since \(\vy \in \sTopClust[\sU]\) and \(\relR{\vy, \vz}\), we get \(\relR{\vx, \vy}\).
        This is depicted in \zcref{fig:diagram-equiv:proof-tiptop-contains-topclust-condition-1-case-2}.
        We distinguish two subcases.

\begin{figure}
    \centering
    \begin{tabular}{c|cc}
\begin{tikzcd}
           	x \\
           	y & {y'} \\
           	x & x
           	\arrow[from=2-1, to=1-1]
           	\arrow["{\equiv_0}", no head, from=2-1, to=2-2]
           	\arrow[from=3-1, to=2-1]
           	\arrow["{\equiv_1}", no head, from=3-1, to=3-2]
           	\arrow[from=3-2, to=2-2,ast->]
        \end{tikzcd}
        &
\begin{tikzcd}
           	x & x \\
           	y & {y'} \\
           	x & x
           	\arrow["{\equiv_0}", dashed, no head, from=1-1, to=1-2]
           	\arrow[from=2-1, to=1-1]
           	\arrow["{\equiv_0}", no head, from=2-1, to=2-2]
           	\arrow[dashed, from=2-2, to=1-2]
           	\arrow[from=3-1, to=2-1]
           	\arrow["{\equiv_1}", no head, from=3-1, to=3-2]
           	\arrow[from=3-2, to=2-2]
        \end{tikzcd}
        &
\begin{tikzcd}
           	x & y \\
           	y & x \\
           	x & x
           	\arrow["{\equiv_0}", dashed, no head, from=1-1, to=1-2]
           	\arrow[from=2-1, to=1-1]
           	\arrow["{\equiv_0}", no head, from=2-1, to=2-2]
           	\arrow[dashed, from=2-2, to=1-2]
           	\arrow[from=3-1, to=2-1]
           	\arrow["{\equiv_1}", no head, from=3-1, to=3-2]
           	\arrow["{=}"', no head, from=3-2, to=2-2]
        \end{tikzcd}
        \\
        &
        \(\relR{x,y'}\)
        &
        \(x = y'\)
    \end{tabular}
    \caption{Case 2 (\(\vx = \vz\)) in the proof of \conda{} in \zcref{lemma:diagram-equiv:tiptop-contains-topclust}.}\zlabel{fig:diagram-equiv:proof-tiptop-contains-topclust-condition-1-case-2}
\end{figure}

        \begin{proofcaseitem}
            \item
            Suppose \(\relR{\vx, \vy[']}\).
            Since \(\vx \in \sTopClust[\sU]\), we get \(\relR{\vy['], \vx}\).
            So \(\vz['] \coloneqq \vx\) closes the diagram.
            \item
            Suppose \(\neg\relR{\vx, \vy[']}\).
            Since \(\relR[refl trans closure]{\vx, \vy[']}\), we obtain \(\vx = \vy[']\), so \(\vz['] \coloneqq \vy\) closes the diagram.
        \end{proofcaseitem}
    \end{proofcases}

    For \condb{} let \(\vz \in \fSpan[\sD]{\vy}\), \(\vw \in \fSpan[\sD]{\vz}\) and \(\vz['] \in \fSpan[\sD]{\vy[']}\) with \(\vz['] \relEquiv[0] \vz \relEquiv[0] \vw\).
    We want to show that there exists \(\vw['] \in \fSpan[\sD]{\vz[']}\) such that \(\vw \relEquiv[0] \vw[']\), i.e., we want to complete the diagram on the left in \zcref{fig:diagram-equiv:proof-tiptop-contains-topclust-condition-2}.
We distinguish four cases which are depicted diagrammatically in \zcref{fig:diagram-equiv:proof-tiptop-contains-topclust-condition-2}.
    \begin{proofcases}
        \item
        If \(\relR{\vx, \vz}\) and \(\relR{\vx, \vz[']}\), then we can apply \condb{} directly to find \(\vw['] \in \fSpan[\sD]{\vz[']}\) with \(\vw \relEquiv[0] \vw[']\).

        \item
        If \(\neg\relR{\vx, \vz}\) and \(\relR{\vx, \vz[']}\), then, by weak transitivity, \(\vx = \vz\), so \(\vx \relEquiv[0] \vw\). Moreover, since \(\vx \in \sTopClust[\sU]\), we also get \(\relR{\vz['], \vx}\), so \(\vw['] \coloneqq \vx\) closes the diagram.

        \item
        If \(\relR{\vx, \vz}\) and \(\neg\relR{\vx,\vz[']}\), then, by weak transitivity, \(\vx = \vz[']\), so \(\vw['] \coloneqq \vz\) closes the diagram.

        \item
        If \(\neg\relR{\vx, \vz}\) and \(\neg\relR{\vx, \vz[']}\), then, by weak transitivity, \(\vx = \vz = \vz[']\), so \(\vw['] \coloneqq \vw\) closes the diagram.\qedhere
    \end{proofcases}
\end{proof}

\begin{figure}
    \centering
    \begin{tabular}{c|cccc}
\begin{tikzcd}
    	   {w} & \\
    	   z & {z'} \\
    	   y & {y'} \\
    	   x & x
\arrow["{\equiv_0}", shift left, no head, from=2-1, to=1-1]
    	   \arrow[shift right, from=2-1, to=1-1]
    	   \arrow["{\equiv_0}", no head, from=2-1, to=2-2]
\arrow[from=3-1, to=2-1]
    	   \arrow["{\equiv_0}", no head, from=3-1, to=3-2]
    	   \arrow[from=3-2, to=2-2]
    	   \arrow[from=4-1, to=3-1,ast->]
    	   \arrow["{\equiv_1}", no head, from=4-1, to=4-2]
    	   \arrow[from=4-2, to=3-2,ast->]
        \end{tikzcd}
        &
\begin{tikzcd}
           	{w} & {w'} \\
           	z & {z'} \\
           	y & {y'} \\
           	x & x
           	\arrow["{\equiv_0}", dashed, no head, from=1-1, to=1-2]
           	\arrow["{\equiv_0}", shift left, no head, from=2-1, to=1-1]
           	\arrow[shift right, from=2-1, to=1-1]
           	\arrow["{\equiv_0}", no head, from=2-1, to=2-2]
           	\arrow[dashed, from=2-2, to=1-2]
           	\arrow[from=3-1, to=2-1]
           	\arrow["{\equiv_0}", no head, from=3-1, to=3-2]
           	\arrow[from=3-2, to=2-2]
           	\arrow[curve={height=-12pt}, from=4-1, to=2-1]
           	\arrow[from=4-1, to=3-1,ast->]
           	\arrow["{\equiv_1}", no head, from=4-1, to=4-2]
           	\arrow[curve={height=12pt}, from=4-2, to=2-2]
           	\arrow[from=4-2, to=3-2,ast->]
        \end{tikzcd}
        &
\begin{tikzcd}
           	{w} & x \\
           	x & {z'} \\
           	y & {y'} \\
           	x & x
           	\arrow["{\equiv_0}", dashed, no head, from=1-1, to=1-2]
           	\arrow["{\equiv_0}", shift left, no head, from=2-1, to=1-1]
           	\arrow[shift right, from=2-1, to=1-1]
           	\arrow["{\equiv_0}", no head, from=2-1, to=2-2]
           	\arrow[dashed, from=2-2, to=1-2]
           	\arrow[from=3-1, to=2-1]
           	\arrow["{\equiv_0}", no head, from=3-1, to=3-2]
           	\arrow[from=3-2, to=2-2]
           	\arrow[from=4-1, to=3-1,ast->]
           	\arrow["{\equiv_1}", no head, from=4-1, to=4-2]
           	\arrow[curve={height=12pt}, from=4-2, to=2-2]
           	\arrow[from=4-2, to=3-2,ast->]
        \end{tikzcd}
        &
\begin{tikzcd}
           	{w} & z \\
           	z & x \\
           	y & {y'} \\
           	x & x
           	\arrow["{\equiv_0}", dashed, no head, from=1-1, to=1-2]
           	\arrow["{\equiv_0}", shift left, no head, from=2-1, to=1-1]
           	\arrow[shift right, from=2-1, to=1-1]
           	\arrow["{\equiv_0}", no head, from=2-1, to=2-2]
           	\arrow[dashed, from=2-2, to=1-2]
           	\arrow[from=3-1, to=2-1]
           	\arrow["{\equiv_0}", no head, from=3-1, to=3-2]
           	\arrow[from=3-2, to=2-2]
           	\arrow[curve={height=-12pt}, from=4-1, to=2-1]
           	\arrow[from=4-1, to=3-1,ast->]
           	\arrow["{\equiv_1}", no head, from=4-1, to=4-2]
           	\arrow[from=4-2, to=3-2,ast->]
        \end{tikzcd}
        &
\begin{tikzcd}
           	{w} & w \\
           	x & x \\
           	y & {y'} \\
           	x & x
           	\arrow["{\equiv_0}", dashed, no head, from=1-1, to=1-2]
           	\arrow["{\equiv_0}", shift left, no head, from=2-1, to=1-1]
           	\arrow[shift right, from=2-1, to=1-1]
           	\arrow["{\equiv_0}", no head, from=2-1, to=2-2]
           	\arrow[dashed, from=2-2, to=1-2]
           	\arrow[from=3-1, to=2-1]
           	\arrow["{\equiv_0}", no head, from=3-1, to=3-2]
           	\arrow[from=3-2, to=2-2]
           	\arrow[from=4-1, to=3-1,ast->]
           	\arrow["{\equiv_1}", no head, from=4-1, to=4-2]
           	\arrow[from=4-2, to=3-2,ast->]
        \end{tikzcd}
        \\
        &
        \(\relR{\vx,\vz}\), \(\relR{\vx,\vz[']}\)  &\(\vx=\vz\), \(\relR{\vx,\vz[']}\) & \(\relR{\vx,\vz}\), \(\vx = \vz[']\) & \(\vx = \vz = \vz[']\)
    \end{tabular}
    \caption{The four cases in the proof of \condb{} in \zcref{lemma:diagram-equiv:tiptop-contains-topclust}.}\zlabel{fig:diagram-equiv:proof-tiptop-contains-topclust-condition-2}
\end{figure}

In conclusion, we find the following criterion for the definability of \(\sTopClust\).

\begin{lemma}\zlabel{lemma:top-cluster-definable}
    Suppose \(\lthProp\) is finite, \(\sU\) is finite, and \(\frmM\) is collapsed relative to \(\sU\).  Then \(\sTopClust[\sU]\) is finite.
    If, additionally, every point in \(\sU\) is definable, then  each point in \(\sTopClust[\sU]\) is definable.
\end{lemma}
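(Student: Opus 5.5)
The plan is to exploit the diagram bisimulation \(\relEquiv[d]\) together with the collapse assumption.

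First, by \zcref{lemma:diagram-equiv:tiptop-contains-topclust}, \(\sTopClust[\sU] \sSubsetEq {\relEquiv[d,dom]}\). By \zsubref{lemma:diagram-equiv:equiv-2:bisim}{lemma:diagram-equiv:equiv-2:bisim:item}, \({\relEquiv[d]} \sUnion {\relDiag[\sU]}\) is a bisimulation equivalence on the upset \(\sU \sUnion {\relEquiv[d,dom]}\) (an upset by \zsubref{lemma:diagram-equiv:equiv-2:bisim}{lemma:diagram-equiv:equiv-2:upward-0-upgrade}). As in the proof of \zcref{lemma:stable-span-ty-in-top-cluster}, a bisimulation equivalence on an upset extends to one on all of \(\sX\) by adding the diagonal on the complement. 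The extension remains a bisimulation because points outside the upset are related only to themselves, and any successors of points in the upset lie inside it.

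This extended relation never identifies a point of \(\sU\) with a point of \(\sD\): on \(\sU\) it is the diagonal, and \(\relEquiv[d] \sSubsetEq \relEquiv[0]\) lives on \(\sD\). Hence \(\relEquiv[d]\)-equivalent points are bisimilar relative to \(\sU\). Since \(\frmM\) is collapsed relative to \(\sU\), \(\relEquiv[d]\) is the identity on its domain, so every \(\relEquiv[d]\)-class is a singleton.

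By \zcref{lemma:diagram-equiv:finite-definable-equiv-classes}, with \(\sU\) and \(\lthProp\) finite, there are only finitely many \(\relEquiv[d]\)-classes. Therefore \({\relEquiv[d,dom]}\) is finite, and so is its subset \(\sTopClust[\sU]\).

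For the second claim, assume additionally that every point of \(\sU\) is definable. Then \zcref{lemma:diagram-equiv:finite-definable-equiv-classes} makes every \(\relEquiv[d]\)-class definable. Each such class is a singleton \(\Set{\vx}\), so every \(\vx \in {\relEquiv[d,dom]}\), and in particular every point of \(\sTopClust[\sU]\), is definable.

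The only step needing care is the extension of the bisimulation from the upset \(\sU \sUnion {\relEquiv[d,dom]}\) to \(\sX\), namely the forth and back clauses for points in the upset. These hold because their successors stay in the upset; the same argument is already used in \zcref{lemma:stable-span-ty-in-top-cluster}.
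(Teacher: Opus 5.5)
Your proposal is correct and follows the paper's proof essentially step for step. You use that \(\equiv_d \cup \Delta_U\) is a bisimulation equivalence on the upset \(U \cup \operatorname{dom}(\equiv_d)\), invoke the collapse assumption to make \(\equiv_d\) the diagonal, apply the finiteness and definability of the \(\equiv_d\)-classes, and conclude via \(\mathrm{Top}(U) \subseteq \operatorname{dom}(\equiv_d)\). Your check that the bisimulation extends from the upset to all of \(X\) by adding the diagonal on the complement spells out a step the paper leaves implicit.
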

\begin{proof}
By \zcref{lemma:diagram-equiv:equiv-2:bisim}, \({\relEquiv[d]} \sUnion {\relDiag[\sU]}\) is a bisimulation equivalence on \(\frmM\) restricted to the upset \(\sU \sUnion {\relEquiv[d,dom]}\).
    Since \(\frmM\)  is collapsed relative to \(\sU\), it follows that \({\relEquiv[d]}\) is the diagonal on \({\relEquiv[d,dom]}\).
    Therefore, by \zcref{lemma:diagram-equiv:finite-definable-equiv-classes}, \(\relEquiv[d,dom]\) is finite and under the extra assumption each point in it is definable.
    Moreover, by \zcref{lemma:diagram-equiv:tiptop-contains-topclust}, \(\sTopClust[\sU] \subseteq {\relEquiv[d,dom]}\), so the same is true for \(\sTopClust[\sU]\).
\end{proof}

Applied to the selection construction, we obtain the following result for the finite stages.

\begin{proposition}\zlabel{lemma:selective-filtration:finite-stage-finite-quotient-head}
    If the set of atomic propositions \(\lthProp\) is finite, then, in \zcref{constr:selective-filtration}, \(\fq[\nn]{\sU[\nn]}\) is finite and \(\sX[\nn]\) is definable in \(\frmM\) for each \(\nn \in \Naturals\).
\end{proposition}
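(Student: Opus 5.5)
The plan is an induction on \(\nn\) with a strengthened invariant. Throughout, I take \(\lthSigma\) to be finite, since the definability of the eliminable sets needs this. For each \(\nn\) I will prove the following.
\begin{description}
\item[(a)] \(\fq[\nn]{\sU[\nn]}\) is finite.
\item[(b)] \(\sX[\nn]\) and \(\sU[\nn]\) are definable in \(\frmM\).
\item[(c)] For every \(\vx \in \sU[\nn]\), the class \(\sEquivClass{\vx}\) of \(\relBisim[\nn]\) is definable in \(\frmM\).
\end{description}
Since \(\sX[\nn]\) is definable, a set is definable in \(\frmM[\nn]\) iff it is definable in \(\frmM\) and contained in \(\sX[\nn]\). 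So I can pass freely between \(\frmM[\nn]\) and \(\frmM\). The base case \(\nn = 0\) is trivial, because \(\sU[0] = \sEmpty\) and \(\sX[0] = \sX\).

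For the inductive step, I want to apply \zcref{lemma:top-cluster-definable} to \(\frmM[tilde,\nn]\) with upset \(\fq[\nn]{\sU[\nn]}\). That lemma needs every point of the upset to be definable in the quotient, whereas (c) only gives definability of the preimages in \(\frmM[\nn]\). To bridge this, I extend \(\frmM[\nn]\) with finitely many fresh atomic propositions, one for each class \(\fq[\nn, preimage]{\vu}\) with \(\vu \in \fq[\nn]{\sU[\nn]}\). This uses (a) and (c).
\begin{itemize}
\item These valuations are unions of \(\relBisim[\nn]\)-classes. Hence bisimilarity relative to \(\sU[\nn]\) is unchanged: every bisimulation for the richer language is one for the poorer, and conversely \(\relBisim[\nn]\) respects the new letters.
\item So the quotient is the same frame \(\frmM[tilde,\nn]\), now with an enlarged finite set \(\lthProp\) of atomic propositions, in which every point of \(\fq[\nn]{\sU[\nn]}\) is defined by its letter.
\item It is still collapsed relative to \(\fq[\nn]{\sU[\nn]}\) by \zsubref{lemma:selective-filtration:basic-props}{lemma:head-is-union-tops}.
\end{itemize}
\zcref{lemma:top-cluster-definable} then gives two things. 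First, \(\sTopClust[\fq[\nn]{\sU[\nn]}]\) is finite. Second, each of its points is definable by a formula in the extended language. Because \(\fq[\nn]\) is a p-morphism and preserves truth, each set \(\fq[\nn,preimage]{\vt}\) is defined in the extended \(\frmM[\nn]\) by the same formula. Substituting the defining formulas for the fresh letters then shows \(\fq[\nn,preimage]{\vt}\) is definable in \(\frmM[\nn]\), hence in \(\frmM\). Consequently \(\sT[\nn]\) and \(\sU[\nn+1] = \sU[\nn] \sUnion \sT[\nn]\) are definable, being finite unions of such classes. \zcref{lemma:eliminability:definable}, applied in \(\frmM[\nn]\), then makes \(\ElimCrit[m=\frmM[\nn], u=\sU[\nn+1]]\) definable, and so \(\sX[\nn+1]\) is definable.

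It remains to transfer (a) and (c) to stage \(\nn+1\). The key observation is that bisimilarity relative to an upset \(\sU\), restricted to points of \(\sU\), depends only on the submodel on \(\sU\). Now \(\sU[\nn+1]\) is an upset in both \(\frmM[\nn]\) and \(\frmM[\nn+1]\) and carries the same submodel in each. Hence \(\relBisim[\nn+1]\) and \(\relBisim[\nn]\) coincide on \(\sU[\nn+1]\); compare \zsubref{lemma:selective-filtration:basic-props}{lemma:selective-filtration:bisim-next-head}.

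This coincidence gives both remaining items. The \(\relBisim[\nn+1]\)-classes in \(\sU[\nn+1]\) are exactly the \(\relBisim[\nn]\)-classes already shown definable, which gives (c). Moreover \(\fq[\nn+1]{\sU[\nn+1]}\) is in bijection with \(\fq[\nn]{\sU[\nn]} \sUnion \sTopClust[\fq[\nn]{\sU[\nn]}]\), which is finite, which gives (a).

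I expect the main obstacle to be the fresh-letter step. There I must check that adding letters constant on \(\relBisim[\nn]\)-classes really leaves the relative bisimilarity, and hence the quotient and its collapsedness, unchanged. I must also check that formulas in the extended language translate back to genuinely definable sets in \(\frmM\).
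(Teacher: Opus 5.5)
Your strategy is the paper's: strengthen the invariant with definability of \(\sU[\nn]\) and of the \(\relBisim[\nn]\)-classes in \(\sU[\nn]\), add a fresh letter per class, apply \zcref{lemma:top-cluster-definable} to the quotient, pull back along \(\fq[\nn]\), and finish with \zcref{lemma:eliminability:definable}. Making the finiteness of \(\lthSigma\) explicit is also right. Two steps, however, are not justified as written.

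First, your claimed equivalence (definable in \(\frmM[\nn]\) iff definable in \(\frmM\) and contained in \(\sX[\nn]\)) fails from right to left, because restricting to \(\sX[\nn]\) can merge points. For example, take \(a \mathrel{R} b\) and \(a' \mathrel{R} b'\), with \(p\) true only at \(b\), and \(\sX[\nn]=\{a,a'\}\), which is defined by \(\lDiamond\top\). Then \(\{a\}\) is defined by \(\lDiamond p\) in the big model but is not definable in the restriction. You use exactly this direction when you substitute the defining formulas of the classes for the fresh letters inside the extended \(\frmM[\nn]\): those formulas define the classes in \(\frmM\), not necessarily in \(\frmM[\nn]\). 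The repair is to reverse the order, as the paper does. Add the letters to \(\frmM\), which gives a definable variant, and then restrict to the definable set \(\sX[\nn]\), so that the extended \(\frmM[\nn]\) is a definable submodel of a definable variant of \(\frmM\). Then relativize to \(\sX[\nn]\) first and substitute afterwards.

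Second, your claim that \(\relBisim[\nn+1]\) and \(\relBisim[\nn]\) coincide on \(\sU[\nn+1]\) does not follow from your observation. The observation shows that \(\relBisim[\nn+1]\) on \(\sU[\nn+1]\) is bisimilarity relative to \(\sU[\nn+1]\) in \(\frmM[\nn]\). But \(\relBisim[\nn]\) is bisimilarity relative to \(\sU[\nn]\), so it must in addition keep \(\sU[\nn]\) and \(\sT[\nn]\) apart. What you actually get is only the inclusion of \(\relBisim[\nn]\) in \(\relBisim[\nn+1]\) on \(\sU[\nn+1]\), which is \zsubref{lemma:selective-filtration:basic-props}{lemma:selective-filtration:bisim-next-head}. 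To exclude that \(\relBisim[\nn+1]\) merges a point of \(\sT[\nn]\) with a point of \(\sU[\nn]\), you would need a separate argument, for instance via non-eliminability of the points of \(\sT[\nn]\), and you give none. Fortunately the inclusion suffices. Every \(\relBisim[\nn+1]\)-class in \(\sU[\nn+1]\) is a union of some of the finitely many \(\relBisim[\nn]\)-classes in \(\sU[\nn+1]\), all already shown definable, which gives (c). For (a), the bijection should be a surjection: \(\fq[\nn+1]{\sU[\nn+1]}\) is a surjective image of the finite set \(\fq[\nn]{\sU[\nn+1]}\). This is precisely how the paper concludes.
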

\begin{proof}
    We prove by induction on \(\nn \in \Naturals\) that \(\fq[\nn]{\sU[\nn]}\) is finite, and  that \(\sU[\nn]\), \(\sX[\nn]\), and the elements of \(\sU[\nn + 1, quotient={\relBisim[\nn + 1]}]\) are definable in \(\frmM\).
	For \(\nn = 0\) this holds vacuously.

    Suppose that \(\sU[tilde, \nn] = \fq[\nn]{\sU[\nn]}\) is finite, and that \(\sU[\nn]\), \(\sX[\nn]\), and the elements of \(\sU[\nn, quotient={\relBisim[\nn]}]\) are definable in \(\frmM\).
    Let \(\frmM[\nn, upper={+}]\) be the extension of \(\frmM[\nn]\) with, for every \(\sA \in \sU[\nn, quotient={\relBisim[\nn]}]\), a fresh atomic proposition \(\lap[\sA]\) which evaluates to \(\sA\).
    Clearly, \(\frmM[\nn, upper={+}]\) is a definable submodel of a definable variant of \(\frmM\).
    Moreover, since \(\relBisim[\nn]\) is a bisimulation equivalence on \(\frmM[\nn]\) and each \(\sA \in \sU[\nn, quotient={\relBisim[\nn]}]\) is by definition closed under it, \(\relBisim[\nn]\) is a bisimulation equivalence on \(\frmM[\nn, upper={+}]\).
    Write \(\frmM[tilde, \nn, upper={+}]\) for the quotient \(\frmM[\nn, upper={+}, quotient={\relBisim[\nn]}]\).

    By definition, every point in \(\fq[\nn]{\sU[\nn]}\) is definable in \(\frmM[\nn, tilde, upper={+}]\), \(\frmM[\nn, tilde, upper={+}]\) is collapsed relative to \(\fq[\nn]{\sU[\nn]}\), and, by the induction hypothesis, \(\fq[\nn]{\sU[\nn]}\) is finite.
    Therefore, by \zcref{lemma:top-cluster-definable}, \(\sTopClust[\fq[\nn]{\sU[\nn]}]\) is finite and every point in it is definable in \(\frmM[\nn, tilde, upper={+}]\).
    Hence, \(\sT[\nn]\) and the \(\relBisim[\nn]\) equivalence classes in it are definable in \(\frmM[\nn, upper={+}]\), and hence in \(\frmM\).
    By \zsubref{lemma:selective-filtration:basic-props}{lemma:selective-filtration:bisim-next-head}, each \(\relBisim[\nn + 1]\) equivalence class in \(\sU[\nn + 1]\) is the union of \(\relBisim[\nn]\) equivalence classes, hence there are only finitely many such equivalence classes which all are definable.
    In particular, \(\fq[\nn + 1]{\sU[\nn + 1]}\) is finite.
    Finally, by \zcref{lemma:eliminability:definable}, it follows that \(\ElimCrit[m=\frmM[\nn], u=\sU[\nn + 1]]\) is definable in  \(\frmM\), and the definability of \(\sX[\nn + 1]\) follows.
\end{proof}
 
    \section{Fine's Theorems}\zlabel{sec:Fines-thms}
    With the necessary theory developed, we now prove the main results of the paper. We start with the Selection Theorem.

\begin{maintheoremagain}\zlabel{mthm:selection-again}
    Let \(\frmM\) be a weakly transitive Kripke model and \(\lthSigma\) a finite subformula-closed set of formulas.
	Then there exists a definable \(\lthSigma\)-sub-p-morphism from \(\frmM\) onto a finite model.
\end{maintheoremagain}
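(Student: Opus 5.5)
The plan is to run \zcref{constr:selective-filtration} on \(\frmM\) with the given finite \(\lthSigma\), and show three things: it terminates at some finite stage \(\nn\); the final quotient \(\frmM[tilde,\nn]\) is finite; and \(\fq[\nn]\) is the desired definable \(\lthSigma\)-sub-p-morphism. Since \(\lthSigma\) is finite, \(\lthProp\) is finite, so \zcref{lemma:selective-filtration:finite-stage-finite-quotient-head} applies at every finite stage. It gives that \(\fq[\nn]{\sU[\nn]}\) is finite and that \(\sX[\nn]\) is definable in \(\frmM\), for each \(\nn\in\Naturals\).

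Next, the construction does not get stuck at finite stages. A finite model has finite width, so \zcref{lemma:selective-filtration:finite-width-top-implies-covering} applies at each \(\nn\). It gives that \(\sT[\nn]\) covers \(\sX[\nn]\sMinus\sU[\nn]\), so the construction progresses as long as \(\sX[\nn]\neq\sU[\nn]\).

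The main obstacle is to show termination within \(\omega\) steps. Progress alone does not bound the number of stages, so we need a bound that uses the finiteness of \(\lthSigma\). The idea is to attach to each stage a strictly increasing measure with finitely many possible values. Consider the points of \(\fq[\nn]{\sU[\nn]}\). By \zcref{lemma:selective-filtration:head-no-eliminable-points}, none of them is \(\lthSigma\)-eliminable in \(\frmM[tilde,\nn]\). Since \(\frmM[tilde,\nn]\) is collapsed relative to the head, the head should have size bounded in terms of \(\card{\lthSigma}\). The reason is that a chain of strict clusters \(c_1<c_2<\dots\), each non-eliminable, forces each cluster to witness a new \(\lthSigma\)-type pattern in its span. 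By \zsubref{lemma:selective-filtration:basic-props}{lemma:selective-filtration:related-ty-equal-points-same-top}, \(\lthSigma\)-type-equal related points lie in the same top. So the depth of the head grows by one at each progressing stage, while non-eliminability bounds that depth by roughly \(2^{\card{\lthSigma}}\), through the standard Fine-style depth bound argument on types. In particular, I would argue that along any chain of tops \(\sT[0],\sT[1],\dots\), each new top must contain a point whose \(\lthSigma\)-type differs from all of its strict successors in the head. This bounds the depth by \(2^{\card{\lthSigma}}\), so there is \(\nn\le 2^{\card{\lthSigma}}+1\) with \(\sX[\nn]=\sU[\nn]\). I expect making this counting precise to be the delicate part: one must ensure that each new \(\sT[\nn]\) lies strictly below the previous head, which follows from \zcref{lemma:selective-filtration:head-conv-pre-wf} together with the covering.

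Once termination at a finite \(\nn\) is established, \(\frmM[tilde,\nn]\) equals \(\fq[\nn]{\sU[\nn]}\), which is finite. Its domain \(\sX[\nn]\) is definable, and \(\fq[\nn]\) is a p-morphism from \(\frmM[\nn]\) onto it. By \zcref{lemma:selection-construction-sigma-subset}, \(\sX[\nn]\) is a \(\lthSigma\)-subset of \(\frmM\), so \(\fq[\nn]\) is a definable \(\lthSigma\)-sub-p-morphism onto a finite model.
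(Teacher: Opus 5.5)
Your proposal is correct and follows the paper's proof: finiteness and definability at finite stages, covering by the tops via \zcref{lemma:selective-filtration:finite-width-top-implies-covering}, termination via a chain through successive tops whose \(\Sigma\)-types are pairwise distinct by \zsubref{lemma:selective-filtration:basic-props}{lemma:selective-filtration:related-ty-equal-points-same-top}, and \zcref{lemma:selection-construction-sigma-subset} to finish. The counting you call delicate goes as in the paper: if the construction has not terminated at \(m = 2^{\lvert\Sigma\rvert}\), pick \(x_m \in T_m\); since \(x_{i+1} \in X_i \setminus U_i\) but \(x_{i+1} \notin T_i\), covering gives \(x_i \in T_i \cap R(x_{i+1})\), and disjointness of the tops with weak transitivity gives \(x_i \in R(x_j)\) for \(i<j\), so these \(m+1\) points have pairwise distinct types, a contradiction---no size bound on the head and no appeal to \zcref{lemma:selective-filtration:head-conv-pre-wf} is needed.
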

\begin{proof}
    We apply \zcref{constr:selective-filtration} to \(\frmM\). It suffices to show that the construction terminates at stage \(\nm\coloneqq 2^{\lthSigma[card]}\). Then, by \zcref{lemma:selection-construction-sigma-subset} together with \zcref{lemma:selective-filtration:finite-stage-finite-quotient-head}, it follows that \(\fq[\nm]\) is a \(\lthSigma\)-sub-p-morphism onto the finite model \(\frmM[tilde,\nm]\).

    So let \(\nn \in \Naturals\).
    Then, by \zcref{lemma:selective-filtration:finite-stage-finite-quotient-head}, \(\fq[\nn]{\sU[\nn]}\) is finite, and hence, by \zcref{lemma:selective-filtration:finite-width-top-implies-covering}, \(\sT[\nn]\) covers \(\sX[\nn] \sMinus \sU[\nn]\).
	In particular the construction does not get stuck at any finite stage. Suppose for a contradiction that the construction has not terminated at stage \(\nm\).
	Then \(\sT[\nm] \neq \sEmpty\), so there exists \(\vx[\nm] \in \sT[\nm]\), and
	inductively, we find \(\vx[\nm], \dots, \vx[0]\) such that \(\vx[\nii] \in \sT[\nii]\) and \(\vx[\nii] \in \relR[\nm]{\vx[\nii + 1]}\) for each \(\nii \leq \nm\).
	By \zsubref{lemma:selective-filtration:basic-props}{lemma:selective-filtration:related-ty-equal-points-same-top}, the points \(\vx[\nm], \dots, \vx[0]\)  have pairwise distinct \(\lthSigma\)-types.
	But there are only \(\nm = 2^{\lthSigma[card]}\) distinct \(\lthSigma\)-types, a contradiction.
\end{proof}

Now suppose \(\logLambda\) is a strongly cofinal subframe logic extending \(\logKWeakTrans\), and \(\lphi \notin \logLambda\).
Then there exists a weakly transitive  \(\logLambda\)-model \(\frmM\) refuting \(\lphi\).
Let \(\lthSigma\) be the set of subformulas of \(\lphi\).
By the previous theorem, we obtain a definable \(\lthSigma\)-sub-p-morphism from \(\frmM\) onto a finite model \(\frmN\).
But every \(\lthSigma\)-subset of \(\frmM\) is strongly cofinal, so \(\frmN\) is a \(\logLambda\)-model.
Moreover, by \zcref{prop:Sigma-sub-p-morphism-preserves-truth}, \(\frmN\) still refutes \(\lphi\).
Thus, we get the claimed \fmp{} result:

\begin{maintheoremagain}\zlabel{mthm:fmp-again}
    Every strongly cofinal subframe logic extending \(\logKWeakTrans\) has the \fmp{}.
\end{maintheoremagain}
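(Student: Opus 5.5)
The argument has already been sketched in the paragraph preceding the statement, so the plan is to make that sketch precise. Let \(\logLambda\) be a strongly cofinal subframe logic extending \(\logKWeakTrans\), and let \(\lphi \notin \logLambda\). We have to produce a finite \(\logLambda\)-model refuting \(\lphi\).

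First, we obtain some weakly transitive \(\logLambda\)-model \(\frmM\) refuting \(\lphi\). The natural choice is the canonical model of \(\logLambda\), or the Lindenbaum-style model on maximal \(\logLambda\)-consistent sets. Since \(\logKWeakTrans \sSubsetEq \logLambda\) and the weak transitivity axiom \(\paren{\lap \land \lBox\lap} \limplies \lBox\lBox\lap\) is canonical (Sahlqvist), this model is weakly transitive. It validates \(\logLambda\) as a model, and it refutes \(\lphi\) at some point by the truth lemma. Because we only need \(\frmM\) to be a \(\logLambda\)-model and not its frame to be a \(\logLambda\)-frame, no canonicity of \(\logLambda\) itself is needed.

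Next, we set \(\lthSigma\) to be the (finite, subformula-closed) set of subformulas of \(\lphi\) and apply \zcref{mthm:selection-again}. This yields a definable \(\lthSigma\)-subset \(\sX[']\) of \(\frmM\) and a p-morphism \(\ff\) from the submodel \(\frmM[']\) on \(\sX[']\) onto a finite model \(\frmN\). By \zcref{lemma:main-property-Sigma-submodel}, \(\sX[']\) is strongly cofinal. For cofinality, every point is either in \(\sX[']\) or eliminable, hence sees a point of \(\sX[']\). For the strong condition, given \(\vx\) with \(\vy \in \relR{\vx}\) and \(\vz \in \relR{\vy} \sIntersection \sX[']\), either \(\vz \in \relR{\vx}\) or \(\vz = \vx\). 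In the latter case we apply \zcref{lemma:main-property-Sigma-submodel} with \(\lphi\) taken to be, for instance, a tautology in \(\lthSigma\). If none is available we may enlarge \(\lthSigma\) by \(\top\), which keeps it finite and subformula-closed, and this gives some point of \(\sX['] \sIntersection \relR{\vx}\). Thus \(\frmM[']\) is a definable strongly cofinal submodel of a \(\logLambda\)-model, hence a \(\logLambda\)-model because \(\logLambda\) is strongly cofinal subframe.

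Since p-morphic images preserve validity of formulas, \(\frmN\) is also a \(\logLambda\)-model. Choose \(\vx \in \sX\) with \(\vx \notin \sem{\lphi}\). By cofinality of \(\sX[']\) we need a point of \(\sX[']\) refuting \(\lphi\). If \(\vx \in \sX[']\), we take \(\vx\) itself. Otherwise \(\vx\) is eliminable, and eliminability applied to \(\lphi \in \lthSigma\) yields \(\vy \in \fSpan[\sX[']]{\vx}\) with \(\vy \notin \sem{\lphi}\). Then \zcref{prop:Sigma-sub-p-morphism-preserves-truth} gives \(\ff{\vy} \notin \sem{\lphi}[\frmN]\). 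So \(\frmN\) is a finite \(\logLambda\)-model refuting \(\lphi\), which establishes the \fmp{}.

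The only step requiring any care is the strong cofinality check in the second paragraph, specifically the degenerate case \(\vz = \vx\). There I would first try \zcref{lemma:main-property-Sigma-submodel} with a harmless formula in \(\lthSigma\), as described.
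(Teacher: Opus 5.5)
Your proof is correct and follows the paper's argument. You refute \(\lphi\) in a weakly transitive \(\logLambda\)-model (the canonical model is a fine choice), apply \zcref{mthm:selection-again} to the subformulas of \(\lphi\), use that every \(\lthSigma\)-subset is strongly cofinal by \zcref{lemma:main-property-Sigma-submodel}, and transfer the refutation to the finite image via \zcref{prop:Sigma-sub-p-morphism-preserves-truth}. One detour is unnecessary: in the case \(\vz = \vx\), \zcref{lemma:main-property-Sigma-submodel} applied to \emph{any} formula of \(\lthSigma\) (e.g.\ \(\lphi\) itself) already yields a point of \(\fSpan[\sX[']]{\vx}\), so no tautology or enlargement of \(\lthSigma\) is needed.
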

 
Next, we prove Fine's Finite Width Theorem for \(\logKWeakTrans\), using our selection construction with an infinite filtration set.
We will show that the resulting model is atomic, and finally that the underlying Kripke frame still has the same logic.

Let \(\logForm\) be the set of all modal formulas over finitely many atomic propositions \(\lthProp\).
We apply our generalized selection construction to a given weakly transitive model \(\frmM\) of finite width with \(\lthSigma \coloneqq \logForm\).
By \zcref{cor:finite-width-termination}, the construction terminates at some stage \(\odalpha\) and the resulting model \(\frmM[tilde, \odalpha]\) has the finite cover property.
Moreover, by \zcref{lemma:selective-filtration:head-no-eliminable-points}, it is free of \(\logForm\)-eliminable points.
For models with these properties, we derive a Hennessy-Milner-like theorem.

\begin{theorem}\zlabel{lemma:finite-width-thm:Hennessy-Milner-like}
    Suppose \(\lthProp\) is finite.
    Let \(\frmM = \structuple{\sX, \relR, \fVal}\) be a weakly transitive model free of \(\logForm\)-eliminable points, such that every point generated subframe has the finite cover property.
    Then \(\relEquiv[ty=\logForm]\) and bisimilarity on \(\frmM\) coincide.
\end{theorem}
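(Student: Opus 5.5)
The approach is to show that \(\relEquiv[ty=\logForm]\) is itself a bisimulation equivalence. The converse inclusion, that bisimilar points have the same \(\logForm\)-type, is the standard invariance of modal formulas under bisimulation. Atomic propositions are preserved trivially, since \(\lthProp \sSubsetEq \logForm\), and \(\relEquiv[ty=\logForm]\) is symmetric, so the back condition follows from the forth condition. Hence the work is the forth condition: given \(\vx \relEquiv[ty=\logForm] \vx[']\) and \(\vy \in \relR{\vx}\), find \(\vy['] \in \relR{\vx[']}\) with \(\vy \relEquiv[ty=\logForm] \vy[']\). The finite cover property replaces the image-finiteness hypothesis of the classical Hennessy–Milner theorem.

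\emph{Step 1: a compactness argument along decreasing upsets.} Since \(\lthProp\) is finite, \(\logForm\) is countable. Enumerate it and let \(\lthDelta[\nn]\) be the first \(\nn\) formulas; we may assume each \(\lthDelta[\nn]\) is closed under negation. Put
\[
\sS[\nn] \coloneqq \Set{\vz \in \relR{\vx[']}}[\vz \relEquiv[ty={\lthDelta[\nn]}] \vy].
\]
Each \(\sS[\nn]\) is nonempty, because \(\vx \in \sem{\lDiamond \textstyle\bigwedge \fTy[\lthDelta[\nn]]{\vy}}\) and \(\vx \relEquiv[ty=\logForm] \vx[']\). The sets \(\sS[\nn]\) are decreasing in \(\nn\), so the upsets \(\relR[refl trans closure]{\sS[\nn]}\) of the subframe generated by \(\vx[']\) are decreasing too. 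By the chain condition on upsets, which is equivalent to the finite cover property, they stabilise from some \(\nn[0]\) on, say at \(\sV\). By the finite cover property, \(\sS[\nn[0]]\) is covered by a finite subset \(\sF\).

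\emph{Step 2: extracting a single point.} We need some \(\vy['] \in \sSetIntersection[\nn]{\sS[\nn]}\). Suppose no such point exists. Then each \(\vz \in \sF\) fails some formula \(\lpsi[\vz]\) true at \(\vy\). Taking \(\nn\) large enough that \(\lthDelta[\nn]\) contains all \(\lpsi[\vz]\) with \(\vz \in \sF\), every point of \(\sS[\nn]\) lies strictly above the cluster structure of \(\sF\) in a way we want to exclude. This is where non-eliminability enters. Fix \(\vw \in \sS[\nn]\). Since \(\vw\) is not \(\logForm\)-eliminable, there is \(\lphi[\vw]\) such that every successor of \(\vw\) agreeing with \(\vw\) on \(\lphi[\vw]\) lies in the cluster of \(\vw\). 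Adding \(\lphi[\vw]\), for the finitely many relevant clusters, to the distinguishing set should force the points of \(\sS[\nn]\) to stay within finitely many clusters meeting \(\sF\). Within a cluster meeting \(\sF\), any point already sees every point of \(\sF\) in that cluster. I would then argue by well-founded induction along the strict successor relation, which is conversely pre-well-founded on generated subframes, using the induction hypothesis for clusters above. Concluding that some element of \(\sF\) agrees with \(\vy\) on all of \(\logForm\) would contradict the choice of the \(\lpsi[\vz]\).

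\emph{Step 3: weak transitivity bookkeeping.} If the point found lies in \(\relR[refl trans closure]{\vx[']}\) but not in \(\relR{\vx[']}\), then it equals \(\vx[']\). In that case we fall back on the second condition of \(\lthSigma\)-subsets, in the form of \zcref{lemma:main-property-Sigma-submodel} applied to the whole model, and on non-eliminability of \(\vx[']\) to replace it by a genuine successor in the cluster of \(\vx[']\).

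The main obstacle is Step 2. The stabilisation of upsets only yields points that \emph{cover} the candidates, not candidates themselves. Bridging this gap requires combining finite width (only finitely many covering points) with non-eliminability (types pin a point to its cluster) and an induction along converse pre-well-foundedness. I would first attempt to prove the stronger claim that within each point-generated subframe there are only finitely many clusters realising any given \(\lthDelta[\nn]\)-type that are minimal above \(\vx[']\).
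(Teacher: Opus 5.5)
Your reduction to the forth condition is fine. The gap is Step~2, which carries the whole content of the theorem and is only a hope. The plan it sketches cannot succeed, because it applies non-eliminability to the wrong points: you use it for the candidates $w \in S_n \subseteq R(x')$ and for $x'$, never for $y$. To see that this cannot suffice, take the transitive model on points $x, x', y, c_0, c_1, \dots$ with $c_m R c_n$ iff $n<m$, $x R y$, and each of $x, x', y$ seeing every $c_n$, with one variable $p$ true exactly at $x$ and $x'$.
\begin{itemize}
\item The frame has width $2$ and is conversely well-founded, so it has the finite cover property.
\item Each $c_n$ is non-eliminable (witness ${\ldiamond}^{n}\top$), and so are $x$ and $x'$ (witness $p$).
\item Since $c_k$ agrees with $y$ on all formulas of modal depth at most $k$, we get $x \equiv_{\mathrm{Fm}} x'$. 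Yet $y$ satisfies every ${\ldiamond}^{k}\top$ and no successor of $x'$ does.
\end{itemize}
Here $\bigcap_n S_n=\emptyset$, and for large $n$ the set $S_n$ contains $c_k$ for all large $k$. These lie in infinitely many clusters, none meeting $F$, all made of non-eliminable points. So your claim that non-eliminability of the points of $S_n$ confines them to finitely many clusters meeting $F$ is false. No induction inside the subframe generated by $x'$ can produce an element of $F$ with the type of $y$. The only eliminable point in this model is $y$, and its non-eliminability is exactly what the proof must use.

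The missing idea is to transfer the failure at the candidates back to $y$ via diamond formulas. For that you need approximations graded by modal depth rather than an arbitrary enumeration $\Delta_n$. Let $\varphi_n$ characterise the depth-$n$ type of $y$ (available since $\mathbb{P}$ is finite), and pick $y'_n \in R(x')$ with $y'_n \models \varphi_n$. If none is $\mathrm{Fm}$-equivalent to $y$, the $y'_n$ realise infinitely many types. Clusters are finite up to bisimilarity, so the $y'_n$ meet infinitely many clusters. Now use the finite cover property of the subframe generated by $x'$ to extract an infinite strictly descending chain, not a finite cover $F$. This gives an infinite $I \subseteq \mathbb{N}$ with $y'_m R y'_n$ and $y'_n \not\models \varphi_m$ for $n<m$ in $I$. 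Write $s(n)$ for the next element of $I$. Then $y'_{s(s(n))}$ satisfies ${\ldiamond}(\varphi_n \wedge \neg\varphi_{s(n)})$, a formula of depth at most $s(s(n))$, hence so does $y$. The witnessing successors $y_n$ of $y$ have pairwise distinct types, so infinitely many lie outside the cluster of $y$. They also agree with $y$ to ever greater depth, so $y$ is $\mathrm{Fm}$-eliminable, a contradiction. Finally, your Step~3 is unnecessary: $S_n \subseteq R(x')$ by definition, so any point of $\bigcap_n S_n$ is already a genuine successor of $x'$. The main property of $\Sigma$-subsets applied to the whole model is vacuous anyway.
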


In the weakly transitive case, image-finiteness is strictly stronger than point generated subframes having the finite cover property.
Relaxing this hypothesis compared to the Hennessy-Milner Theorem comes at the cost of requiring the frame to be free of \(\logForm\)-eliminable points.

\begin{proof}
	It suffices to prove that \(\relEquiv[ty=\logForm]\) is a bisimulation equivalence.
	Let \(\vx, \vx['], \vy \in \sX\) be such that \(\vy \in \relR{\vx}\) and \(\vx \relEquiv[ty=\logForm] \vx[']\).
	For \(\nn \in \Naturals\), let \(\lphi[\nn]\) be a formula of modal depth at most \(\nn\) such that
	\begin{equation*}
		\sem{\lphi[\nn]} = \sSetIntersection\Set{ \sem{\lphi} }[ \vy \in \sem{\lphi}, \text{\(\lphi\) is of modal depth at most \(\nn\)} ] .
	\end{equation*}
	Then \(\vy \in \sem{\lphi[\nn]}\) and therefore \(\vx \in \sem{\lDiamond\lphi[\nn]}\). Since \(\vx['] \relEquiv[ty=\logForm] \vx\), there exists \(\vy[\nn, '] \in \relR{\vx[']}\) with \(\vy[\nn,'] \in \sem{\lphi[\nn]}\).
	For each \(\nn \in \Naturals\), pick such a \(\vy[\nn, ']\), and define \(\sY['] \coloneqq \Set{\vy[\nn, ']}[\nn \in \Naturals]\).
    Suppose for a contradiction that there does not exist \(\vy['] \in \sY[']\) with \(\vy[']\relEquiv[ty=\logForm]\vy\).
    Then for each \(\nn \in \Naturals\), there exists an \(\nm\in \Naturals\) such that \(\vy[',\nn]\notin \sem{\lphi[\nm]}\). Since \(\lthProp\) is finite, every cluster is finite up to bisimilarity, so \(\sY[']\) contains points in infinitely many clusters.
    Hence, by the finite cover property for the subframe generated by the point \(\vx[']\), \(\sY[']\) contains an infinite strictly decreasing chain, so
    there is  an infinite set \(\sI \sSubsetEq \Naturals\) such that for \(\nn, \nm \in \sI\), if \(\nn < \nm\) then \(\relR{\vy[\nm, '], \vy[\nn, ']}\) and \(\vy[\nn, '] \notin \sem{\lphi[\nm]}\).

	For \(\nn \in \sI\), let \(\fs{\nn} \coloneqq \min\Set{\nm \in \sI}[\nn < \nm]\).
	Then for each \(\nn \in \sI\), \(\vy[{\fs{\fs{\nn}}}, '] \in \sem{ \lDiamond{\lphi[\nn] \land \lnot\lphi[\fs{\nn}]} }\) and, since \(\lDiamond{\lphi[\nn] \land \lnot\lphi[\fs{\nn}]}\) has modal depth \(\fs{\nn} + 1 \leq \fs{\fs{\nn}}\), we get \(\vy \in \sem{ \lDiamond{\lphi[\nn] \land \lnot\lphi[\fs{\nn}]} }\).
	Therefore, there exists \(\vy[\nn] \in \relR{\vy}\) with \(\vy[\nn] \in \sem{\lphi[\nn] \land \lnot\lphi[\fs{\nn}]}\).
	Since the cluster of \(\vy\) is finite up to bisimilarity, the set \(\sJ = \Set{\nn \in \sI}[\vy \notin \relR{\vy[\nn]}]\) is infinite.
Finally, let \(\lphi \in \logForm\) and suppose \(\vy \in \sem{\lphi}\).
	Let \(\nn\) be the modal depth of \(\lphi\) and \(\nm \in \sJ\) with \(\nn < \nm\). The point  \(\vy[\nm]\) is a strict successor of \(\vy\) in \(\sem{\lphi[\nm]} \sSubsetEq \sem{\lphi[\nn]} \sSubsetEq \sem{\lphi}\), i.e., \(\vy[\nm] \in \sem{\lphi}\).
	Hence, since \(\lphi\) was arbitrary, \(\vy\) is \(\logForm\)-eliminable, contradicting the hypotheses.
\end{proof}

Since the model \(\frmM[tilde, \odalpha]\) resulting from the selection is collapsed relative to its domain, by the previous Theorem, the relation \(\relEquiv[ty=\logForm]\) on \(\frmM[tilde, \odalpha]\) is the diagonal, i.e.\@ \(\frmM[tilde, \odalpha]\) is differentiated.
The following lemma strengthens this to atomicity.

\begin{lemma}\zlabel{lemma:finite-width-thm:atomicity}
    Suppose that \(\lthProp\) is finite.
    Let \(\frmM = \structuple{\sX, \relR, \fVal}\) be a weakly transitive differentiated model that has the finite cover property and is free of \(\logForm\)-eliminable points.
    Then \(\frmM\) is atomic.
\end{lemma}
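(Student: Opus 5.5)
The plan is to argue \enquote{from the top down}, using that the finite cover property gives converse pre-well-foundedness. Suppose, for a contradiction, that the set \(\sY\) of non-definable points is non-empty. By converse pre-well-foundedness there is \(\vx \in \sY\) whose cluster \(\sC\) is maximal among clusters meeting \(\sY\). Then every strict successor of \(\vx\) is definable.

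\emph{Step 1: the strict successors of \(\vx\) form a definable set.} Let \(\sS\) be the set of strict successors of \(\vx\); this is an upset. By the finite cover property, \(\sS\) is covered by a finite \(\sS['] \sSubsetEq \sS\), and each point of \(\sS[']\) is definable. By weak transitivity, \(\relR[refl trans closure]\) is the reflexive closure of \(\relR\). Hence \(\sS = \relR[refl trans closure]{\sS[']}\) is defined by a formula \(\oddelta\) of the form \(\bigvee_{\vy \in \sS[']} (\lphi[\vy] \lor \lDiamond\lphi[\vy])\), where \(\lphi[\vy]\) defines \(\Set{\vy}\).

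\emph{Step 2: the cluster \(\sC\) is finite and its points are pairwise separated by finitely many formulas.} Take two points of \(\sC\) with the same \(\lthProp\)-type that are both reflexive or both irreflexive. The relation swapping them, together with the identity elsewhere, is a bisimulation: both points see exactly \(\sS \sUnion \sC\), minus themselves in the irreflexive case. Bisimilar points have the same \(\logForm\)-type, so by differentiatedness they are equal. Hence \(\card{\sC} \leq 2 \cdot 2^{\card{\lthProp}}\). Choose finitely many formulas \(\lpsi[\vz]\), for \(\vz \in \sC\), with \(\lpsi[\vz]\) true at \(\vz\) and false at every other point of \(\sC\).

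\emph{Step 3: define \(\vx\).} First isolate \(\sC\) from the rest of \(\sX\). Candidate formula: \(\lnot\oddelta\) conjoined with \(\bigwedge_{\vy \in \sS[']} \lDiamond\lphi[\vy]\) (where applicable), with \(\lBox(\oddelta \lor \bigvee_{\vz \in \sC} \lpsi[\vz])\), and with, for each \(\vz \in \sC\) different from the evaluation point, \(\lDiamond\lpsi[\vz]\). Then \(\vx\) would be defined by this formula conjoined with \(\lpsi[\vx]\).

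This is where the hypothesis that there are no \(\logForm\)-eliminable points must enter. A point \(\vw\) strictly below \(\sC\) can satisfy the same conditions only if all its successors outside \(\sS\) look like points of \(\sC\). In that situation I would use non-eliminability of \(\vx\) with respect to \(\sS\): there is a formula \(\lphi\) on which \(\vx\) differs from every point of \(\sS\). I would then show that such a \(\vw\) is bisimilar to some point of \(\sC\), via a bisimulation between \(\relR[refl trans closure]{\vw}\) and \(\sC \sUnion \sS\) matching \(\lpsi\)-types. By differentiatedness this contradicts \(\vw \notin \sC\).

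I expect Step 3 to be the main obstacle. The difficulty is choosing the \(\lpsi[\vz]\), possibly strengthened by conjoining \(\lphi\) and diamonds of the \(\lphi[\vy]\), so that the definition cannot be satisfied by points strictly below the cluster. If the direct bisimulation argument fails, I would fall back on a \(\lthSigma\)-style argument. In that case I would take \(\lthSigma\) to be the finite closure under subformulas of \(\oddelta\), the \(\lpsi[\vz]\) and \(\lphi\). I would then derive that any bad \(\vw\) is \(\lthSigma\)-eliminable with respect to its strict successors, and push this to \(\logForm\)-eliminability of some point, contradicting the hypothesis.
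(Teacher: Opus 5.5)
\paragraph{Step~1 is wrong.} The finite cover \(S'\) sits \emph{above} the points of \(S\). Covering means \(S \subseteq (R^{*})^{-1}[S']\), i.e.\ every point of \(S\) sees a point of \(S'\). So \(R^{*}[S']\) is in general a proper subset of \(S\). Your formula \(\delta = \bigvee_{y \in S'}(\varphi_y \lor \ldiamond\varphi_y)\) defines neither of these sets: it defines the downward closure \((R^{*})^{-1}[S']\). That set contains \(x\) itself, and everything below \(x\), as soon as \(S \neq \emptyset\). Hence the conjunct \(\lnot\delta\) in Step~3 is false at \(x\), so your candidate formula fails at the very point it is meant to define.

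This is not a cosmetic slip. The finite cover property only controls \(S\) from above, and the only points you know to be definable are those in \(S\), so there is no evident way to define \(S\) itself. Step~2, by contrast, is fine: the swap bisimulation correctly shows that clusters are finite, a fact the paper simply asserts.

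\paragraph{Step~3 is not carried out, and the sketched strategy cannot work as stated.} The formulas \(\psi_z\) only separate the points of the cluster from one another. Together with \(\ldiamond\)/\(\lbox\) conditions over \(S\), nothing prevents them from also holding at points elsewhere in the model whose clusters agree with yours on these finitely many formulas. Agreement on finitely many formulas does not give a bisimulation. The fallback, that \(\Sigma\)-eliminability of some point yields \(\mathrm{Fm}\)-eliminability, is unsupported.

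\paragraph{The missing idea} is to use the non-eliminability formula \emph{globally}, not relative to the successors of \(x\).
\begin{itemize}
\item Non-eliminability of \(x\) gives \(\varphi_0\) with \(x \in \lBrack\varphi_0\rBrack\) and no strict successor of \(x\) in \(\lBrack\varphi_0\rBrack\).
\item Apply the finite cover property to \(\lBrack\varphi_0\rBrack \setminus \mathrm{clu}(x)\), not to \(S\), obtaining a finite cover \(F \subseteq \lBrack\varphi_0\rBrack \setminus \mathrm{clu}(x)\).
\item Using differentiatedness and finiteness of the cluster, choose \(\varphi_x\) isolating \(x\) within its cluster. For each \(z \in F\), choose \(\varphi_z\) true on the whole cluster and false at \(z\).
\end{itemize}
Then \(\varphi_0 \land \varphi_x \land \bigwedge_{z \in F}\boxdot(\varphi_0 \limplies \varphi_z)\) defines \(\{x\}\).

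It holds at \(x\), because every \(R^{*}\)-successor of \(x\) satisfying \(\varphi_0\) lies in the cluster, where each \(\varphi_z\) is true.

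Conversely, let \(x' \neq x\) satisfy it. Since \(\varphi_x\) holds at \(x'\), we have \(x' \in \lBrack\varphi_0\rBrack \setminus \mathrm{clu}(x)\). So \(x'\) is covered by, i.e.\ \(R^{*}\)-sees, some \(z \in F\). As \(z\) satisfies \(\varphi_0\), the boxed conjunct at \(x'\) forces \(\varphi_z\) at \(z\), a contradiction.

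This defines each point directly. It needs no top-down choice of a maximal non-definable cluster and no definability of the strict successors of \(x\).
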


\begin{proof}
	Let \(\vx \in \sX\) and define \(\fClust{\vx}\coloneqq\relR{\vx} \cap\relR[converse]{\vx} \sUnion \Set{\vx}\).
	We prove that \(\Set{\vx}\) is definable.
	Since \(\vx\) is not eliminable there exists \(\lphi[0]\) such that \(\vx\) is maximal in \(\sem{\lphi[0]}\).
    By assumption, there exists a finite cover \(\sC\) of \(\sem{\lphi[0]} \sMinus \fClust{\vx}\).
	Since \(\lthProp\) is finite and \(\frmM\) is a weakly transitive differentiated model, \(\fClust{\vx}\) is finite.
    Thus, by differentiation, there exists \(\lphi[\vx]\) such that \(\fClust{\vx} \sIntersection \sem{\lphi[\vx]} = \Set{\vx}\) and for each \(\vz \in \sC\), there exists \(\lphi[\vz]\) such that \(\fClust{\vx} \sSubsetEq \sem{\lphi[\vz]}\) but \(\vz \notin \sem{\lphi[\vz]}\).
	Consider the definable set
	\begin{equation*}
		\sY \coloneqq \sem{\lphi[0]} \sIntersection \sem{\lphi[\vx]} \sIntersection \sSetIntersection\Set{ \sem{\lBox[refl closure]{\lphi[0] \limplies \lphi[\vz]}} }[\vz \in \sC].
	\end{equation*}
We claim that \(\Set{\vx} =\sY\).

    For the left-to-right inclusion, note that \(\vx \in \sem{\lphi[0]} \sIntersection \sem{\lphi[\vx]}\).
    Moreover, since \(\vx\) is maximal in \(\sem{\lphi[0]}\), for  \(\vz \in \sC\) and \(\vx['] \in \relR[refl trans closure]{\vx}\) with \(\vx['] \in \sem{\lphi[0]}\), we get \(\vx['] \in \fClust{\vx} \sSubsetEq \sem{\lphi[\vz]}\).
    Hence \(\vx \in \sY\).

    For the converse, suppose that \(\vx['] \in \sY \sMinus \Set{\vx}\).
	Then \(\vx['] \in \sem{\lphi[0]} \sMinus \fClust{\vx}\)  and \(\vx[']\) is covered by some \(\vz \in \sC\).
	By weak transitivity, \(\vz \in \relR[refl trans closure]{\vx[']}\).
	We know by assumption that \(\vx['] \in \sem{\lBox[refl closure]{\lphi[0] \limplies \lphi[\vz]}}\).
	Hence, \(\vz \in \sem{\lphi[0] \limplies \lphi[\vz]}\) and \(\vz \in \sC \sSubsetEq \sem{\lphi[0]}\), but, by definition of \(\lphi[\vz]\), \(\vz \notin \sem{\lphi[\vz]}\), contradiction.
\end{proof}

We conclude that the model \(\frmM[tilde, \odalpha]\) resulting from the selection construction is atomic.

\begin{proposition}\zlabel{thm:finite-width-thm:selection-result-atomic}
	Suppose that \(\lthProp\) is finite, \(\lthSigma = \logForm\), and in \zcref{constr:selective-filtration} \(\frmM\) is of finite width.
	Then the construction terminates, say at stage \(\odalpha\), and \(\frmM[\odalpha, tilde]\) is atomic, has the finite cover property, and validates exactly the same formulas as \(\frmM\).
\end{proposition}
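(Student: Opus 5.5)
Termination and the finite cover property of \(\frmM[tilde,\odalpha]\) come directly from \zcref{cor:finite-width-termination}: \(\lthProp\) is finite and \(\frmM\) has finite width, so the construction terminates at some stage \(\odalpha\) and \(\frmM[tilde,\odalpha]\) has the finite cover property. At termination \(\sX[\odalpha]=\sU[\odalpha]\), so \(\fq[\odalpha]{\sU[\odalpha]}\) is the whole domain of \(\frmM[tilde,\odalpha]\). By \zcref{lemma:selective-filtration:head-no-eliminable-points}, \(\frmM[tilde,\odalpha]\) is then free of \(\logForm\)-eliminable points. By \zsubref{lemma:selective-filtration:basic-props}{lemma:head-is-union-tops}, it is collapsed relative to its whole domain. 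Relative to the full set, bisimilarity is ordinary bisimilarity, so \(\frmM[tilde,\odalpha]\) is bisimulation-collapsed.

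Next I show differentiation and then atomicity. Every point generated subframe of a model with the finite cover property again has it, because a subset of the generated subframe is covered by a finite subset of itself already in the whole frame. Thus \zcref{lemma:finite-width-thm:Hennessy-Milner-like} applies, and \(\relEquiv[ty=\logForm]\) coincides with bisimilarity, which is the diagonal. Hence \(\frmM[tilde,\odalpha]\) is differentiated. The model is weakly transitive, being a p-morphic image of a submodel of a weakly transitive model; the p-morphism only needs to be checked against the \(\relR[refl trans closure]\)-reflexive-closure condition, and this is routine. Now \zcref{lemma:finite-width-thm:atomicity} gives atomicity.

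Finally I show that \(\frmM[tilde,\odalpha]\) and \(\frmM\) validate the same formulas; this is the step I expect to need care. Since \(\logForm\) is infinite, \(\lthSigma\)-subsets are not automatically definable, so I avoid any definability argument and only transfer truth. By \zcref{lemma:selection-construction-sigma-subset}, \(\sX[\odalpha]\) is a \(\logForm\)-subset of \(\frmM\), and \(\fq[\odalpha]\) is a p-morphism from the restriction onto \(\frmM[tilde,\odalpha]\). So \(\fq[\odalpha]\) is a \(\logForm\)-sub-p-morphism, and \zcref{prop:Sigma-sub-p-morphism-preserves-truth} gives \(\fTy[\logForm]{\frmM,\vx}=\fTy[\logForm]{\frmM[tilde,\odalpha],\fq[\odalpha]{\vx}}\) for every \(\vx\in\sX[\odalpha]\). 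Note that this proposition does not use finiteness of \(\lthSigma\), only subformula closure, which \(\logForm\) has.

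Both inclusions now follow, reading ``validates the same formulas'' as formulas over \(\lthProp\):
\begin{itemize}
\item If \(\frmM\) validates \(\lphi\), then every point of \(\frmM[tilde,\odalpha]\) is some \(\fq[\odalpha]{\vx}\) and therefore satisfies \(\lphi\).
\item If \(\frmM\) refutes \(\lphi\) at some \(\vx\in\sX\), I use that \(\sX[\odalpha]\) is cofinal (it is a \(\logForm\)-subset) together with \zcref{lemma:main-property-Sigma-submodel}. Consider the formula \(\lnot\lphi\) at \(\vx\). If \(\vx\in\sX[\odalpha]\), we are done directly. Otherwise \(\vx\) is eliminable with respect to \(\sX[\odalpha]\), so some \(\vy\in\fSpan[\sX[\odalpha]]{\vx}\) satisfies \(\lnot\lphi\). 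Then \(\fq[\odalpha]{\vy}\) refutes \(\lphi\), so \(\frmM[tilde,\odalpha]\) does not validate \(\lphi\).
\end{itemize}
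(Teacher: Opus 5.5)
Your proposal is correct and follows essentially the paper's own route: termination and the finite cover property come from \zcref{cor:finite-width-termination}, and freedom from \(\logForm\)-eliminable points from \zcref{lemma:selective-filtration:head-no-eliminable-points}; differentiation comes from collapse together with \zcref{lemma:finite-width-thm:Hennessy-Milner-like}, and atomicity from \zcref{lemma:finite-width-thm:atomicity}. The paper leaves the preservation of validity implicit, and it follows as you argue from \zcref{lemma:selection-construction-sigma-subset}, \zcref{prop:Sigma-sub-p-morphism-preserves-truth}, and the eliminability of points outside \(\sX[\odalpha]\).
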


In particular, any logic of finite width, is complete w.r.t.\@ such conversely pre-well-founded and atomic models.
It now suffices to prove persistence for these models.
We first derive an auxiliary lemma.

\begin{lemma}\zlabel{lemma:finite-width-thm:atomic-finite-restricted-sub-p-morphism}
    Let \(\lthSigma\) be finite and let \(\frmM\) and \(\frmM[tilde]\) be models such that \(\frmM\) has the finite cover property and \(\frmM[tilde]\) is finite, and let \(\ff\) be \(\lthSigma\)-sub-p-morphism from \(\frmM\) to \(\frmM[tilde]\).
    Then there exists a restriction \(\fg\) of \(\ff\) with finite domain which is still a \(\lthSigma\)-sub-p-morphism from \(\frmM\) to \(\frmM[tilde]\).
\end{lemma}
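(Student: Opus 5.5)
The plan is to build the finite domain \(\sY \sSubsetEq \sX['] \coloneqq \ff[dom]\) from finite covers supplied by the finite cover property, then patch cluster-internal defects. \(\frmM[tilde]\) is finite, so for each \(\vx[tilde] \in \frmM[tilde]\) the finite cover property yields a finite \(\sC[\vx[tilde]] \sSubsetEq \ff[preimage]{\vx[tilde]}\) covering \(\ff[preimage]{\vx[tilde]}\). Put \(\sY[0] \coloneqq \sSetUnion\Set{\sC[\vx[tilde]]}[\vx[tilde] \in \frmM[tilde]]\). By \zcref{prop:Sigma-sub-p-morphism-preserves-truth}, points of \(\sX[']\) with the same \(\ff\)-image have the same \(\lthSigma\)-type. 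So every \(\vy \in \sX[']\) is covered by some \(\vc \in \sY[0]\) with \(\fTy[\lthSigma]{\vc} = \fTy[\lthSigma]{\vy}\) and \(\vc \in \relR[refl trans closure]{\vy}\).

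Everything reduces to the following \enquote{weak transitivity dichotomy}. If \(\vx \relR \vy\) and \(\vc \in \relR[refl trans closure]{\vy}\), then \(\vc \in \relR{\vx}\) or \(\vc = \vx\). This is how I would verify the three requirements.

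\emph{Back condition for \(\ff\) restricted to \(\sY\).} Given \(\vy \in \sY\) and \(\vb[tilde] \in \relR[tilde]{\ff{\vy}}\), pick \(\vz \in \sX['] \sIntersection \relR{\vy}\) with \(\ff{\vz} = \vb[tilde]\). Then take its cover \(\vc \in \sC[\vb[tilde]]\). By the dichotomy, \(\vc \in \relR{\vy}\), which settles this case, unless \(\vc = \vy\).

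\emph{Eliminability of points outside \(\sY\).} For \(\vx \in \sX[']\sMinus\sY\), the cover of \(\vx\) itself in \(\sC[\ff{\vx}]\) is a strict-or-cluster successor of the same type, since it cannot equal \(\vx\). For \(\vx \notin \sX[']\), take the witnesses from the definition of \(\ElimCrit[m=\frmM, u=\sX[']]\) and push them up to their same-type covers. The dichotomy puts those covers in \(\fSpan[\sY]{\vx}\), since \(\vx \notin \sY\).

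\emph{Cluster condition for \(\sY\).} For \(\vx \in \sY\), use the cluster condition of \(\sX[']\) to obtain a witness \(\vy['] \in \fSpan[\sX[']]{\vx}\). Then replace it by its cover, again with the exceptional case that the cover equals \(\vx\).

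So the only failures are the degenerate cases where the chosen cover is the point \(\vx\) itself. In these cases \(\vx\) lies in a proper cluster together with the original witness \(\vz\), with \(\vx \relR \vz \relR \vx\). I expect this to be the main obstacle. My plan is to repair it by closing \(\sY[0]\) under \enquote{cluster witnesses}. For each \(\vx \in \sY[0]\) and each of the finitely many pairs \((\vb[tilde],\lphi)\) with \(\vb[tilde] \in \frmM[tilde]\) and \(\lphi \in \lthSigma\) for which the cover step failed, add the original witness \(\vz\), which lies in the cluster of \(\vx\). Then \(\vz \in \relR{\vx}\) serves as the required successor.

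The newly added points \(\vz\) need their own witnesses. However, a successor of \(\ff{\vz}\) is, up to the same dichotomy, either realized by a cover in \(\relR{\vz}\) or by a point in the common cluster. Choosing, for every cluster meeting \(\sY\) and every value in \(\frmM[tilde]\), at most two representatives inside that cluster should make the process stabilize after one round. Two representatives suffice because a point never needs to be its own witness when another same-valued point is available. Because \(\sY[0]\) and \(\frmM[tilde]\) are finite, the result \(\sY\) is finite.

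Finally, \(\ff\) restricted to \(\sY\) is surjective because each \(\sC[\vx[tilde]]\) is nonempty. It is then a p-morphism, with forth inherited and back as above. Together with the eliminability and cluster checks, \(\sY\) is a \(\lthSigma\)-subset, so \(\fg \coloneqq \ff\) restricted to \(\sY\) is the required \(\lthSigma\)-sub-p-morphism.
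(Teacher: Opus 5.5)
Your proposal is correct and is essentially the paper's proof. Both take finite covers $C_{\tilde x}$ of the fibres $f^{-1}(\tilde x)$, use the weak-transitivity dichotomy to check the back, eliminability and cluster conditions, and add cluster-mates to repair the case where the chosen cover is the point itself. The paper instead puts same-fibre cluster-mates into the covers up front and adds cluster witnesses for every cover point, not only for the failing ones.

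Your stabilisation step is simpler than the two-representatives heuristic suggests. All covers lie in $Y_0$, so for an added point $z \notin Y_0$ the dichotomy can never return $z$ itself. Hence such points satisfy the back and cluster conditions with no further additions.
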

\begin{proof}
    Let \(\vx[tilde]\) be a point of \(\frmM[tilde]\).
    By the finite cover property, there exists a finite cover \(\sC[\vx[tilde]]\) of \(\ff[preimage]{\vx[tilde]}\). Without loss of generality, we may assume that for each \(\vx \in \sC[\vx[tilde]]\), if \((\relR{\vx} \sIntersection \relR[converse]{\vx})\sIntersection\ff[preimage]{\vx[tilde]} \neq \sEmpty\), then \((\relR{\vx} \sIntersection \relR[converse]{\vx})\sIntersection\sC[\vx[tilde]] \neq \sEmpty\), that is, if for \(\vx \in \sC[\vx[tilde]]\) the set \(\ff[preimage]{\vx[tilde]}\) contains a point in the cluster of \(\vx\), then also \(\sC[\vx[tilde]]\) contains a point in the cluster of $\vx$.

    For every \(\lphi \in \lthSigma\) and \(\vx \in \sC[\vx[tilde]]\), if there exists \(\vy \in \relR{\vx} \sIntersection \relR[converse]{\vx}\) with \(\vx \in \sem{\lphi} \iff \vy \in \sem{\lphi}\), then there exists \(\vy['] \in \fSpan[\ff[dom]]{\vx}\) with \(\vx \in \sem{\lphi} \iff \vy['] \in \sem{\lphi}\).
    Whenever this is the case, pick some such \(\vy[']\) and add it to \(\sC[\vx[tilde]]\) to obtain \(\sC[',\vx[tilde]]\) which is finite, since \(\lthSigma\) is finite.
    Define \(\fg\) to be the restriction of \(\ff\) to the union of the finitely many \(\sC[',\vx[tilde]]\).

    It is easy to check that \(\fg[dom]\) is a finite \(\lthSigma\)-subset of \(\frmM\),
and monotonicity of \(\fg\) is inherited from \(\ff\).
    We check the back-condition for \(\fg\).
    Let \(\vx \in \fg[dom]\), \(\vx[tilde] \coloneqq \fg{\vx}\) and \(\vy[tilde]\) a successor of \(\vx[tilde]\).
    By the back-condition of \(\ff\), there exists \(\vy \in \relR{\vx} \sIntersection \ff[preimage]{\vy[tilde]}\).
    Since \(\sC[\vy[tilde]]\) covers \(\ff[preimage]{\vy[tilde]}\), there is \(\vz \in \relR[ refl trans closure]{\vy} \sIntersection \sC[\vy[tilde]]\).
    If \(\vz \in \relR{\vx}\), then we are done.
    Otherwise \(\vz \notin \relR{\vx}\), and,
    by weak transitivity, \(\vx = \vz\), so \(\vx[tilde] = \vy[tilde]\) and  \(\vy \in (\relR{\vx} \sIntersection \relR[converse]{\vx})\sIntersection\ff[preimage]{\vx[tilde]}\).
    Hence, there exists $\vy['] \in (\relR{\vx} \sIntersection \relR[converse]{\vx})\sIntersection \sC[\vx[tilde]]$.
\end{proof}

\begin{proposition}\zlabel{thm:finite-width-thm:persistence}
	Let \(\logLambda\) be an extension of \(\logKWeakTrans\) and \(\frmM\) an atomic \(\logLambda\)-model on finitely many propositional variables \(\lthProp\) with the finite cover property.
	Then the frame \(\frF = \structuple{\sX, \relR}\) underlying \(\frmM\) is a \(\logLambda\)-frame.
\end{proposition}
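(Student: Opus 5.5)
We argue by contraposition, using the standard persistence strategy for atomic models: if \(\frF\) refutes some \(\lpsi \in \logLambda\), we produce a \emph{definable variant} of \(\frmM\) that refutes \(\lpsi\); such a variant is again a \(\logLambda\)-model, which is a contradiction. So suppose there are a valuation \(\fVal[']\) on \(\frF\) and a point \(\vx[0]\) with \(\structuple{\frF, \fVal['], \vx[0]} \lNotValidates \lpsi\). Write \(\frmM['] \coloneqq \structuple{\sX, \relR, \fVal[']}\), and let \(\lthSigma\) be the (finite) set of subformulas of \(\lpsi\). The sets \(\fVal[']{\lap}\) need not be definable in \(\frmM\), so we will not use \(\frmM[']\) directly. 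Instead we pass to a finite object, which atomicity then lets us describe by formulas of \(\frmM\).

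\textbf{Step 1: finite \(\lthSigma\)-sub-p-morphism.} Apply \zcref{mthm:selection-again} to \(\frmM[']\) and \(\lthSigma\). This yields a \(\lthSigma\)-sub-p-morphism \(\ff\) from \(\frmM[']\) onto a finite model \(\frmN\). Its domain is a \(\lthSigma\)-subset of \(\frmM[']\), hence cofinal, so some \(\vx[1] \in \ff[dom]\) satisfies \(\vx[1] \in \relR[refl trans closure]{\vx[0]}\). Since \(\lpsi\) fails at \(\vx[0]\), there is some point of \(\ff[dom]\) at which \(\lpsi\) also fails; we argue this via \zcref{lemma:main-property-Sigma-submodel}, and for the rest of the proof we only need that \(\lpsi\) fails somewhere in \(\frmN\). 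By \zcref{prop:Sigma-sub-p-morphism-preserves-truth}, \(\frmN\) refutes \(\lpsi\). We do not need \(\ff\) itself to be definable in \(\frmM\).

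\textbf{Step 2: shrink to a finite domain.} The frame \(\frF\) has the finite cover property, so \zcref{lemma:finite-width-thm:atomic-finite-restricted-sub-p-morphism} gives a restriction \(\fg\) of \(\ff\) with finite domain \(\sY\) that is still a \(\lthSigma\)-sub-p-morphism from \(\frmM[']\) onto \(\frmN\). By \zcref{prop:Sigma-sub-p-morphism-preserves-truth}, every \(\vx \in \sY\) has the same \(\lthSigma\)-type in \(\frmM[']\) as \(\fg{\vx}\) has in \(\frmN\).

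\textbf{Step 3: transfer the valuation to a definable variant of \(\frmM\).} Define a model \(\frmM['']\) on \(\frF\) by
\[
\fVal['']{\lap} \coloneqq \Set{\vx \in \sY}[\fg{\vx} \in \sem{\lap}[\frmN]].
\]
Each \(\fVal['']{\lap}\) is a finite subset of \(\sY\). Because \(\frmM\) is atomic, every finite set is definable in \(\frmM\), so \(\frmM['']\) is a definable variant of \(\frmM\) and is therefore a \(\logLambda\)-model. It remains to show that \(\frmM['']\) refutes \(\lpsi\). For this we check that \(\sY\) is still a \(\lthSigma\)-subset of \(\frmM['']\), that is, both the eliminability condition and the cluster condition hold for \(\sY\) relative to \(\frmM['']\). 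Then \(\fg\) is a \(\lthSigma\)-sub-p-morphism from \(\frmM['']\) onto \(\frmN\), and \zcref{prop:Sigma-sub-p-morphism-preserves-truth} transfers the failure of \(\lpsi\) from \(\frmN\) to \(\frmM['']\).

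\textbf{Main obstacle.} The difficulty lies in the points outside \(\sY\). Under \(\frmM['']\) every atom is false at such points, so their \(\lthSigma\)-types change, and we must show these points are still \(\lthSigma\)-eliminable with respect to \(\sY\). The approach is an induction on \(\lphi \in \lthSigma\) proving that for \(\vx \notin \sY\), the truth value of \(\lphi\) at \(\vx\) agrees with its value at some \(\vy \in \fSpan[\sY]{\vx}\); this mirrors the original eliminability of \(\vx\) in \(\frmM[']\). If this direct induction fails, the fallback is to replace \(\fVal['']\) by one that copies each point's \(\lthSigma\)-type from a chosen representative in \(\sY\) (one that covers it). Definability of such a valuation then requires the definable classes supplied by atomicity together with the finite cover property: we would take the finitely many \(\relEquiv[span=\sY]\)-classes, which are definable by \zcref{lemma:span-equiv-definable}, and assign to each class the type of a representative. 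The claim to verify is that the \(\lthSigma\)-type of a point outside \(\sY\) is determined by its \(\sY\)-span. This holds for eliminable points by \zcref{lemma:main-property-Sigma-submodel}, and it is the key computation we expect to carry the proof.
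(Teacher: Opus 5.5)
Steps 1 and 2 match the paper. The gap is in Step 3, where your main line fails. It fails not only at the induction: making every atom false outside $Y$ also changes truth values at points of $Y$, because a point of $Y$ may see points outside $Y$ in its own cluster. Let $\mathfrak{F}$ be a cluster of two reflexive points $b_1,b_2$, let $p$ hold at both in $\mathfrak{M}'$, and let $\psi=\ldiamond\neg p$. Then $Y=\{b_1\}$ is a finite $\Sigma$-subset of $\mathfrak{M}'$, and a possible output of Step 2. Your valuation gives $V''(p)=\{b_1\}$, and $\ldiamond\neg p$ holds at both points of $\mathfrak{M}''$. So $\mathfrak{M}''$ does not refute $\psi$, and $b_2$ is not eliminable w.r.t.\ $Y$ in $\mathfrak{M}''$, already for the formula $p$.

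Your fallback has the right shape; it is essentially the paper's route. The paper uses a map $h\colon X\to Y$ that is the identity on $Y$ and constant on the $\equiv_{\mathrm{sp}_Y}$-classes of $X\setminus Y$, and sets $V''(p)\coloneqq h^{-1}(V'(p)\cap Y)$, which is definable by atomicity and \zcref{lemma:span-equiv-definable}. However, the key claim you intend to verify is false: the $\Sigma$-type of a point outside $Y$ is not determined by its $Y$-span. Take $Y=\{y_1,y_2\}$, a cluster of reflexive points with $p$ true only at $y_1$, and irreflexive points $x_1,x_2$ each seeing exactly $y_1,y_2$, with $p$ true only at $x_1$. For $\Sigma$ the formulas of modal depth at most $1$ in $p$, $Y$ is a $\Sigma$-subset and $x_1\equiv_{\mathrm{sp}_Y}x_2$, yet their types differ. \zcref{lemma:main-property-Sigma-submodel} only shows that the $\ldiamond$-formulas of $\Sigma$ true at $x\notin Y$ are determined by $\mathrm{sp}_Y(x)$.

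Two ideas are missing.

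\emph{First}, you must give up preserving the old types outside $Y$. Let $h(x)$, depending only on $\mathrm{sp}_Y(x)$, be a fixed (e.g.\ least in a fixed order of $Y$) $y\in\mathrm{sp}_Y(x)$ such that for every $\ldiamond\chi\in\Sigma$, $x$ has a $Y$-successor satisfying $\chi$ iff $y$ has one. Then prove by induction on $\chi\in\Sigma$ that $x$ satisfies $\chi$ in $\mathfrak{M}''$ iff $h(x)$ satisfies $\chi$ in the restriction of $\mathfrak{M}'$ to $Y$. The $\ldiamond$-step needs weak transitivity for the case $x'\in R(x)$ but $h(x')\notin R(x)$. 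This forces $h(x')=x\in Y$ and $x\in R(x')$, and then the defining property of $h(x')$ supplies a $Y$-successor of $x$ satisfying $\chi$.

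\emph{Second}, such a representative is only guaranteed to exist if $\Sigma$ is closed under Boolean combinations up to equivalence, as in the paper's choice of all formulas of modal depth at most $n$. Then eliminability gives some $y\in\mathrm{sp}_Y(x)$ with the same $\Sigma$-type as $x$. With $\Sigma$ merely the subformulas of $\psi$, this can fail. Take $\psi=\ldiamond p\to\ldiamond q$, and let an irreflexive $x$ see four irreflexive points $y_1,y_2,y_1',y_2'$, with $y_1\to y_1'$, $y_2\to y_2'$, $p$ true only at $y_1'$ and $q$ true only at $y_2'$. Then $Y=\{y_1,y_2,y_1',y_2'\}$ is a $\Sigma$-subset (indeed the one the construction produces). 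But although $x$ sees both a $p$-point and a $q$-point of $Y$, no $y\in\mathrm{sp}_Y(x)$ does.
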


\begin{proof}
    Let \(\frmN\) be a model on \(\frF\) and let \(\lthSigma[\nn]\) be the set of formulas of modal depth at most \(\nn\) up to logical equivalence. It is enough to show that for each \(\nn \in \Naturals\), \(\frmN\) satisfies all the formulas in \(\lthSigma[\nn] \sIntersection \logLambda\).
    So let \(\nn \in \Naturals\), and write \(\lthSigma \coloneqq \lthSigma[\nn]\).
	By \zcref{mthm:selection-again} and \zcref{lemma:finite-width-thm:atomic-finite-restricted-sub-p-morphism}, there exists a finite \(\lthSigma\)-subset \(\sY\) of \(\frmN\).
	Write \(\frmN[restrict=\sY]\) for the restriction of \(\frmN\) to \(\sY\).

	Let \(\vx \in \sX\).
	Since \(\sY\) is a \(\lthSigma\)-subset of \(\frmN\) and \(\lthSigma\) is Boolean closed (up to equivalence), there exists \(\vy \in \relR[refl trans closure]{\vx} \sIntersection \sY\) with \(\fTy[\lthSigma]{\frmN, \vx} = \fTy[\lthSigma]{\frmN, \vy}\).
	Then by \zcref{lemma:main-property-Sigma-submodel}
	\begin{equation}\label{eq:finite-width-thm:persistence:h-x-defining-property}
		\ForAll{\lDiamond\lpsi \in \lthSigma}{
		    \vx \in \relR[converse]{\sY \sIntersection \sem{\lpsi}[{\frmN}]}
			\miff
			\vy \in \relR[converse]{\sY \sIntersection \sem{\lpsi}[{\frmN}]}
}.
	\end{equation}
	Fix a total order \(\LessEq\) on \(\sY\).
    Define the map \(\Function{\fh}{\sX}{\sY}\) such that \(\fh{\vx}\) is the \(\LessEq\)-minimal \(\vy \in  \relR{\vx} \sIntersection \sY\) for which \zcref{eq:finite-width-thm:persistence:h-x-defining-property} holds, if \(\vx \notin \sY\), and \(\fh{\vx} \coloneqq \vx\) otherwise.

	Since \(\sY\) is finite and \(\frmM\) is atomic, by \zcref{lemma:span-equiv-definable}, \(\relEquiv[span=\sY]\) is definable in \(\frmM\), and moreover has finitely many equivalence classes.
	Note that \(\fh[preimage]{\vy}\) is a union of equivalence classes of \(\relEquiv[span=\sY]\) restricted to \(\sX\sMinus\sY\) and \(\Set{\vy}\), hence definable in \(\frmM\).

	Let \(\fVal\) be the valuation of \(\frmN\), and define a new valuation \(\fVal[']\) on \(\frF\) by \(\fVal[']{\lap} \coloneqq \fh[preimage]{\fVal{\lap}\sIntersection \sY}\).
	Write \(\frmM[']\) for the resulting model.
	Since \(\sY\) is finite, \(\fVal[']{\lap}\) is definable in \(\frmM\), so \(\frmM[']\) is a definable variant of \(\frmM\), hence a \(\logLambda\)-model.
	We show that \(\frmM[']\) satisfies the same \(\lthSigma\)-formulas as \(\frmN\).

	Note that by definition \(\frmM[{', restrict=\sY}] = \frmN[restrict=\sY]\).
	By \zcref{prop:Sigma-sub-p-morphism-preserves-truth}, \(\sY \sIntersection \sem{\lpsi}[{\frmN}] = \sem{\lpsi}[{\frmN[{restrict=\sY}]}]\).
    Hence, by definition of \(\fh\), we get for all \(\vx \in \sX\), \(\fh{\vx} \in \relR[refl trans closure]{\vx}\) and
\begin{equation}\label{eq:h-property}
	 	\ForAll{\lDiamond\lpsi \in \lthSigma}{
	 	    \vx \in \relR[converse]{\sem{\lpsi}[{\frmM[{', restrict=\sY}]}]}
	 		\miff
	 		\fh{\vx} \in \relR[converse]{\sem{\lpsi}[{\frmM[{', restrict=\sY}]}]}
}.
	  \end{equation}

	We prove by induction on formulas \(\lpsi \in \lthSigma\), that \(\vx \in \sem{\lpsi}[\frmM[']]\) iff \(\fh{\vx} \in \sem{\lpsi}[\frmM[{', restrict=\sY}]]\), for all \(\vx \in \sX\).
	For atomic propositions this is trivial, and the steps for meet and negation are standard.
	For the diamond step, let \(\lDiamond\lpsi \in \lthSigma\) and \(\vx \in \sX\).
Suppose \(\vx \in \sem{\lDiamond\lpsi}[\frmM[']]\) and find a successor \(\vx['] \in \sem{\lpsi}[\frmM[']]\).
	By the induction hypothesis \(\fh{\vx[']} \in \sem{\lpsi}[\frmM[{', restrict=\sY}]]\).
	If \(\fh{\vx[']} \in \relR{\vx}\) then, by \zcref{eq:h-property}, we get \(\fh{\vx} \in \sem{\lDiamond\lpsi}[{\frmM[{', restrict=\sY}]}]\) and we are done.
	So suppose \(\fh{\vx[']} \notin \relR{\vx}\).
    Then, since \(\vx['] \in \relR{\vx}\) and \(\fh{\vx[']} \in \relR[refl trans closure]{\vx[']}\), by weak transitivity,  it follows that \(\vx = \fh{\vx[']}\), yielding  \(\vx \neq \vx[']\) and \(\vx \in \relR{\vx[']}\).
	Moreover, \(\vx \in \sY\) and hence \(\fh{\vx} = \vx\).
	Now \(\vx['] \in \relR[converse]{\sem{\lpsi}[{\frmM[{', restrict=\sY}]}]}\), so, by \zcref{eq:h-property}, we get \(\vx = \fh{\vx[']} \in \relR[converse]{\sem{\lpsi}[{\frmM[{', restrict=\sY}]}]}\), yielding \(\vx \in \sem{\lDiamond\lpsi}[{\frmM[{', restrict=\sY}]}]\).

	Conversely, suppose \(\fh{\vx} \in \sem{\lDiamond\lpsi}[\frmM[{', restrict=\sY}]]\).
	By \zcref{eq:h-property}, there exists \(\vy \in \relR{\vx} \sIntersection \sem{\lpsi}[{\frmM[{', restrict=\sY}]}]\).
	But, since \(\vy \in \sY\) we have \(\fh{\vy} = \vy\). Hence, by the induction hypothesis, \(\vy \in \sem{\lpsi}[\frmM[']]\), yielding \(\vx \in \sem{\lDiamond\lpsi}[\frmM[']]\).

	Now suppose \(\frmN\) refutes \(\lphi \in \lthSigma[\nn]\).
	Since \(\sY\) is a \(\lthSigma[\nn]\)-subset in \(\frmN\), there is \(\vy \in \sY\) such that \(\vy \in \sem{\lnot\lphi}[\frmN]\).
	By \zcref{prop:Sigma-sub-p-morphism-preserves-truth}, we have \(\vy \in \sem{\lnot\lphi}[\frmN[{restrict=\sY}]] = \sem{\lnot\lphi}[\frmM[{', restrict=\sY}]]\), and since \(\fh{\vy} = \vy\) it follows, by the inductive proof above, that \(\vy \in \sem{\lnot\lphi}[\frmM[']]\).
	Since \(\frmM[']\) is a \(\logLambda\)-model, we conclude \(\lphi \notin \logLambda\).
\end{proof}

Combining \zcref{thm:finite-width-thm:persistence,thm:finite-width-thm:selection-result-atomic}, we obtain the Finite Width Theorem for \(\logKWeakTrans\):

\begin{maintheoremagain}\zlabel{mthm:finite-width-again}
	Every logic of finite width extending \(\logKWeakTrans\) is complete w.r.t.\@ its conversely pre-well-founded frames, and is therefore, in particular, Kripke complete.
\end{maintheoremagain}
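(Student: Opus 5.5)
The proof will combine \zcref{thm:finite-width-thm:selection-result-atomic} with \zcref{thm:finite-width-thm:persistence}. Let \(\logLambda\) be a logic of finite width extending \(\logKWeakTrans\) and \(\lphi \notin \logLambda\). By the definition of finite width, there is a \(\logLambda\)-model \(\frmM\) of finite width refuting \(\lphi\); it is weakly transitive since \(\logKWeakTrans \sSubsetEq \logLambda\). First we restrict \(\frmM\) to the finitely many atomic propositions \(\lthProp\) occurring in \(\lphi\), by forgetting the valuation of all other variables. The result is still a \(\logLambda\)-model, since validity of a logic is preserved under substitution instances and hence under forgetting variables. It still has finite width and still refutes \(\lphi\).

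Next we apply \zcref{constr:selective-filtration} with \(\lthSigma \coloneqq \logForm\) over \(\lthProp\). By \zcref{thm:finite-width-thm:selection-result-atomic}, the construction terminates at some stage \(\odalpha\). The model \(\frmM[tilde, \odalpha]\) is atomic, has the finite cover property, and validates exactly the same formulas as \(\frmM\). Hence \(\frmM[tilde, \odalpha]\) is a \(\logLambda\)-model refuting \(\lphi\). To be careful about the phrase ``validates'': \(\logLambda\) is closed under substitution, and every substitution instance of a member of \(\logLambda\) that lies in \(\logForm\) is valid in \(\frmM\), hence also in \(\frmM[tilde, \odalpha]\). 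Alternatively, one can argue via \zcref{lemma:selection-construction-sigma-subset} and \zcref{prop:Sigma-sub-p-morphism-preserves-truth} that types are preserved, which gives both \(\logLambda\)-validity and refutation of \(\lphi\) at the image of a refuting point.

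Then \zcref{thm:finite-width-thm:persistence} applies to \(\frmM[tilde, \odalpha]\): it is an atomic \(\logLambda\)-model on finitely many variables with the finite cover property. So its underlying frame \(\frF[tilde]\) is a \(\logLambda\)-frame, and it refutes \(\lphi\) under the valuation of \(\frmM[tilde, \odalpha]\). Finally, the finite cover property implies converse pre-well-foundedness, as recalled in the preliminaries. Thus \(\frF[tilde]\) is a conversely pre-well-founded \(\logLambda\)-frame refuting \(\lphi\), and this gives completeness, hence Kripke completeness.

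The only delicate point I expect is the first step: making sure that the finite-width \(\logLambda\)-countermodel may be taken over finitely many variables, and that ``validates exactly the same formulas'' really transfers membership in \(\logLambda\) rather than only single formulas. I would handle this through the substitution argument above. The rest is a direct assembly of earlier results.
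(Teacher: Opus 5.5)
Your proposal is correct and follows the paper's argument: the paper obtains the theorem by combining \zcref{thm:finite-width-thm:selection-result-atomic} with \zcref{thm:finite-width-thm:persistence}. Your reduction to the finitely many variables of \(\lphi\), the substitution remark, and the step from the finite cover property to converse pre-well-foundedness only make explicit what the paper leaves implicit. One small slip: weak transitivity of the countermodel does not follow from \(\logKWeakTrans \sSubsetEq \logLambda\), since a model can validate every substitution instance of \((\lap \land \lBox\lap) \limplies \lBox\lBox\lap\) without its frame being weakly transitive (e.g.\ when all its points are bisimilar); it comes instead from the paper's standing convention that all models considered, including the finite-width models in the definition of a logic of finite width, are weakly transitive.
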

 
    \section{Concluding Remarks}\zlabel{sec:conclusion}
    We have generalized Fine's Iterative Selection Method to the weakly transitive setting and to infinite filtration sets \(\lthSigma\), and have shown that using the set of all formulas on finitely many atomic propositions as the filtration set provides an alternative proof of Fine's Finite Width Theorem.
A natural next step is to look for further applications of our construction.
Our construction might help to drop canonicity assumptions in existing selection constructions on canonical models of subframe logics, e.g., \cite{BaltagBezhanishviliFernandez2023-topological-mu-calclulus-completeness-and-decidability,KudinovShapirovsky2025-two-types-of-filtrations-for-wK4-and-relatives}.
On the other hand, infinite filtration sets might be useful for non-subframe logics.
Another direction is to try to generalize it beyond weakly transitive logics, for example, to \(\nn\)-transitive logics.
A limitative result for the latter was given by Wolter~\cite[Section~4.3]{Wolter1993-lattices-of-modal-logics} who constructed examples of subframe logics extending \(\logK\) that lack the \fmp{}.

Zakharyaschev's theory of canonical formulas forms a major application of Fine's Selection Construction, and could, using our result, be generalized to \(\logKWeakTrans\).
This theory has already been generalized using algebraic methods \cite{BezhanishviliBezhanishvili2012-canonical-formulas-wK4}, but the frame theoretic perspective might provide an easier path forward for handling the quasi-normal case, cf.~\cite[Section~7]{Zakharyaschev1996-canonical-formulas-for-K4-part-2}.

Even though there exists a computable bound on the size of the finite models produced by the construction, it was shown in \cite[Theorem~3.3]{Zakharyaschev1996-canonical-formulas-for-K4-part-2} that there are continuum-many subframe logics extending \(\logKTrans\), so in particular there are undecidable ones.
This is related to the partial order defined by \(\frmM \preccurlyeq \frmN\) \tdefiff{} there exists a definable sub-p-morphism from \(\frmM\) to \(\frmN\) \cite[Section~7.1]{Kracht1990-internal-definability-and-completeness-in-modal-logic}.
Similarly, we can define \(\frmM \preccurlyeq_{\lthSigma} \frmN\) \tdefiff{} there exists a definable \(\lthSigma\)-sub-p-morphism from \(\frmM\) to \(\frmN\). It is straightforward to show that starting with a model \(\frmM\) and applying our selection construction,  we obtain a \(\preccurlyeq_{\lthSigma}\)-minimal model. However, it is not clear whether it is the unique such model.

    \subsection*{Acknowledgements}

    We would like to thank the anonymous referees for their careful reading of our paper and helpful comments.

    \bibliography{references.bib}

\end{document}